\pgfplotsset{compat=1.14}
\declaretheorem[name=Theorem,numberwithin=section]{theorem}
\declaretheorem[name=Corollary,sibling=theorem]{corollary}
\declaretheorem[name=Claim,sibling=theorem]{claim}
\declaretheorem[name=Fact,sibling=theorem]{fact}
\declaretheorem[name=Lemma,sibling=theorem]{lemma}
\declaretheorem[name=Definition, sibling=theorem]{definition}
\declaretheorem[name=Remark, sibling=theorem]{remark}
\theoremstyle{definition}
\declaretheorem[name=Algorithm, sibling=theorem]{algorithm}
\declaretheorem[name=Algorithm, numbered=no]{algorithmNoNum}
\newcommand{\makeparenthesis}[3]{%
	\expandafter\renewcommand\csname #1\endcsname[1]{\mathchoice
		{{\left#2##1\right#3}}%
		{{\left#2##1\right#3}}%
		{{\left#2##1\right#3}}%
		{{\left#2##1\right#3}}%
	}%
}
\newcommand{\pars}[1]{} 
\newcommand{\brackets}[1]{} 
\newcommand{\braces}[1]{} 
\newcommand{\abs}[1]{} 
\newcommand{\ceil}[1]{} 
\newcommand{\floor}[1]{} 
\newcommand{\iprod}[1]{} 
\newcommand{\N}{\mathbb{N}}
\newcommand{\Z}{\mathbb{Z}}
\newcommand{\bool}{\braces{0,1}}
\newcommand{\Exp}{\mathbb{E}}
\newfunc{\Maj}{Maj}
\newcommand{\restrict}[1]{{|_{#1}}}
\newcommand{\F}{\mathbb{F}}
\newcommand{\calL}{\mathcal{L}}
\newcommand{\agla}{\Delta}
\newcommand{\row}{{\mathrm{row}}}
\newcommand{\column}{{\mathrm{col}}}
\newcommand{\shared}{{\mathrm{shared}}}
\newcommand{\sharedrow}{{\mathrm{shared.row}}}
\newcommand{\sharedcolumn}{{\mathrm{shared.col}}}
\newcommand{\rop}{\mathrm{ROP}}
\newcommand{\RNL}{{\mathrm{RNL}}}
\renewcommand{\outer}{\mathrm{out}} 
\newcommand{\inner}{\mathrm{in}}
\newcommand{\comp}{\mathrm{comp}}
\newcommand{\compr}[1]{\hat{#1}}
\DeclareMathOperator{\numones}{\#_1}
\newcommand{\Enc}{\mathrm{Enc}}
\newcommand{\Dec}{\mathrm{Dec}}
\newcommand{\Test}{\mathrm{Test}}
\newlang{\threeSAT}{3SAT}
\newlang{\threeLIN}{3LIN}
\DeclareMathOperator{\val}{val}
\newcommand{\etal}{\textit{et al}.\xspace}
\newcommand{\defemph}[1]{{\sf #1}}
\newcommand{\getsr}{\mathbin{\stackrel{\mbox{\tiny R}}{\gets}}}
\newcommand{\pintegers}{\N} 
\newcommand{\tdelim}{\tabularnewline \hline}
\newcommand{\bset}{S}
\newcommand{\shift}{\mathsf{shift}}
\patchcmd{\@bibitem}{\ignorespaces}{\label{bib-#1}\ignorespaces}{}{}
\newcommand{\myqcite}[2]{[\ref{bib-#2}, #1]}
\newcommand{\mytqcite}[2]{\texorpdfstring{\cite[#1]{#2}}{\myqcite{#1}{#2}}}
\renewcommand{\epsilon}{\varepsilon}
\newcommand{\maxcut}{\ComplexityFont{MAXCUT}}
\newcommand{\calC}{{\mathcal{C}}}
\title{Rigid Matrices From Rectangular PCPs\\ {\Large {\em or: Hard Claims Have Complex Proofs }}\thanks{A conference version of this paper appeared in {\em Proc. 61st FOCS}, 2020~\cite{BhangaleHPT2020}.}}
\author{
    Amey Bhangale \thanks{University of California, Riverside, CA, USA. \protect\url{ameyb@ucr.edu}.}
    \thanks{Work done while the first, second and third author were participating in the \emph{Proofs, Consensus, and Decentralizing Society} program at the Simons Institute for the Theory of Computing, Berkeley, CA, USA.}\\
	\and 
	Prahladh Harsha \thanks{Tata Institute of Fundamental Research, Mumbai, India. \protect\url{prahladh@tifr.res.in}. Research supported by the Department of Atomic Energy, Government of India, under project no. 12-R\&D-TFR-5.01-0500 and in part by the Swarnajayanti fellowship.}
	\footnotemark[2]
	\and 
	Orr Paradise \thanks{University of California, Berkeley, CA, USA. \protect\url{orrp@eecs.berkeley.edu}.}
		\footnotemark[2]
	\and 
	Avishay Tal \thanks{University of California, Berkeley, CA, USA. \protect\url{atal@berkeley.edu}.}
}
\begin{document}

\maketitle

\begin{abstract} 

We introduce a variant of PCPs, that we refer to as \emph{rectangular} PCPs, wherein proofs are thought of as square matrices, and the random coins used by the verifier can be partitioned into two disjoint sets, one determining the \emph{row} of each query and the other determining the \emph{column}.

We construct PCPs that are \emph{efficient}, \emph{short}, \emph{smooth} and (almost-)\emph{rectangular}. As a key application, we show that proofs for hard languages in $\mathsf{NTIME}(2^n)$, when viewed as matrices, are rigid infinitely often. This strengthens and simplifies a recent result of Alman and Chen [FOCS, 2019] constructing explicit rigid matrices in $\mathsf{FNP}$. Namely, we prove the following theorem:
\begin{itemize}
    \item There is a constant $\delta \in (0,1)$ such that there is an 
      $\mathsf{FNP}$-machine that, for infinitely many $N$, on input
      $1^N$ outputs $N \times N$ matrices with entries in $\mathbb{F}_2$ that are $\delta N^2$-far
      (in Hamming distance) from matrices of rank at most $2^{\log  N/\Omega(\log \log N)}$.
\end{itemize}

Our construction of rectangular PCPs starts with an analysis of how randomness yields queries in the Reed--Muller-based outer PCP of Ben-Sasson, Goldreich, Harsha, Sudan and Vadhan [SICOMP, 2006; CCC, 2005]. We then show how to preserve rectangularity under PCP composition and a smoothness-inducing transformation. This warrants refined and stronger notions of rectangularity, which we prove for the outer PCP and its transforms.

\end{abstract}

\newpage

\section{Introduction}\label{sec:intro}
\newcommand{\rg}{\Delta}

An $N \times N$ matrix with entries from a field $\F$ is said to be {\em $(\rg, \rho)$-rigid} 
if its Hamming distance from the set of $N\times N$ matrices of rank at most $\rho$ is greater than $\rg$. In other words, an $(\rg,\rho)$-rigid matrix is a matrix that cannot be expressed as a sum of two matrices, $L+S$, where $\mathrm{rank}(L)\le \rho$ and $S$ has at most $\rg$ non-zero entries.
For concreteness, this work focuses on rigidity with respect to the field $\F_2$.

Constructing rigid matrices has been a long-standing open problem in computational complexity theory since their introduction by Valiant more than four decades ago \cite{Valiant1977}. Valiant showed that for any $(N^{1+\varepsilon}, N/\log\log(N))$-rigid matrix, evaluating the corresponding linear transformation requires circuits of either super-linear size or super-logarithmic depth. Thus, an explicit construction of a such matrices gives explicit problems that cannot be solved in linear-size logarithmic-depth circuits.

Razborov~\cite{Razborov1989} (see also \cite{Wunderlich2012}) considered the other end of the spectrum of parameters, in which the distance $\rg$ is quite high but the rank $\rho$ is much smaller; namely, $\rg = \delta \cdot N^2$ for constant $\delta > 0$ and $\rho = 2^{(\log\log N)^{\omega(1)}}$. Razborov showed that strongly-explicit matrices\footnote{An $N \times N$ matrix $A$ is strongly-explicit if, given $i,j \in N$, one can compute $A_{i,j}$ in $\polylog(N)$ time.} with these rigidity parameters imply a lower bound for the communication-complexity analog of the Polynomial Hierarchy,  $\PH^{cc}$.

In other words, while Valiant's regime focuses on very high rank $\rho$, Razborov's focuses on very high distance $\rg$. Achieving lower bounds for either
$\PH^{cc}$ (via Razborov's reduction) or linear-size log-depth circuits (via Valiant's reduction) are two central long-standing open questions in complexity theory.

Despite a lot of effort, state of the art results on matrix rigidity fall short of solving both Valiant and Razborov's challenges. The current best poly-time constructions yields $(\frac{N^2}{\rho} \log(\frac{N}{\rho}), \rho)$-rigid matrices, for any parameter $\rho$ (see Friedman~\cite{Friedman1993}; Shokrollahi, Spielman, and Stemann~\cite{ShokrollahiSS1997}).
Goldreich and Tal~\cite{GoldreichT2018} gave a randomized poly-time algorithm that uses $O(N)$ random bits and produces an $N \times N$ matrix that is $(\frac{N^3}{\rho^2 \log N}, \rho)$-rigid for any parameter $\rho\ge \sqrt{N}$, with high probability.%
\footnote{Despite its ``semi-explicitness'', if this construction obtains Valiant's rigidity parameters, then lower bounds would be implied, since one can take the randomness to be part of the input, yielding a (related) explicit problem that has no linear-size logarithmic-depth circuits.}
Recent breakthrough results \cite{AlmanW2017,DvirE2019,DvirL2019} showed that several long-standing candidate construction of explicit matrices, like the Hadamard or the FFT matrices, are less rigid than previously believed, and in particular do not meet Valiant's challenge.

Most previous attempts were of combinatorial or algebraic nature (see a survey of Lokam~\cite{Lokam2009}). In contrast to these, a recent remarkable work of Alman and Chen \cite{AlmanC2019} proposes a novel approach that uses ideas from complexity theory to construct rigid matrices in $\FNP$:%
\footnote{The complexity class $\FNP$ is the function-problem extension of the decision-problem class $\NP$. Formally, a relation $R(x,y)$ is in $\FNP$ if there exists a non-deterministic polynomial-time Turing machine $M$ such that for any input $x$, $M\pars{x}$ outputs $y$ such $R(x,y)=1$ or rejects if no such $y$ exists.}

\begin{theorem}[\cite{AlmanC2019}]\label{thm:AC}There exists a constant $\delta \in (0,1)$ such that for all $\epsilon \in (0,1)$, there is an $\FNP$-machine that, for infinitely many $N$, on input $1^N$ outputs an $N \times N$ matrix that is $(\delta \cdot N^2, 2^{(\log N)^{1/4 - \epsilon}})$-rigid.%
\footnote{Their result is stated for $\FP^{\NP}$ but can be strengthened to $\FNP$. More precisely, on infinitely many inputs $1^N$ that are accepted by the $\FNP$-machine, any accepting path outputs a 
rigid matrix. Matrices obtained on different accepting paths may differ, but all of them are rigid. If one insists on outputting the same matrix on all accepting paths, then this can be done in $\FP^{\NP}$.}
\end{theorem}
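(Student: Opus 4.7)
The plan follows the Alman--Chen approach: use a hard language to define a matrix whose $\FNP$-constructibility is enabled by a PCP, and show that non-rigidity of the matrix would imply an algorithm for the hard language that is faster than the nondeterministic time hierarchy allows.

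By the nondeterministic time hierarchy theorem, fix a language $L \in \mathsf{NTIME}(2^n)$ that cannot be decided in $\mathsf{NTIME}(2^{n - \omega(\log n)})$. Set $N := 2^{n/2}$, so that $n = 2\log N$ and a pair $(i,j) \in [N]\times[N]$ encodes an $n$-bit instance of $L$. On input $1^N$ the $\FNP$ machine nondeterministically guesses a single PCP-of-proximity proof string $\pi$, together with auxiliary witnesses, and uses a PCP verifier for $L$ with randomness $r = 2\log N$, query complexity $q := (\log N)^{1/4 - \epsilon}$, and small soundness error. Partitioning each random string as $(i,j)$, we set
\[
    M[i,j] \;:=\; 1 \iff L(i,j)=1 \iff \text{the verifier accepts on randomness }(i,j)\text{ and proof } \pi,
\]
which holds for \emph{all} $(i,j)$ on a correctly guessed $\pi$. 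Using a fast PCP verifier, the whole $N\times N$ matrix is emitted within the $\FNP$ budget.

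For the rigidity analysis, suppose towards a contradiction that for every valid $\pi$ the output matrix decomposes as $M = A + S$ with $\mathrm{rank}(A) \le \rho := 2^q$ and $\abs{S} \le \delta N^2$. Given a factorization $A = UV^{\top}$ with $U,V \in \F_2^{N\times \rho}$, a single entry $A[i,j]=\iprod{u_i,v_j}$ is computable in $\tilde{O}(\rho)$ time. By guessing the factorization $(U,V)$ and a succinct description of the sparse support of $S$ as part of the nondeterministic witness, and by exploiting that the PCP verifier reads only $2^q$ proof positions per random string, one converts the decomposition into a nondeterministic algorithm that decides $L$ on input $(i,j)$ in time $\tilde{O}(\rho) + \text{(verifier cost)} = 2^{o(n)}$ once $q = (\log N)^{1/4-\epsilon}$. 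This contradicts the hierarchy theorem, so the output matrix must be $(\delta N^2,\rho)$-rigid. The ``infinitely many $N$'' clause follows because the $\FNP$ machine only needs to succeed on the infinite set of input lengths where the hardness of $L$ actually bites.

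The main obstacle is delivering both halves of the construction at matching parameters. One needs a PCP with randomness $2\log N$, query complexity $(\log N)^{1/4-\epsilon}$, and a verifier whose runtime (together with the matrix-emission step) fits inside the $\FNP$ budget; and one needs the non-rigidity-implies-fast-algorithm reduction to genuinely save more than a $\log n$ factor relative to $2^n$. These two constraints are tightly coupled: the query complexity of the PCP directly controls the rank parameter one can rule out, and the verifier's efficiency controls whether the contradiction is actually $\mathsf{NTIME}$-feasible. The $1/4 - \epsilon$ exponent that Alman and Chen achieve is exactly the optimal balance attainable by composing the Ben-Sasson--Goldreich--Harsha--Sudan--Vadhan PCP of proximity with this rigidity-to-algorithm conversion, and the present paper's rectangular PCPs are introduced precisely to push past this bottleneck.
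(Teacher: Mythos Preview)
The proposal has the right high-level maxim (``hard claims have complex proofs'') but the concrete construction and reduction are both wrong.

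\textbf{Wrong object.} In Alman--Chen (and in this paper's refinement), the rigid matrix is \emph{the PCP proof $\pi$ itself}, viewed as an $N\times N$ matrix, for a single unary input $1^n\in L$. You instead set $M[i,j]=1\iff L(i,j)=1$, i.e.\ the truth table of $L$ on all $N^2$ instances, and try to certify this with one proof $\pi$ by conflating the input $(i,j)$ with the verifier's randomness. A PCP proof certifies membership of a \emph{single} input; there is no single $\pi$ that makes ``verifier accepts on randomness $(i,j)$'' coincide with $L(i,j)$ across all $(i,j)$.

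\textbf{Broken reduction.} Your contradiction step guesses ``a succinct description of the sparse support of $S$''. But $S$ may have $\delta N^2=\Theta(2^n)$ nonzero entries, so describing it already costs $\Omega(2^n)$ bits---no savings. The actual argument never guesses $S$. Instead, it uses \emph{smoothness} of the PCP: if $\pi$ is $\delta$-close to a low-rank matrix $A\cdot B$, then $A\cdot B$ is itself accepted with probability at least $1-q\delta$. One then computes the acceptance probability of $A\cdot B$ \emph{exactly}, using the sub-quadratic algorithm of Chan--Williams for counting ones in a product of thin matrices. This is the algorithmic engine that yields the time savings; your sketch omits both smoothness and the fast counting algorithm entirely.

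Finally, the specific exponent $1/4-\epsilon$ in Alman--Chen does not arise from ``composing BGHSV with this rigidity-to-algorithm conversion'' as you say; it comes from a bootstrapping argument via the easy witness lemma (the paper notes this explicitly after \cref{thm:main}). The present paper bypasses that bootstrapping by making the PCP rectangular, which is what lets the Chan--Williams algorithm plug in directly.
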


Their result still does not attain the rank bounds required for Valiant's lower bounds, yet it vastly improves the state of the art of explicit rigid matrix constructions.
On Razborov's end, the construction indeed meets the required rigidity parameters (in fact, greatly exceeds them), but does not fulfill the requirement of super-explicitness. That said, they use a tensoring argument to obtain $N\times N$ matrices still within Razborov's rigidity parameters, in which each entry is computable in non-deterministic time $2^{(\log\log N)^{\omega(1)}}$. While this is not super-explicit, it is an exponential improvement over previous results.

The surprising construction of Alman and Chen is a tour-de-force that ties together seemingly unrelated areas of complexity theory. A key area is the theory of Probabilistically Checkable Proofs (PCPs). PCPs provide a format of rewriting classical $\NP$-proofs that can be efficiently verified based only on a small amount of random queries into the rewritten proof. The PCP Theorem \cite{AroraS1998,AroraLMSS1998} asserts that any $\NP$-proof can be rewritten into a polynomially-longer PCP that can be verified using a constant number of queries. Alman and Chen make use of \emph{efficient} and \emph{short} PCPs for $\NTIME(2^n)$, as well as \emph{smooth} PCPs. Momentarily, we too will make use of these properties, so let us give an informal description of these:

\begin{description}
\item[Efficient PCP:] A PCP for $\NTIME(T(n))$ is said to be {\em efficient} if the running time of the PCP verifier is sub-linear (or even logarithmic) in the length of the original $\NP$-proof (i.e., $T(n)$).

\item[Short PCP:] A PCP for $\NTIME(T(n))$ is said to be {\em short}  if the length of the PCP is nearly linear (i.e., $T(N)^{1+o(1)}$) in the length of the original $\NP$-proof (i.e., $T(n)$).

\item[Smooth PCP:] A PCP is said to be {\em smooth} if for each input, every proof location is equally likely to be queried by the PCP verifier.
\end{description}

The PCP Theorem has  had a remarkable impact on our understanding of the hardness of approximation of several combinatorial problems (see, e.g., a survey \cite{Trevisan2013}). Parallel to this line of work, Babai, Fortnow, Levin and Szegedy~\cite{BabaiFLS1991} initiated a long sequence of works~\cite{BenSassonSVW2003,BenSassonGHSV2006,BenSassonS2008,BenSassonGHSV2005,Dinur2007,MoshkovitzR2008,Mie2009,BenSassonV2014,Paradise2020} that prove that there exist efficient, short and smooth PCPs for $\NTIME(2^n)$. Alman and Chen's construction makes use of these PCPs, as well as the non-deterministic time-hierarchy theorem \cite{Zak1983} and a fast (i.e., faster than $N^2/\polylog(N)$ time for $N\times N$ matrices) algorithm for counting the number of ones in a low-rank matrix \cite{ChanW2016}.

\subsection{Our results}
Our work arises from asking if there exist PCPs with additional ``nice'' properties that can strengthen the above construction due to Alman and Chen. We answer this question in the affirmative, by (1) introducing a new variant  of PCPs that we refer to as {\em rectangular PCPs}, (2) constructing efficient, short and smooth rectangular PCPs and (3) using these rectangular PCPs to strengthen and simplify the rigid-matrix construction in \cref{thm:AC}. We begin by stating the improved rigid matrix construction.

\newcommand{\maintheoremstatement}{\
There is a constant $\delta \in (0,1)$ such that there is an $\FNP$-machine that for infinitely many $N$, on input $1^N$ outputs an $N \times N$ matrix that is $(\delta \cdot N^2, 2^{\log N/\Omega(\log \log N)})$-rigid.\
}
\begin{theorem}\label{thm:main} \maintheoremstatement
\end{theorem}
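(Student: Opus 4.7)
The plan is to adapt the Alman--Chen construction of~\cite{AlmanC2019}, replacing the generic efficient, short, smooth PCPs they employ with the \emph{rectangular} PCPs constructed in this paper. The improvement of the rank from $2^{(\log N)^{1/4-\epsilon}}$ to $2^{\log N/\Omega(\log\log N)}$ will come from the fact that in a rectangular PCP, the row and column of each query depend on disjoint halves of the verifier's randomness, so ``running the verifier on a candidate matrix proof'' becomes a bilinear operation that composes well with a rank-$\rho$ decomposition.

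Concretely, I first invoke the non-deterministic time hierarchy theorem~\cite{Zak1983} to fix a language $\mathcal{L}\in\NTIME(2^n) \setminus \NTIME(2^n/n^{\omega(1)})$. Next, I feed $\mathcal{L}$ through the efficient, short, smooth, rectangular PCP developed in this paper: on input $x$ of length $n$, the verifier uses $r(n) = 2\log N + O(\log n)$ coins (with $N = 2^{n(1+o(1))/2}$), partitioned as $(r_\row, r_\column)$, and queries a proof viewed as an $N\times N$ matrix $\pi$ such that the row (respectively column) of each query depends only on $r_\row$ (respectively $r_\column$). The $\FNP$-machine on input $1^N$ then nondeterministically guesses an instance $x\in\mathcal{L}$ of the right length together with a valid PCP proof $\pi$, checks validity by running the verifier on all random strings (feasible in $\FNP$ since $\pi$ is polynomial in $|x|$), and outputs $\pi$.

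The core of the argument is the rigidity lower bound. Assume for contradiction that, for infinitely many $N$, every matrix $\pi$ output by this machine admits a decomposition $\pi = L+S$ with $\mathrm{rank}(L)\le \rho = 2^{\log N/\Omega(\log\log N)}$ and $S$ having at most $\delta N^2$ nonzero entries. I would use this to build a too-fast nondeterministic decider for $\mathcal{L}$: guess $x$ and a rank-$\rho$ factorization $L=UV^\top$ together with the support and values of $S$, and estimate the verifier's acceptance probability on $\pi$. Using the rectangular query structure, the contribution of $L$ reduces to counting ones in a small number of low-rank matrices built from $U$ and $V$; this is done in sub-quadratic time by the Chan--Williams algorithm~\cite{ChanW2016}. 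Smoothness ensures that each nonzero entry of $S$ is queried by only $\tilde{O}(1)$ random strings, so the $S$-contribution can be summed directly in $\tilde{O}(|S|)$ time. Together these yield a nondeterministic algorithm for $\mathcal{L}$ running in time $N^{2-\Omega(1)} = 2^{n-\Omega(n)}$, contradicting the hierarchy.

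The main obstacle I anticipate is preserving rectangularity throughout the PCP construction: PCP composition, smoothness-inducing transformations, and low-degree extensions all tend to introduce \emph{shared} randomness that feeds into both row- and column-query functions. The notion of rectangularity must be refined so that the amount of shared randomness stays small enough for the Chan--Williams step to still beat $N^2$ at the stated rank $\rho$. Formulating and proving such a refined rectangularity for the BGHSV outer PCP and its transforms is where the technical heart of the argument lies.
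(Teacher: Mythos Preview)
Your overall architecture is right, but the handling of the sparse part $S$ is a genuine gap that breaks the argument. You propose to guess the support and values of $S$ and then ``sum the $S$-contribution directly in $\tilde{O}(|S|)$ time.'' But $|S|$ can be as large as $\delta N^2$, so both guessing $S$ and processing it take $\tilde\Theta(N^2)$ time. This is not sub-quadratic, so you get no contradiction with the time hierarchy; the whole point is to decide $\mathcal{L}$ in time $o(2^n/n)$, i.e.\ strictly below $N^2/\polylog N$.

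The paper's fix is to never touch $S$ at all. Smoothness is used not to bound how many random strings hit each entry of $S$, but rather to argue that the low-rank matrix $A\cdot B$ (which is $\delta$-close to the accepting proof $\pi$) is \emph{itself} accepted by the verifier with probability at least $1-q\delta > s$. The fast nondeterministic decider then guesses only the factors $A,B$ (of total size $O(N\rho)$) and computes the \emph{exact} acceptance probability of $A\cdot B$ via Chan--Williams, with no sparse correction term. This is what makes the running time genuinely sub-quadratic.

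A second omission: you do not say how the decision predicate's dependence on the randomness is handled. In a rectangular PCP the row and column of each query split nicely, but the predicate $D$ applied to the answers can still depend on all of $R$; if so, the acceptance probability does not factor into anything you can feed to a low-rank counting algorithm. The paper introduces a separate property (randomness-oblivious predicates, allowing the predicate to depend only on the shared randomness plus a bounded number of parity checks of the full randomness) precisely to make the Fourier decomposition over subsets of queries go through. Your proposal should at least flag this as an additional structural requirement on the PCP, alongside rectangularity and smoothness.
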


We remark that Alman and Chen obtained a conditional result which proved a similar conclusion using the easy witness lemma of Impagliazzo, Kabanets and Wigderson~\cite{ImpagliazzoKW2002}: either $\NQP \not\subset \Ppoly$ or for all $\epsilon \in (0,1)$ there exists an $\FNP$-algorithm that for infinitely many $N$, on input $1^N$ outputs an $N \times N$ matrix that is $(\delta \cdot N^2, 2^{(\log N)^{1-\epsilon}})$-rigid. Our main result (\cref{thm:main}) is thus a common strengthening of both \cref{thm:AC} as well as this conditional result.

\subsection{Rectangular PCP}

Our result is obtained by a new notion of PCPs, called {\em rectangular PCPs}.\footnote{We thank Ramprasad (RP) Saptharishi for suggesting the term ``rectangular PCPs".}
Briefly put, rectangular PCPs are PCPs where the proofs are thought of as square matrices, and the random coins used by the verifier can be partitioned into two disjoint sets, one determining the \emph{row} of each query and the other determining the \emph{column}. To get a better feel for this new property, we examine the constraint satisfaction problem (CSP) underlying a rectangular PCP,\footnote{See, for example, \cite[Chapter 18]{AroraBarak} for a description of the CSP underlying a PCP.} and defer the full definition to \cref{sec:prelim}.

Consider the classical $\NP$-hard  constraint satisfaction problem $\maxcut$ whose instance is a directed graph  $G=(V,E)$ with $n$ vertices and $m$ edges and the goal is to find a subset $S \subseteq V$ that maximizes the number of edges cut  (in either direction) between $S$ and $V\setminus S$. The instance $G$ is {\em rectangular} if the following condition is met.
\begin{itemize}
\item There exist two directed graphs $G_1$ and $G_2$ with $\ell = \sqrt{n}$ vertices and $r = \sqrt{m}$ edges each such that $G$ is the product graph $G_1\times G_2$, i.e., the edges of $G$ satisfy the following {\em rectangular} property:
  \[ ((u_1,u_2) ,(v_1,v_2)) \in E(G) \Longleftrightarrow (u_1,v_1) \in E(G_1) \text{ and } (u_2,v_2) \in E(G_2). \]
\end{itemize}
An instance $G$  is said to be {\em $\tau$-almost rectangular} for $\tau \in [0,1)$ if $G$ is the edge-disjoint union of $m^{\tau}$ product graphs $G^{(j)}_1 \times G^{(j)}_2, j \in [m^\tau]$ where each of the product graphs $G^{(j)}_1 \times G^{(j)}_2$ is defined on the same vertex set $V$ and satisfies $|E(G^{(j)}_1)|= |E(G^{(j)}_2| =  m^{(1-\tau)/2}$. To distinguish rectangular graphs from almost-rectangular graphs, we will sometimes refer to them as perfectly rectangular.

This definition of rectangularity can be extended to arbitrary $q$-CSPs as follows. Let $\Phi$ be a $q$-CSP instance on a set $V$ of $n=\ell^2$ variables. Let $\calC$ be the set of $m=r^2$ constraints of $\Phi$. As both the number of variables ($n=\ell^2$) and the number of constraints ($m=r^2$) are perfect squares, we will w.l.o.g. index them with double indices, $(i_i,i_2) \in [\ell] \times [\ell]$ and $(j_1,j_2) \in [r]\times [r]$. Let the $(j_1,j_2)$-th constraint in $\calC$ involve the $q$ variables $x_{c_1(j_1,j_2)},\dots, x_{c_q(j_1,j_2)}$. The instance $\Phi$ is said to be rectangular if for any $k\in[q]$, the address function $c_k: [r]\times [r] \to [\ell] \times [\ell]$ that specifies the $k$-th variable in the $(j_1,j_2)$-th clause can be decomposed into a product function $a_k \times b_k$ where $a_k, b_k \colon [r]\to [\ell]$. Almost rectangularity is defined similarly.

Back in the ``proof systems'' view, a PCP is said to be {\em ($\tau$-almost) rectangular} if its underlying CSP is $(\tau$-almost) rectangular.
Thus, rectangularity can be viewed as natural structural property referring to the clause-variable relationship in the CSPs produced by the PCP. Our main technical result is that there exists an efficient, short, smooth and almost-rectangular PCP. 
A simplified  version of our result is as follows (see \cref{thm:combinedPCP} for the exact statement.)

\begin{theorem}\label{thm:recPCP}
Let $L$ be a language in $\NTIME(2^n)$. For every constants $s \in (0,1/2)$ and $\tau \in (0,1)$, there exists a constant-query, smooth and $\tau$-almost rectangular PCP for $L$ over the Boolean alphabet with perfect completeness, soundness error $s$, proof length at most $2^n \cdot\poly(n)$ and verifier running time at most $2^{O(\tau n)}$.
\end{theorem}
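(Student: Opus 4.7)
The plan is to construct the rectangular PCP in three stages, mirroring the classical modular approach of \emph{outer PCP} followed by \emph{composition with an inner PCP} followed by a \emph{smoothness-inducing transformation}, and to track rectangularity through each stage. The starting point is the Reed--Muller-based outer PCP of Ben-Sasson, Goldreich, Harsha, Sudan and Vadhan for $\NTIME(2^n)$: the proof oracle consists of low-degree extensions of the witness and of various auxiliary polynomials over a field $\F$, indexed by points in $\F^m$. The verifier's random coins are used to sample points or lines/curves for low-degree testing and sum-check/consistency tests. My first task is to examine how these coins translate into query \emph{addresses}: because a point of $\F^m$ is naturally a concatenation of two halves in $\F^{m/2}\times \F^{m/2}$, I expect to be able to partition the verifier's random coins into a ``row block'' and a ``column block'' so that the row-coordinate of each query depends only on the row-coins and the column-coordinate only on the column-coins. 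Where this cannot be made to hold perfectly (some residual mixing of randomness), the construction would be split into $2^{\tau n}$ perfectly rectangular sub-instances, yielding $\tau$-almost rectangularity, with the field size $|\F|$ and the dimension $m$ chosen so that $|\F|^{m} = 2^{n}\cdot\poly(n)$ and the verifier's running time is $|\F|^{O(1)} = 2^{O(\tau n)}$.

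The second stage is PCP composition to reduce both query complexity and alphabet size to constants (over $\{0,1\}$). Here the key requirement is that composition must tensor the rectangular structures of the outer and inner PCPs: the outer verifier produces a constraint whose variables are proof locations (indexed by row/column coordinates), and the inner PCP is applied to an encoding of the assignment to each outer-constraint. As long as the inner PCP's proof is itself arranged on a square grid with a row/column decomposition aligned with the \emph{outer} row/column split, the composed queries continue to factor into a row-component and a column-component. The main subtlety is that one must impose on the outer PCP a strictly stronger notion of rectangularity than the one defined in the introduction --- namely, one that also controls which outer constraints are paired together in composition and how their queries share rows or columns. This is presumably the role of the ``shared-row''/``shared-column'' notions hinted at by the macros \texttt{sharedrow} and \texttt{sharedcolumn} in the preamble, and this refined notion is precisely what must be verified for the Reed--Muller outer PCP.

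Finally, I would apply a smoothness transformation that equalizes the query-location distribution, e.g.\ by replicating the proof across a structured grid indexed by the verifier's coin-prefix (a standard trick). To avoid breaking rectangularity, the replication must itself be a product of a row-replication and a column-replication, so the enlarged proof still lives on a square grid respecting the row/column factorization. A careful bookkeeping of proof length and verifier time through composition and smoothening, together with the parameter choice of the outer PCP above, should deliver constant query complexity and soundness $s$, perfect completeness, smoothness, $\tau$-almost rectangularity, proof length $2^n\cdot\poly(n)$ and verifier time $2^{O(\tau n)}$, as required by the theorem.

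The main obstacle will be step two: carrying a usable row/column factorization through PCP composition. Generic composition produces queries whose addresses mix outer randomness, the index of the outer constraint, and inner randomness in complicated ways, which naively destroys rectangularity. The resolution I anticipate is to strengthen the outer PCP's guarantee so that both (i) the constraint sampled by the outer verifier and (ii) the queries made within that constraint decompose rectangularly with \emph{compatible} row/column halves, and to choose an inner PCP (e.g., a Hadamard-based verifier) whose proof and queries admit the analogous row/column factorization. Proving that the Reed--Muller outer PCP satisfies this refined structural property --- essentially that the axis-parallel structure of its low-degree and sum-check tests aligns with the split $\F^m = \F^{m/2}\times \F^{m/2}$ --- is where I expect the bulk of the technical effort to lie.
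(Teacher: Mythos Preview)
Your three-stage outline (outer Reed--Muller PCP, composition, smoothification) matches the paper's skeleton, and your instinct that the outer verifier's queries factor through the split $\F^m \cong \F^{m/2}\times\F^{m/2}$ is exactly right. But you misidentify both \emph{which} stronger property is needed and \emph{where} it is needed, and this leads you to an unnecessary (and unworkable) requirement on the inner PCP.

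You propose that the inner PCP must itself be rectangular and that its row/column structure must align with the outer one; you even suggest a Hadamard-based inner verifier. The paper does nothing of the sort: the inner verifier is Mie's off-the-shelf constant-query PCPP, with no rectangularity whatsoever. The trick is that the inner randomness $r_\inner$ is only $O((1/m)\log T)$ bits (since the inner verifier runs on the outer decision circuit, of size roughly $T^{1/m}$), so all of it can be absorbed into the \emph{shared} block of the composite randomness. This is precisely why the construction is only $\tau$-almost rectangular rather than perfectly rectangular: the shared block carries the outer shared randomness plus the entire inner randomness, and choosing $m = \Theta(1/\tau)$ keeps this below $\tau r$. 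Trying instead to force a rectangular inner PCP is both unnecessary and problematic (Hadamard has exponential length; making an efficient inner PCPP rectangular is essentially the whole problem again).

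The stronger property you correctly sense is needed --- the paper calls it \emph{rectangular neighbor-listing} (RNL) --- is required not for composition but for \emph{smoothification}. Standard smoothness transforms (including the replication idea you sketch) rewrite the proof so that location $(R,k)$ holds the answer to the $k$-th query on randomness $R$, and then enforce consistency among neighboring configurations (those $(R,k)$ mapping to the same old proof location). To make this rectangular, the row agent must list the row-parts of all neighbors from $(R_\row,R_\shared,k)$ alone, and symmetrically for columns; that is RNL. Your ``product of row-replication and column-replication'' does not by itself solve this: the neighbor relation on configurations is not a product relation unless you first establish RNL. So the technical heart is (i) proving RNL for the BGHSV outer verifier and (ii) showing RNL survives composition with an arbitrary inner PCPP (by pushing $R_\inner$ into the shared block) --- not designing a rectangular inner verifier.
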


\subsection{Rectangular PCPs to rigid matrices}\label{sec:pcps_to_rigid_intro}

We now sketch how rectangular PCPs can be used to construct rigid matrices. This will also serve as a motivation for the definition of rectangular PCPs. Our construction follows that of Alman and Chen (which fits within the \emph{lower bounds from algorithms} framework of Williams~\cite{Williams2013} -- see Section~\ref{sec:related}), with the main difference being the use of rectangular PCPs. We show that this simplifies their construction and improves the rigidity parameters it attains. The construction is inspired by the subtitular maxim: 
\begin{quote}
  {\em Hard claims have complex proofs.}
\end{quote}
Informally speaking, the construction is an instantiation of this maxim where ``complexity'' refers to rigidity, ``proofs'' are PCPs, and ``hard claims'' are instances of the hard language guaranteed by the non-deterministic time hierarchy theorem (see next).

The main ingredients in our construction are as follows:
\begin{enumerate}
\item The non-deterministic time-hierarchy theorem~\cite{Zak1983}: There exists a unary language $L \in \NTIME(2^n) \setminus \NTIME(2^n/n)$.
\item A non-trivial (i.e., sub-quadratic time) algorithm to compute the number of ones in a low-rank $\{0,1\}$-valued matrix when given as input its low-rank decomposition $N = P \cdot Q$, where $P$ and $Q$ are matrices of dimensions $N\times \rho$ and $\rho\times N$, respectively. Such results with running time $N^{2-\epsilon(\rho)}$ were developed by Chan and Williams~\cite{ChanW2016} with $\varepsilon\pars{\rho} = \Omega\pars{1/\log \rho}$.
\item The existence of efficient, short, smooth and rectangular PCPs for $\NTIME(2^n)$, as guaranteed by \cref{thm:recPCP}.
\end{enumerate}

  Let $L \in \NTIME(2^n) \setminus \NTIME(2^n/n)$ as guaranteed by the non-deterministic time-hierarchy theorem. By \cref{thm:recPCP}, there exist efficient, short and smooth rectangular PCPs for $L$. Our goal is to show that either (a natural transformation of) the PCP yields a rigid matrix, or $L \in \NTIME(2^n/n)$ -- a contradiction.
  
  For simplicity of presentation, we will assume that the rectangular PCPs obtained in \cref{thm:recPCP} are perfectly rectangular and furthermore that the underlying CSP of the PCP is $\maxcut$ with completeness $c$ and soundness $s$ for some constants $0<s<c<1$. In other words, the PCP reduction reduces instances $x \in L$ to digraphs $G$ which have a fractional cut of size at least $c$, and instances $x \notin L$ to digraphs $G$ which do not have any cut of fractional size larger than $s$.
  
  Let us understand what it means for the PCP to be short, smooth, efficient and rectangular: \emph{``Rectangular''} refers to the fact that the digraph $G$ is a product graph $G_{1} \times G_{2}$; \emph{``Short''} implies that the size of $G$ (i.e., the number of vertices and edges) is at most $r^2=2^n\cdot \poly(n)$; \emph{``Smooth''} implies that the digraph $G$ is regular (the degree of a vertex is the sum of its in-degree and out-degree); and \emph{``Efficient and rectangular''} implies that for each of the two graphs $G_{1}$ and $G_{2}$, given an edge the vertices incident on the edge can be obtained in time $2^{\gamma n}$ (for a small constant $\gamma>0$ of our choice). 
  
  For any instance $x$ of the language $L$, any cut of the corresponding graph $G$ is of the form $(S, V(G) \setminus S)$. Since $V(G) = V(G_{1}) \times V(G_{2})$, we can identify the cut $S$ with a $V(G_{1})\times V(G_{2})$-matrix with $\{0,1\}$ entries. Let $L_1, R_1 \in \{0,1\}^{E(G_{1}) \times V(G_{1})}$ be the incidence matrices indicating the left and right endpoints of the edges in $G_{1}$ (i.e., if $e=(u,v) \in E(G_{1})$ then $L_1(e,u) = R_1(e,v) = 1$). Similarly, define matrices $L_2, R_2 \in \{0,1\}^{E(G_{2}) \times V(G_{2})}$. These matrices will not be computed explicitly; efficiency of the PCP implies that for any given row-index, the non-zero column of that row can be computed in time $2^{\gamma n}$. We will refer to this fact as the \emph{somewhat-efficient} computation of these matrices.
  
 Observe that the matrix $L_1\cdot S \cdot L_2^T$ is an indicator matrix indicating if the left endpoint of the edge is in the set $S$ or not. Similarly $R_1\cdot S \cdot R_2^T$ refers to the indicator of the right endpoint of the edge. Hence, the matrix $M(S):= L_1\cdot S \cdot L_2^T + R_1\cdot S \cdot R_2^T$ is the indicator matrix of whether the edge is cut by the set $S$ or not (with addition over $\F_2$). Thus, the size of the cut induced by the set $S$ is exactly the number of ones in the matrix $M(S)$. Let us denote this quantity by $\val(S):= \numones(M(S))$.

We will prove that for infinitely many $x \in L$, every cut $S^* \in \{0,1\}^{V(G_{1})\times V(G_{2})}$ of fractional size at least $c$ is a $(\delta \cdot N^2 , \rho)$-rigid matrix, for $\delta = (c-s)/3$ and $\rho = 2^{n/\Omega(\log n)}$. Assume towards contradiction that this was not the case. Then, there for long enough $x \in L$, there exists a cut $S^\ast$ that is non-rigid. Since $S^*$ is non-rigid, it is $\delta$-close to some Boolean matrix $S=P \cdot Q$ such that $P$ and $Q$ are Boolean matrices of dimensions $V(G_{1})\times \rho$ and $\rho\times V(G_{2})$, respectively. We now make two observations.

  \begin{itemize}
  \item Since the cut $S^*$ is of size at least $c$ and is $\delta$-close to $S$, it follows from the regularity of $G$ that the cut induced by the set $S$ is of size at least $c-2\delta$. In other words, $\val(S) \geq c -2\delta$.
  \item We can compute $\val(S) = \numones(M(S))$ in time $ O\left( r\cdot (\rho + 2^{\gamma n}) + r^{2-\epsilon(2\rho)}\right) $ as follows. Recall that $M(S)= L_1\cdot S \cdot L_2^T + R_1\cdot S \cdot R_2^T$ and $S = P \cdot Q$. Hence,
  \begin{align*}
      M(S) & =  L_1\cdot P \cdot Q \cdot L_2^T + R_1\cdot P \cdot Q \cdot R_2^T = \underbrace{\begin{pmatrix} L_1 & R_1 \end{pmatrix} \cdot \begin{pmatrix} P & 0 \\ 0 & P\end{pmatrix}}_{=:\widetilde{P}} \cdot \underbrace{\begin{pmatrix} Q & 0 \\ 0 & Q\end{pmatrix} \cdot \begin{pmatrix} L_2^T \\ R_2^T \end{pmatrix}}_{=:\widetilde{Q}} 
  \end{align*}
  So $M(S)$ is a matrix of rank at most $2\rho$ with a low-rank decomposition given by $M(S) = \widetilde{P}\cdot \widetilde{Q}$. Given matrices $P$ and $Q$ and the somewhat-efficient computation of the matrices $L_1, L_2, R_1, R_2$, the matrices $\widetilde{P}$ and $\widetilde{Q}$ may be computed in time $r\cdot (\rho +  2^{\gamma n})$. Finally, we invoke the algorithm of Chan and Williams to compute $\val(S) = \numones(\widetilde{P}\cdot \widetilde{Q})$ in time $O\pars{r^{2 - \epsilon (2\rho)}}$.
  \end{itemize}
This suggests the following non-deterministic algorithm for checking membership in $L$
\begin{itemize}
  \item On input $1^n$
    \begin{enumerate}
    \item Non-deterministically guess matrices $P \in \{0,1\}^{\ell \times \rho}$ and $Q \in \{0,1\}^{\rho\times \ell}$.
    \item Use the efficient PCP verifier to somewhat-efficiently compute the matrices $L_1, L_2, R_1, R_2$.
    \item Compute the matrices $\widetilde{P}$ and $\widetilde{Q}$.
    \item Compute the number of ones $\nu$ of the matrix $\widetilde{P}\cdot \widetilde{Q}$.
    \item Accept if and only if $\nu  > s \cdot r^2$.
\end{enumerate}
\end{itemize}
Indeed, this algorithm decides $L$: If $1^n \in L$, then there exists a guess $S = P \cdot Q$ that would get $\val(S) \ge (c-2\delta)\cdot r^2 > s \cdot r^2$. On the other hand, if $1^n \notin L$, then by the soundness of the PCP, any cut $S$ would have $\val(S) \le s\cdot r^2$.

However, for a suitable choice of $\rho$, this algorithm runs in time $O\left(\ell\cdot \rho + r\cdot (\rho + 2^{\gamma n}) + r^{2-\epsilon(2\rho)}\right) = O(2^n/n)$ -- contradicting the time-hierarchy theorem. Hence, it is false that for every long enough $x \in L$, there exists a cut $S^* \in \{0,1\}^{V(G_{1})\times V(G_{2})}$ of fractional size at least $c$ which is a non-rigid matrix. This immediately yields an $\FNP$-algorithm that infinitely often outputs rigid matrices.

\medskip
In \cref{sec:pcps_to_rigid} we complete this sketch into a full proof that deals with two significant caveats: the PCP is only almost-rectangular, and the predicate of the PCP is not necessarily $\maxcut$. The first is not a significant obstacle and the generalization is rather immediate. The second requires more care, but examining the proof reveals that we only used the fact that each clause has the same predicate or, in PCP jargon, that the predicate is \emph{oblivious to the randomness}. To this end, we define a property of PCPs wherein the predicates have randomness-oblivious-predicates (ROP) and show that the rectangular PCPs constructed in \cref{thm:recPCP} can also be made ROP (see \cref{sec:rop,section:all_together} for exact details). 

\paragraph{Comparison with the Alman--Chen construction:} Alman and Chen obtained a similar result conditioned on the assumption $\NQP \subseteq \P/\poly$, using the easy witness lemma. To obtain an unconditional result, they used a bootstrapping argument 
which results in rigidity for rank at most $2^{(\log N)^{1/4-\epsilon}}$. 
The above proof, on the other hand, is not conditioned on any assumption, does not require the easy witness lemma, and implies rigid matrices for rank $2^{\log N/\Omega(\log \log N)}$. In fact, there is almost no loss due to the PCPs in the above argument. For instance, if the number of ones in $N \times N$ matrices of rank $N^{0.999}$ could be computed in sub-quadratic time, then our construction would yield matrices rigid for rank $N^{0.99}$.

\subsection{Constructing rectangular PCPs} \label{sec:outline_intro}
  
The rectangular property of PCPs states that the underlying constraint satisfaction problem (CSP) has a product structure.

\subsubsection*{Warm-up: Rectangularity of some known constructions}
As a warm-up, let us examine the rectangularity of some common PCP building blocks. The purpose of this warm-up is to become comfortable with the notion of rectangularity. Towards this end, we chose some simple examples from the PCP literature, rather than examples that are actually used in our construction.

First, it is immediate that PCPs obtained from \emph{parallel repetition} are rectangular. Unfortunately, the size of these PCPs are far from being nearly-linear.

Next, recall the Blum--Luby--Rubinfeld (BLR) \emph{linearity tester} \cite{BlumLR1993} that checks if a given function $f: \F_2^m \to \F_2$ is linear.

\begin{algorithmNoNum}[BLR Tester]
    On oracle access to $f \colon \F_2^m \to \F_2$,
    \begin{enumerate}
    \item Sample $x, y \in_R \F_2^m$.
    \item Query $f$ at locations $x$, $y$, and $x+y$.
    \item Accept if and only if $f(x) +f(y)+f(x+y)=0$.
    \end{enumerate}
\end{algorithmNoNum}
The $(x,y)$-th constraint in the above test queries the three locations $x,y,x+y\in \F_2^{3m}$. For even $m$, we can write $x = (x_1,x_2)$ and $y=(y_1,y_2)$ where $x_1,x_2,y_1,y_2 \in \F_2^{m/2}$. Thus, the $((x_1,x_2),(y_1,y_2))$-test queries the three locations $(x_1,x_2)$,  $(y_1,y_2)$ and $(x_1+y_1,x_2+y_2)$. Hence, the BLR test is perfectly rectangular. For similar reasons, the low-degree test (actually used in our construction) is also perfectly rectangular.

\subsubsection*{The actual construct}

The warm-up gives us hope that PCPs constructed using from low-degree test are rectangular or can be made so with some modification. Our construction, essentially, realizes this hope. In particular we take a closer look at the short and efficient PCP construction of Ben-Sasson~\etal~\cite{BenSassonGHSV2006,BenSassonGHSV2005} and modify it suitably to obtain a rectangular PCP. This is a rather delicate operation and involves several subtleties along the way. We highlight the salient steps in the construction below.

For starters, recall another key ingredient in the construction of PCPs: the composition paradigm of Arora and Safra \cite{AroraS1998}. We will use the modular composition paradigm of Ben-Sasson~\etal~\cite{BenSassonGHSV2006} and Dinur and Reingold~\cite{DinurR2006}, wherein a \emph{robust} PCP is composed with a PCP \emph{of proximity}. Our construction of rectangular PCPs will proceed along the following lines.
\begin{enumerate}
\item \label{intro_outline_bghsv} We first show that the Reed--Muller based PCP construction due to Ben-Sasson~\etal~\cite{BenSassonGHSV2006,BenSassonGHSV2005} can be modified to yield a short almost-rectangular robust PCP. This involves a careful, step-by-step examination of this PCP. As indicated above, the low-degree component of this PCP is perfectly rectangular. However, this PCP also involves a sum-check component, which is inherently \emph{not} rectangular, but is fortunately \emph{almost} rectangular.
\item We then show that composition of an almost-rectangular robust PCP with a (not necessarily rectangular) PCP of proximity  yields an almost-rectangular PCP.
\end{enumerate}
Composing the outer robust PCP obtained in \cref{intro_outline_bghsv} with the short and efficient PCP of proximity of Mie~\cite{Mie2009} yields a short, efficient and rectangular PCP with constant query complexity. However, this PCP is not necessarily smooth. By now, there are several standard techniques to ``smoothify'' a PCP in literature, but these techniques do not necessarily retain the rectangular property. To obtain a rectangular and smooth PCP, we actually work with a stronger notion of rectangularity, that we refer to as ``rectangular-neighborhood-listing (RNL)'' and show that a short and efficient PCP with RNL can be ``smoothified'' to yield the desired short, efficient, smooth and rectangular PCP.

\subsection{Related Work}\label{sec:related}

\paragraph{PCPs with structured queries.}
We view \cref{thm:recPCP} as continuing a line of work that explores the connection between the randomness of a PCP and the structure of its queries. A prominent advance in this direction is the work of Ben-Sasson and Viola~\cite{BenSassonV2014}. They constructed short and efficient PCPs in which queries are a function of the input and a simple projection of the randomness (namely, a \emph{$1$-local} function: for a fixed input, each bit in each query location is a fixed bit of the randomness or its negation).%
\footnote{Interestingly, the construction of \cite{BenSassonV2014} is also an adaptation of a PCP from \cite{BenSassonGHSV2005}, but not the same one as in our work. Namely, we build upon the Reed--Muller based PCP in \cite{BenSassonGHSV2005}, whereas \cite{BenSassonV2014} builds upon the Reed--Solomon based PCP in the same paper.}
Although the query structure in \cite{BenSassonV2014} (and follow-up~\cite{Viola2020}) is very simple, it is unclear whether their PCP is almost-rectangular or smooth---both playing a crucial role in our construction and its application.

In a different direction, Feige and Jozeph \cite{FeigeJ2012} constructed PCPs in which the queries depend only on the randomness but not on the input. Recently, Austrin, Brown-Cohen, and H{\aa}stad~\cite{AustrinBH2019} improved this result to have optimal soundness error for certain verification predicates such as $\threeSAT$ and $\threeLIN$.

\paragraph{Circuit Lower Bounds from Algorithms.}
The maxim ``hard claims having complex proofs'' is inspired by a result of Williams~\cite{Williams2016}, showing that witnesses for $\NTIME(2^n)\setminus \NTIME(2^n/n)$ cannot be truth-tables of certain small-size low-depth circuits (specifically, $\ACC^0$ circuits). That work is a part of Williams's algorithmic approach to circuit lower bounds originating in \cite{Williams2013, Williams2014}.
Roughly speaking, Williams's framework shows how to obtain lower bounds against a certain circuit class by designing non-trivial (i.e., better than exhaustive search) SAT algorithms for circuits in the class.
Williams~\cite{Williams2013} also observed the usefulness of PCPs within this framework: using PCPs, one can obtain circuit lower bounds from any non-trivial derandomization.%
\footnote{More precisely, from any non-trivial deterministic estimation of the acceptance probability of a circuit, up to a constant additive error.}
Santhanam and Williams~\cite{SanthanamW2014},  Ben-Sasson and Viola~\cite{BenSassonV2014}, and Chen and Williams~\cite{ChenW2019} further explored and tightened this connection.
In this light, the overall proof strategy of Alman and Chen~\cite{AlmanC2019} can be seen as a surprising instantiation of Williams's framework for average-case hardness of the computational model of low-rank matrices.

\paragraph{Applications to Probabilistic Degree.} Recently (and independently of this work), Viola~\cite{Viola2020} showed the existence of functions on $n$ variables in $\E^\NP$ with approximate probabilistic degree $\Omega(n/\log^2 n)$ over $\F_2$, for infinitely many $n \in \N$. Using the known relation between matrix rigidity and approximate rank (see \cite[Proposition 7.5]{AlmanC2019}), \cref{thm:main} implies a similar lower bound on the approximate probabilistic degree.



\subsection{Organization}

The rest of the paper is organized as follows. 
\begin{description}
\item[Preliminaries (\cref{sec:prelim}).] We begin by giving a definitional treatment of PCPs and its variants. In particular, we formally define the rectangular PCP, which is the central object of our focus. We also define the two aforementioned related properties: rectangular-neighborhood-listing (RNL) and randomness-oblivious-predicates (ROP).
\item[From Rectangular PCPs to Rigid Matrices (\cref{sec:pcps_to_rigid}).] We show how the existence of efficient, short and smooth rectangular PCPs with ROP for $\NTIME(2^n)$ yields rigid matrices (thus proving \cref{thm:main}, modulus the actual rectangular PCP construction).
\item[A Construction of Rectangular PCPs (\cref{sec:smoothification,section:outer,sec:rop,section:composition,section:all_together}).] In the remaining sections of the paper, we construct efficient, short and smooth rectangular PCPs for $\NTIME(T(n))$. The main steps in the construction are as follows:
\begin{itemize}
\item \cref{sec:smoothification}: We show how any PCP with RNL and ROP can be converted to a smooth and rectangular PCP with ROP. Hence, from this point onwards, we seek PCPs with RNL, rather than rectangular PCPs.
\item \cref{section:outer}: We show that the robust PCP verifier of Ben-Sasson~\etal~\cite{BenSassonGHSV2006,BenSassonGHSV2005} has RNL. 
\item \cref{sec:rop}: We show how to add ROP to any robust PCP with RNL.
\item \cref{section:composition}: We then show that any PCP of proximity, when composed with a robust PCP that has RNL and ROP, yields a PCP with RNL and ROP. Note that the PCP of proximity need not be rectangular.
\item \cref{section:all_together}: Finally, we combine the results proved in \cref{sec:smoothification,section:outer,sec:rop,section:composition}  to obtain our main construct: an efficient, short and smooth rectangular PCP (thus proving \cref{thm:recPCP}).
\end{itemize}
\end{description}


\section{PCPs: definitions and variants} 
\label{sec:prelim}

The main focus of this section is to introduce the notion of {\em rectangular PCPs}, the central object of interest in this work. To this end, we begin by recalling the standard definition of PCPs and related objects (PCP verifier, robust soundness, smooth PCPs) before proceeding to define rectangular PCPs. 

\paragraph{Notation.} Let $\Sigma$ be any finite alphabet. 
For $u,v\in\Sigma^n$, the relative Hamming distance between $u$ and $v$, denoted by $\delta(u,v)$, is the fraction
of locations on which $u$ and $v$ differ
(i.e., $\delta(u,v)\coloneqq|\{i:u_i\neq v_i\}|/n$).
We say that $u$ is {\em $\delta$-close} to $v$ (resp., {\em $\delta$-far} from $v$)
if $\delta(u,v)\leq \delta$ (resp., $\delta(u,v)>\delta$).
The relative distance of a string $u$ to a set $V$ of strings
is defined as $\delta(u,V)\coloneqq\min_{v\in V}\{\delta(u,v)\}$.

\subsection{Standard PCPs} 

We begin by recalling the formalism of an efficient PCP verifier. As is standard in this literature, we restrict our attention to {\em non-adaptive} verifiers. 

\begin{definition}[efficient PCP verifiers]
\ \\\begin{itemize}
\vspace{-0.5cm}\item
Let $r,q,m,d,t, \sigma \colon \pintegers\to\pintegers$. A \defemph{$(r,q,m,d,t)$-restricted verifier} over alphabet $\Sigma \coloneqq \bool^\sigma$ is a probabilistic
algorithm $V$ that, on an input $x$ of length $n$,
tosses $r \coloneqq r(n)$ random coins $R$
and generates a sequence of $q \coloneqq q(n)$ \defemph{query locations} $I \coloneqq (i^{(1)},\ldots,i^{(q)})$, where each $i^{(k)} \in [m(n)]$, 
and a \defemph{(decision) predicate} $D \colon \Sigma^q\rightarrow \{0,1\}$ of size at most $d(n)$ in time at most $t(n)$.

Think of $V$ as representing a probabilistic oracle machine that
queries the proof oracle $\pi\in \Sigma^m$, for the positions in $I$, receives the $q$
symbols $\pi\restrict{I} \coloneqq (\pi_{i^{(1)}},\ldots,\pi_{i^{(q)}})$,
and accepts iff $D(\pi\restrict{I})=1$.
\item
We write $(I,D)\getsr  V(x)$
to denote the queries and predicate generated by $V$ on input $x$ and
random coin tosses. To explicitly mention the random coins $R$, we write $(I,D) \gets V(x;R)$.
\item
We call $r$ the \defemph{randomness complexity}, $q$ the \defemph{query complexity}, $m$ the \defemph{proof length}, $d$ the \defemph{decision complexity} and $t$ the \defemph{running time} of $V$. $\sigma$ is called the \emph{answer complexity} of $V$, and will usually be omitted.\footnote{All PCPs in this work will be Boolean (i.e., $\sigma = 1$), except for an intermediate PCP in \cref{section:outer}. Even there, it will be more convenient to consider the alphabet size $\abs{\Sigma} = 2^\sigma$ rather than $\sigma$.}
\end{itemize}
\end{definition}

It will be convenient at times to have the following graphical description of the verifier. Given a $(r,q,m,d,t)$-restricted verifier and input $x$, consider the bipartite graph $G(V,x)\coloneqq(L = \{0,1\}^r, R=[m], E)$ where $(R, i) \in E$ if the verifier $V$ on input $x$ and random coins $R$ queries location $i$ in the proof. Clearly,  the graph $G(V,x)$ is $q$-left regular.

We can now define the standard notion of efficient PCPs with perfect completeness. 

\begin{definition}[PCP]\label{def:PCP}
For a function $s : \pintegers\to [0,1]$, a verifier $V$ is
a \defemph{probabilistically checkable proof system (PCP)} for a language $L$
with \defemph{soundness error} $s$ if the following two conditions hold
for every string $x$:
\begin{description}
\item[Completeness:] If $x \in L$ then there exists $\pi$
such that $V(x)$ accepts oracle $\pi$ with probability~1.
Formally, \[\exists \pi \qquad \Pr_{(I,D)\getsr V(x)}[D(\pi\restrict{I}) = 1]=1.\]
\item[Soundness:] If $x \not \in L$ then
for every oracle $\pi$, the verifier $V(x)$ accepts $\pi$ with probability
strictly less than~$s$. Formally, \[\forall \pi \qquad
\Pr_{(I,D)\getsr V(x)}[D(\pi\restrict{I}) = 1] < s(|x|).\]
\end{description}
\end{definition}

While constructing PCPs, we will sometimes be interested in PCPs with a stronger notion of soundness, referred to as {\em robust soundness}. 

\begin{definition}[robust soundness]
For functions
$s,\rho : \pintegers\rightarrow [0,1]$,
a PCP verifier $V$ for a language $L$
has \defemph{robust-soundness error} $s$ with \defemph{robustness parameter $\rho$}
if the following holds for every $x\notin L$:
For every oracle $\pi$,
the symbols read by the verifier $V$ are $\rho$-close to being
accepted with probability
strictly less than $s$. Formally,
\[\forall \pi
\Pr_{(I,D)\getsr V(x)}
    [\mbox{$\exists a$ s.t. $D(a)=1$ and $\delta(a,\pi\restrict{I})\leq \rho$}]
   < s(|x|).\]
\end{definition}

 By now, we know of several such efficient PCP constructions, one of which we state below.

\begin{theorem}[efficient PCPs for $\NTIME(T)$~{\cite[Theorem~2.6]{BenSassonGHSV2005}}]\label{thm:bghsv2}
Suppose that $L$ is a language in
$\NTIME(T(n))$ for some non-decreasing function $T \colon \N \rightarrow \N$. Then for every $\epsilon \in (0,1)$, $L$ has a PCP verifier over $\bool$ with soundness error $\epsilon$, query complexity $O(1/\epsilon)$ and randomness complexity $\log T(n) +\log^{O(\epsilon)} T(n)$.
\end{theorem}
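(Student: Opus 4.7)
The plan is to follow the Reed--Muller-based PCP template developed by Babai--Fortnow--Levin--Szegedy and refined by Ben-Sasson et al., and then invoke PCP composition together with sequential repetition to achieve the claimed parameters. The starting point is to reduce the $\NTIME(T(n))$ computation to an algebraic constraint-satisfaction instance: given an input $x$, encode an accepting tableau of the non-deterministic machine as a function $w \colon H^m \to \F$, where $H \subset \F$ is a subset of size $\abs{H} = (\log T)^{O(1/\epsilon)}$ and $m = \log T / \log \log T$ (with small adjustments). A generic Cook--Levin-style arithmetization yields a low-degree polynomial $P$ such that $x \in L$ iff there exists $w$ whose low-degree extension $\tilde w \colon \F^m \to \F$ satisfies $P(z, \tilde w(z), \tilde w(\phi_1(z)), \dots, \tilde w(\phi_c(z))) = 0$ for all $z \in H^m$, where the $\phi_i$ are ``neighbour functions'' determined by the transition table of the machine.

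Given this encoding, I would describe the verifier as performing two components on the alleged proof $\tilde w$ (plus auxiliary polynomials): a \emph{low-degree test} asserting that $\tilde w$ is (close to) a polynomial of degree at most $\abs{H}\cdot m$, and a \emph{sum-check}/zero-on-subcube test asserting that the arithmetized constraint vanishes on $H^m$. Both are standard: the low-degree test can be a plane-point or axis-parallel line-vs-point test, which is analysable by now-classical techniques and has soundness error a small constant; the sum-check/zero-testing step contributes an additional layer of verification with polynomially many symbols queried in $\F$. The randomness consumed is dominated by sampling a point in $\F^m$, which needs $m \log \abs{\F} = \log T(n) + O(m \log\log T)$ bits; choosing $\abs{\F} = \mathrm{poly}(\abs{H}, m)$ makes this $\log T(n) + \log^{O(\epsilon)} T(n)$ as required.

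At this stage the verifier reads $\mathrm{poly}\log T$ symbols over a large alphabet, not $O(1/\epsilon)$ Boolean queries. The next step is \emph{PCP composition} à la Arora--Safra / Ben-Sasson--Goldreich--Harsha--Sudan--Vadhan / Dinur--Reingold: compose the outer verifier with an inner PCP of proximity whose input is the tuple of symbols read by the outer verifier, reducing query complexity to a constant while keeping alphabet Boolean; to control blow-up in randomness one uses an inner PCPP with poly-logarithmic parameters, which is negligible compared to $\log^{O(\epsilon)} T$. A final \emph{sequential repetition} (or direct choice of constants inside the low-degree and sum-check tests) drives the soundness error below $\epsilon$ at the cost of multiplying the query complexity by $O(\log(1/\epsilon))$, yielding the claimed $O(1/\epsilon)$.

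The main technical obstacle is calibrating the parameters $\abs{H}, m, \abs{\F}$ and the degree bound so that the \emph{randomness} stays at $\log T(n) + \log^{O(\epsilon)} T(n)$ rather than $O(\log T(n))$ with a large hidden constant; this is what forces $m$ to be slightly sublogarithmic and $\abs{H}$ to depend on $\epsilon$, and it is the source of the $\log^{O(\epsilon)} T(n)$ overhead. A secondary subtlety, not essential for the statement but important for later sections of the paper, is ensuring that the low-degree/sum-check portion is already \emph{robust}, so that composition preserves constant soundness; since the stated theorem asks only for ordinary soundness, I would not pursue robustness here and would instead refer the reader to Ben-Sasson et al.\ for a self-contained construction meeting exactly these parameters.
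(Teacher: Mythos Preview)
The paper does not prove this theorem at all: it is stated as a citation of \cite[Theorem~2.6]{BenSassonGHSV2005} and used as a black box, with no proof or sketch provided in the present work. So there is nothing in the paper to compare your proposal against.

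That said, your outline is a fair high-level summary of the BGHSV construction, with one parameter wrinkle worth flagging. You set $\abs{H} = (\log T)^{O(1/\epsilon)}$ and $m = \log T / \log\log T$, which would give $\abs{H}^m \approx T^{O(1/\epsilon)}$ rather than $\approx T$; the actual calibration in \cite{BenSassonGHSV2005,BenSassonGHSV2006} takes $m$ to be a suitably chosen constant (or slowly-growing parameter) and $\abs{\F} \approx T^{1/m}\cdot\polylog T$, not $\poly(\abs{H},m)$. The randomness then comes out to $(1-\tfrac{1}{m})\log T + O(m\log\log T)$ for the outer verifier plus roughly $\tfrac{1}{m}\log T$ from the inner PCPP after composition, totalling $\log T + O(m\log\log T)$; the $\log^{O(\epsilon)} T$ overhead in the statement arises from choosing $m$ appropriately and from the proof-length/randomness tradeoff in the inner PCPP, not from the field size directly. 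Your final step of sequential repetition to drive soundness to $\epsilon$ would yield query complexity $O(\log(1/\epsilon))$, not the $O(1/\epsilon)$ in the statement; the stated bound is simply a looser (and valid) upper bound.
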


While constructing variants of the above PCP, we will particularly be interested in smooth PCPs. 

\begin{definition}[smooth PCP]\label{def:smooth_PCP}
Given a $(r,q,m,d,t)$-restricted verifier $V$, an input $x$ and $i \in [m]$, 
let $Q_x(i)$ denote the probability with which the verifier $V$ outputs $i$ on a random query $k\in [q]$. Formally, \[ Q_x(i) \coloneqq \Pr_{R, k \in [q]} [ i^{(k)} = i | (I,D) \gets V(x;R) ].\]
The PCP verifier $V$ is said to be smooth if for all $i, j \in [m]$, $Q_x(i) = Q_x(j)$. 
\end{definition}

Thus, smooth PCPs refer to PCPs whose verifiers query all locations of the proof oracle equally likely (or equivalently in the above graphical description,  verifiers whose corresponding bipartite graphs are also right-regular).\footnote{\emph{Minor historical inconsistencies in the definition of smoothness:} Several previous works \cite{KatzT2000,Paradise2020} defined a smooth oracle machine as one in which each location of the oracle has equal probability of being queried by the machine \emph{in any of its queries} (rather than in a random query, as in \cref{def:smooth_PCP} as well as other prior works \cite{GoldreichS2006,BenSassonGHSV2006}). Indeed, both definitions are equivalent assuming the machine never queries the same location twice for any given random coin sequence $R$. Our definition is more convenient as it coincides with right-regularity of the corresponding bipartite graph even without this assumption.}

\begin{remark}[tolerance of smooth PCPs]
	A smooth PCP is \defemph{tolerant} of errors in a correct proof, in the sense that a proof that is close to a correct one is accepted with good probability. Concretely, suppose $V$ makes $q$ queries to its proof and is smooth. Then if $\pi$ is a correct proof for $V$ (i.e. accepted w.p. $1$) and $\pi^\ast$ is $\delta$-close to $\pi$ in relative Hamming distance, then $\pi^\ast$ is accepted with probability at least $1- q \cdot \delta$.
\end{remark}

The PCPs constructed in Theorem~\ref{thm:bghsv2} are not necessarily smooth, however they can be made smooth without too much of an overhead. In this work we will be interested in {\em smoothening} the PCP maintaining yet another property, {\em rectangularity}, which we introduce in the following section.

\subsection{Rectangular PCPs}

We now define {\em rectangular PCPs}, the central object of interest in this work. As the name suggests, rectangular PCPs are PCPs in which the proof oracle $\pi:[m]\to \Sigma$, an $m$-length string,  is interpreted as a matrix $\pi:[\ell]\times [\ell] \to \Sigma$ for some $\ell$ such that $m = \ell^2$ (yes, we assume that the proof lengths are always squares of integers). Furthermore, the verifier is also ``rectangular'' in the sense that the randomness $R \in \{0,1\}^r$ is also partitioned into 2 parts $R = (R_\row, R_\column)$ such that the row index of the queries is obtained from the ``row randomness'' $R_\row$ while the column index of the queries is obtained from the ``column randomness'' $R_\column$.

The above informal description assumes ``perfect'' rectangularity while the definition below allows for the relaxed notion of ``almost-rectangularity'', in which randomness is partitioned into three parts: row and column (as above), as well as a small \emph{shared} part that is used for obtaining both the rows and the columns of the queries.

\begin{definition}[Rectangular PCP]\label{def:rectangular}
For $\tau \in [0,1)$, a $(r,q,\ell^2,d,t)$-restricted verifier $V$ is said to be \defemph{$\tau$-rectangular} if the following holds.

The random coin tosses $R \in \{0,1\}^r$ can be partitioned into 3 parts

\[ R = (R_\row, R_\column, R_\shared)  \in \{0,1\}^{(1-\tau)r/2} \times \{0,1\}^{(1-\tau)r/2} \times \{0,1\}^{\tau r},\] 
such that the verifier $V$ on input $x$ of length $n$ and random coins $R$ produces a sequence of $q$ \defemph{proof locations} $I = ((i_\row^{(1)},i_\column^{(1)}),\ldots, (i_\row^{(q)},i_\column^{(q)}))$ as follows:
\begin{itemize}
    \item $I_\row \coloneqq (i_\row^{(1)},\ldots, i_\row^{(q)}) = V_\row(x; R_\row, R_\shared)$,
    \item $I_\column \coloneqq (i_\column^{(1)},\ldots, i_\column^{(q)}) = V_\column(x; R_\column, R_\shared)$,
    \item Generating $I_\row$, $I_\column$, and the decision predicate\footnote{It is natural to wonder how the decision predicate depends on the randomness. This is considered in \cref{sec:rop_def}.} take a total of at most $t(n)$ time.
\end{itemize}
\end{definition}
In other words, the row (respectively column) indices of the queries are only a function of the row (respectively column) and shared parts of the randomness. If $\tau = 0$, we will say the verifier $V$ is \defemph{perfectly rectangular}, and otherwise $V$ is \defemph{almost rectangular}.
When it is obvious from context, we will say that $V$ is simply \defemph{rectangular}, omitting the ``$\tau$-'' qualifier.

\subsection{Rectangular Neighbor-Listing (RNL)}

A careful reading of the construction of PCPs mentioned in Theorem~\ref{thm:bghsv2} will reveal that they are in fact rectangular. However, for our application, we will need rectangular PCPs that are also smooth. Later, we will ``smoothen'' a PCP while maintaining its rectangularity (see \cref{sec:smoothification}), for which we need a stronger property that we refer to as {\em rectangular-neighbor-listing (RNL)}. To define this property, we first define {\em configurations} and {\em neighboring configurations}.

\begin{definition}[configurations and neighboring configurations\label{def:neighbor}]
Given a $(r,q,m,d,t)$-restricted verifier $V$ and an input $x$, a configuration refers to a tuple $(R,k) \in \{0,1\}^r \times [q]$ composed of the randomness of the verifier and query index. The verifier $V$ describes how to obtain the query location $i^{(k)} \in [m]$ from the configuration $(R,k)$ (and the input $x$). 

We say that two configurations $(R,k)$ and $(R',k')$ of a PCP verifier $V$ on input $x$
are \defemph{neighbors} if they both yield the same query location $i \in [m]$. (In particular, every configuration is a neighbor of itself.)
\end{definition}

In the graphical representation of a verifier, a configuration refers to an edge of the bipartite graph and two configurations are said to be neighbors if they are incident on the same right vertex. A configuration $(R,k) = (R_\row,R_\column,R_\shared,k)$ of a rectangular PCP can be broken down into a \defemph{row configuration} $(R_\row,R_\shared,k)$ and a \defemph{column configuration} $(R_\column,R_\shared,k)$. Rectangularity states that the query location $(i_\row^{(k)}, i_\column^{(k)}) \in [\ell] \times [\ell]$ satisfy that $i_\row^{(k)}$ is a function of the row configuration while $i_\column^{(k)}$ is a function of the column configuration. 

\begin{definition}[rectangular neighbor-listing (RNL)]\label{def:RNL}
For $\tau \in [0,1)$ and $t_\RNL : \N \rightarrow \N$, an $(r,q,m,d,t)$-restricted verifier $V$ is said to have the \defemph{$\tau$-rectangular neighbor listing property ($\tau$-RNL)} with time $t_\RNL\pars{n}$ if the following holds.

\begin{itemize}
\item The random coin tosses $R \in \{0,1\}^r$ can be partitioned into 4 parts 
\[ R = (R_\row, R_\column, R_\sharedrow,R_\sharedcolumn)  \in \{0,1\}^{(1-\tau)r/2} \times \{0,1\}^{(1-\tau)r/2} \times \{0,1\}^{\tau r/2} \times \{0,1\}^{\tau r/2},\] 
where we refer to the 4 parts $R_\row, R_\column, R_\sharedrow,R_\sharedcolumn$ as the \defemph{row part}, \defemph{column part}, \defemph{row-shared part} and \defemph{column-shared part} respectively. We will refer to the combined shared randomness $R_\shared\coloneqq(R_\sharedrow,R_\sharedcolumn)$ as the \defemph{shared part}. 

\item There exist two algorithms, a \defemph{row agent} (denoted $A_\row$) and a \defemph{column agent} (denoted $A_\column$) that list, in time $t_\RNL(n)$, all neighbors of a given configuration $\pars{R,k}$ in the following ``rectangular and synchronized'' fashion:
\begin{itemize}
    \item On input a row configuration $\pars{R_\row, R_\shared, k}$, the row agent $A_{\row}$ outputs a list $L_\row$ 
    of tuples $\pars{R^\prime_\row, R^\prime_\sharedrow, k^\prime}$.
    \item On input a column configuration $\pars{R_\column, R_\shared, k}$, the column agent $A_{\column}$ outputs a list $L_\column$ 
    of tuples $\pars{R^\prime_\column, R^\prime_\sharedcolumn, k^\prime}$.
\end{itemize}
satisfying the following properties
\begin{enumerate}
\item \label{rnl_zipped}The two lists $L_\row$ and $L_\column$ are of equal length and entrywise-matching $k^\prime$ values, such that the ``zipped'' list
\begin{equation}\label{List}
		L \coloneqq \left\lbrace
		\pars{R^\prime_\row, R^\prime_\column, R^\prime_\sharedrow, R^\prime_\sharedcolumn, k^\prime}
		\;\;\middle|\;
		\begin{array}{c}
		    i \in \brackets{\abs{L_\row}} \\
			\pars{R^\prime_\row, R^\prime_{\sharedrow}, k^\prime} \coloneqq L_\row \brackets{i} \\
			\pars{R^\prime_\column, R^\prime_{\sharedcolumn}, k^\prime} \coloneqq L_\column \brackets{i}
		\end{array}
		\right\rbrace
\end{equation}
is the list of all full configurations that are neighbors of $\pars{R, k}$.

\item \label{rnl_sync} Not only are the contents of $L$ the same for each two neighboring locations, but the \emph{order} of configurations in $L$ is the same too. That is, for any two neighboring configurations $\pars{R,k}$ and $\pars{\widetilde{R}, \widetilde{k}}$, the resulting configuration lists $L$ and $\widetilde{L}$ are equal as ordered lists (element-by-element).

\item \label{rnl_index} Both agents know the index of $\pars{R,k}$ in the list $L$ (despite not knowing $\pars{R,k}$ entirely).

\end{enumerate}
\end{itemize}
\end{definition}

Informally speaking, rectangularity asserts that the query location can be obtained in a ``rectangular'' fashion from the randomness, while RNL asserts that the entire list of neighboring configurations of the query location can be obtained in a ``rectangular'' fashion. 

A PCP with RNL can be made \emph{smooth} and \emph{rectangular}, as shown in \Cref{sec:smoothification}.

\begin{remark}
\label{remark:BarakGoldreich}
    Barak and Goldreich \cite{BarakG2008} defined PCPs with a \emph{reverse-sampling} procedure that outputs a uniformly random neighbor of any given configuration. The important difference between RNL and reverse-sampling is that the former offers a procedure that outputs neighboring configurations in a rectangular fashion.
\end{remark}

\subsection{Randomness-oblivious predicates (ROP)}\label{sec:rop_def}

\newcommand{\aware}{{\mathrm{aware}}}
\newcommand{\oblivious}{{\mathrm{obliv}}}

For our application of rectangular PCPs (\cref{sec:pcps_to_rigid}), we would like the decision circuit to depend only on the shared part of the randomness. However, we do not know how to obtain such a PCP (that is also \emph{smooth} and \emph{short}), so we allow the decision circuit to take a limited number of \emph{parity checks} of the entire randomness. Like the decision circuit, the choice of parity checks depends only on the shared part of the randomness.

\begin{definition}[efficient PCP verifiers with $\tau$-ROP]

For $\tau \in [0,1)$, a $(r,q,m,d,t)$-restricted verifier $V$ is said to have the \defemph{$\tau$-randomness-oblivious predicates ($\tau$-ROP)} if the following holds. 

The random coin tosses $R \in \{0,1\}^r$ can be partitioned into two parts 
\[ R = (R_\oblivious, R_\aware) \in \{0,1\}^{(1-\tau)r}\times \{0,1\}^{\tau r},\] 
such that the verifier $V$ on input $x$ of length $n$ and random coins $R$ runs in time $t(n)$, and
\begin{enumerate}
        \item Based only on $R_\aware$:
        \begin{enumerate}
            \item Constructs a \defemph{(decision) predicate} $D \leftarrow V(x; R_\aware)$ of size at most $d(n)$.
            \item Constructs a sequence of \defemph{randomness parity checks} $(C_{1}, \dots, C_{p}) \gets V(x; R_\aware)$, each of which is an affine function on $\bool^{(1-\tau)r} \to \bool$.\footnote{These are affine functions of the oblivious part only, but they encompass parities on \emph{all} of the randomness by including the parity of the aware part in the constant term.}
        \end{enumerate}
        \item Based on all of the randomness $R = \pars{R_\oblivious, R_\aware}$, produces a sequence of $q$ \defemph{proof locations} $I=(i^{(1)},\ldots,i^{(q)})$, where each $i^{(k)} \in [m(n)]$.
    \end{enumerate}
Think of $V$ as representing a probabilistic oracle machine that
queries proof oracle $\pi$ and gets answer symbols $\pi\restrict{I}$,
computes parity checks $P \coloneqq \pars{C_{1}\pars{R_\oblivious}, \dots, C_{p}\pars{R_\oblivious}}$
and accepts iff $D(\pi\restrict{I},P)=1$.
\item
We write $(I,P,D)\getsr V(x)$
to denote the queries, predicate, and parities generated by $V$ on input $x$.
To explicitly mention the random coins $R$, we write $(I,P,D) \gets V(x;R)$. 
\item
We call $p$ the \defemph{parity-check complexity} of $V$.
\end{definition}

We view ROP as a secondary property to RNL and rectangularity, and for simplicity we sometimes omit it from informal discussions (e.g., the title \cref{sec:pcps_to_rigid}). Indeed, in \cref{sec:rop} we show a simple way of adding adding ROP to any PCP while essentially increasing only its decision complexity.

\subsection{A description of a rectangular verifier with ROP}
All new PCP notions that are key to our work deal with a modified view of the run of a PCP verifier based on a partitioning of its randomness. Thus, let us take a moment to provide a streamlined description of a rectangular PCP verifier that has ROP. We hope this description helps the reader picture the new properties of our main PCP verifier, which we eventually construct in \cref{thm:combinedPCP}, and use in our construction of rigid matrices (\cref{sec:pcps_to_rigid}). Specifically, we wish to clarify the dependence of the queries, the decision predicate and the parity checks on the different parts of the randomness.

\begin{remark}[Rectangular verifier with ROP]
    Let $\tau \in \pars{0,1}$, and let $V$ be a $\tau$-rectangular $\pars{r,q,p,\ell^2,d,t}$-verifier with $\tau$-ROP. Assume further that the \defemph{shared} and \defemph{aware} parts of the randomness of $V$ are the same,\footnote{This will be the case in the rest of this work.} such that its randomness $R$ is partitioned as follows:
    \begin{align*}
        R_\oblivious &= \pars{R_\row, R_\column} \\ 
        R_\aware &= R_\shared \\
        R &= \pars{R_\row, R_\column, R_\shared} = \pars{R_\oblivious, R_\aware}.
    \end{align*}
    Since the \defemph{shared} and \defemph{aware} parts of the randomness are the same, we will refer only to the \defemph{shared} part of the randomness.

    The run of $V$ given input $x$ and proof oracle $\pi$ can be described as follows:
    \begin{enumerate}
        \item Sample \defemph{shared} randomness $R_\shared \in \bool^{\tau \cdot r}$. Based on it,
        \begin{enumerate}
            \item Construct a decision predicate $D \coloneqq D\pars{x; R_\shared}$ of size $d$.
            \item Construct randomness parity checks $\pars{C_1,\dots, C_p} \coloneqq \pars{C_1\pars{x; R_\shared}, \dots, C_p\pars{x; R_\shared}}$.
        \end{enumerate}
        \item Sample \defemph{row} randomness $R_\row \in \bool^{\pars{1-\tau}r/2}$. Construct proof row locations
        \begin{equation*}
            i_\row^{\pars{1}} \coloneqq i_\row^{\pars{1}}\pars{x; R_\row, R_\shared}, \dots,  i_\row^{\pars{q}} \coloneqq i_\row^{\pars{q}}\pars{x; R_\row, R_\shared}.
        \end{equation*}

        \item Sample \defemph{column} randomness $R_\column \in \bool^{\pars{1-\tau}r/2}$. Construct proof column locations
        \begin{equation*}
            i_\column^{\pars{1}} \coloneqq i_\column^{\pars{1}}\pars{x; R_\column, R_\shared}, \dots,  i_\column^{\pars{q}} \coloneqq i_\column^{\pars{q}}\pars{x; R_\column, R_\shared}.
        \end{equation*}
        
        \item Compute randomness parity checks $P\coloneqq \pars{C_1\pars{R_\row, R_\column}, \dots, C_p\pars{R_\row, R_\column} }$. Query the proof oracle to obtain $\pi \restrict I \coloneqq \pars{\pi_{i^{(1)}_\row, i^{(1)}_\column}, \dots, \pi_{i^{(q)}_\row, i^{(q)}_\column}}$.
        
        \item Output the result of the computation $D\pars{\pi\restrict{I}, P}$.
        
    \end{enumerate}
    
\end{remark}


\section{From rectangular PCPs to rigid matrices}\label{sec:pcps_to_rigid}

\newcommand{\rank}{\rho}

Alman and Chen \cite{AlmanC2019} show how to construct rigid matrices using efficient, short and smooth PCPs. In this section, we show how efficient, short, smooth and \emph{rectangular} PCPs can be used to obtain a simpler and stronger construction of rigid matrices.

\begin{lemma}\label{claim:pcps_to_rigid}
    Let $\tau\in (0,1)$ and  $\rank:\N\to \N$. 
Let $L \in \NTIME\pars{2^n} \setminus \NTIME\pars{O(2^n/n)}$ be a unary language. Suppose $L$ has a PCP with soundness error $s$ and a $\pars{r,q,\ell^2,d,t}$-restricted verifier $V$ over alphabet $\{0,1\}$ with $\ell\pars{n} = \poly(2^n)$ strictly monotone increasing. Assume further that $V$ is  $\tau$-rectangular and has $\tau$-ROP with parity-check complexity $p$, and that the \emph{shared} and the \emph{aware} parts of the randomness are the same. Lastly, assume that the following inequalities hold:
    \begin{enumerate}
        \item \label{another_assumption_on_parameters} $\frac{1 + \tau}{2} \cdot r + \log (t+\rank) \leq n - \log n$.
        \item \label{assumption_on_parameters} $q + p + r  - \Omega\pars{\frac{\pars{1-\tau}r }{ \log ((q + p) \rank)}} \leq n - \log n$.
        \end{enumerate}
    Then, there is an $\FNP$-machine such that, for infinitely many $N\in \N$, on input $1^N$, outputs an $N\times N$ matrix with $\F_2$ entries which is $\pars{\frac{1-s}{q} \cdot N^2, \rank(\ell^{-1}(N))}$-rigid.
\end{lemma}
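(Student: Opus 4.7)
The plan is to prove the contrapositive: assuming no such $\FNP$-machine exists, I will construct a nondeterministic algorithm for $L$ running in time $O(2^n/n)$, contradicting the separation $L \notin \NTIME(O(2^n/n))$. The natural candidate machine on input $1^N$ computes $n$ with $\ell(n) = N$ (via strict monotonicity), nondeterministically guesses $\pi \in \bool^{N^2}$, checks in $\poly(N)$ time that $V(1^n, \pi; R) = 1$ for all $2^r$ random strings, and outputs $\pi$ as an $N\times N$ matrix over $\F_2$. By perfect completeness and PCP soundness, its accepting paths correspond exactly to perfect proofs for $1^n$. Thus the negation of the conclusion says: for all but finitely many $n$ with $1^n \in L$, some perfect proof $\pi^\ast$ fails to be $(\tfrac{1-s}{q}\ell^2, \rho)$-rigid, so it decomposes as $\pi^\ast = PQ + E$ over $\F_2$ with $P \in \bool^{\ell \times \rho}$, $Q \in \bool^{\rho \times \ell}$, and at most $\tfrac{1-s}{q}\ell^2$ nonzeros in $E$.

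Assuming this negation, the fast algorithm nondeterministically guesses such $P, Q$ and deterministically computes the acceptance probability of $\pi := PQ$ (viewed as a string in $\bool^{\ell^2}$), accepting iff the probability is at least $s$. For correctness on yes-instances, smoothness of $V$ (which I assume, as in the sketch of \cref{sec:pcps_to_rigid_intro}) implies that replacing $\pi^\ast$ by $PQ = \pi^\ast - E$ decreases the acceptance probability by at most $q \cdot \tfrac{1-s}{q} = 1 - s$, so the correct guess yields acceptance at least $1 - (1-s) = s$. On no-instances, PCP soundness gives acceptance strictly less than $s$ for every guess.

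The main obstacle is computing the acceptance probability of $PQ$ in time $O(2^n/n)$. I combine $\tau$-rectangularity and $\tau$-ROP as follows. Partition the randomness as $R = (R_\row, R_\column, R_\shared)$ with $|R_\shared| = \tau r$. For each of the $2^{\tau r}$ settings of $R_\shared$ the predicate $D$ and parity checks $C_j$ are fully determined, the $k$-th query's row index depends only on $R_\row$, and its column index only on $R_\column$. I enumerate the $2^{q+p}$ possible inputs $(a,b) \in \bool^{q+p}$ to $D$, and for each accepting input I count the pairs $(R_\row, R_\column)$ realizing it. This count equals the number of $1$s in the Hadamard product (over $\F_2$) of $q + p$ indicator matrices indexed by $(R_\row, R_\column) \in \bool^{(1-\tau)r/2} \times \bool^{(1-\tau)r/2}$: for each query $k$, the indicator that $(PQ)_{i_\row^{(k)}(R_\row),\, i_\column^{(k)}(R_\column)} = a_k$ admits a rank-$\rho$ $\F_2$-factorization $\widetilde{P}_k \widetilde{Q}_k$ (the $R_\row$-th row of $\widetilde{P}_k$ is the $i_\row^{(k)}(R_\row)$-th row of $P$, and similarly for $\widetilde{Q}_k$); for each parity check $C_j(R_\row, R_\column) = A_j(R_\row) + B_j(R_\column) + b_j$, the indicator is separable into a function of $R_\row$ plus one of $R_\column$ plus a constant, hence has $\F_2$-rank $O(1)$. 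The standard tensor bound gives a Hadamard product of $\F_2$-rank at most $\rho^q \cdot O(1)^p = 2^{O((q+p)\log \rho)}$, and the Chan--Williams algorithm~\cite{ChanW2016} counts its $1$s in time $L^{2 - \Omega(1/((q+p)\log\rho))}$ where $L = 2^{(1-\tau)r/2}$.

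Summing costs, the counting phase runs in time $2^{\tau r} \cdot 2^{q+p} \cdot L^{2 - \Omega(1/((q+p)\log\rho))} = 2^{r + q + p - \Omega((1-\tau)r/((q+p)\log\rho))}$, which is $O(2^n/n)$ by assumption~\ref{assumption_on_parameters}. The precomputation of the $\widetilde{P}_k, \widetilde{Q}_k$ factors across all $R_\shared$ and all queries totals $2^{\tau r} \cdot q \cdot L \cdot (t + \rho) = 2^{(1+\tau)r/2} \cdot q(t + \rho)$, which is $O(2^n/n)$ by assumption~\ref{another_assumption_on_parameters}. The main technical hurdle I anticipate is the careful rank-budget accounting in the Hadamard product (ensuring the combined $\F_2$-rank really is $2^{O((q+p)\log\rho)}$, and that the Chan--Williams exponent $\epsilon(\rho')$ genuinely scales as $\Omega(1/\log\rho')$) so that the two hypotheses yield a running time of $O(2^n/n)$.
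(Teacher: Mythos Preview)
Your overall strategy---define the natural $\FNP$-machine, assume for contradiction that every accepted proof for large $n$ is non-rigid, guess the low-rank factors $P,Q$, and compute the acceptance probability of $PQ$ fast enough to violate the time hierarchy---matches the paper exactly. The gap is in how you compute that acceptance probability for fixed $R_\shared$.

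You propose to enumerate all $2^{q+p}$ possible predicate inputs $(a,b)$ and, for each, count ones in the entrywise (Hadamard) product over $\F_2$ of the $q+p$ indicator matrices. But the rank of a Hadamard product is bounded only by the \emph{product} of the ranks, so you end up with rank $\rho' = 2^{O((q+p)\log\rho)}$, and the Chan--Williams saving in the exponent is $\Omega(1/\log\rho')=\Omega\bigl(1/((q+p)\log\rho)\bigr)$. Assumption~\ref{assumption_on_parameters}, however, has $\log((q+p)\rho)$ in the denominator, not $(q+p)\log\rho$; since $(q+p)\log\rho\ge\log((q+p)\rho)$ in general, the term you subtract is smaller than the one the hypothesis guarantees, and assumption~\ref{assumption_on_parameters} does \emph{not} imply the bound you need. (A secondary issue: you also omit the cost of building the rank-$\rho'$ decomposition that Chan--Williams takes as input; writing down the Khatri--Rao factors already costs $L\cdot\rho'$, which is not covered by assumption~\ref{another_assumption_on_parameters}.)

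The paper avoids the rank blow-up by taking the real Fourier expansion $D(y)=\sum_{K\subseteq[q+p]}\widehat{D}(K)\,(-1)^{\oplus_{i\in K}y_i}$ and using linearity of expectation: computing the acceptance probability reduces to computing $\numones\bigl(\bigoplus_{k\in K}A^{(k)}B^{(k)}\bigr)$ for each $K\subseteq[q+p]$. An $\F_2$-\emph{sum} of at most $q+p$ matrices of rank $\le\rho$ (or $\le3$ for the parity-check matrices) has rank at most $(q+p)\rho$---additive rather than multiplicative---and this is precisely what makes $\log((q+p)\rho)$ the right quantity in assumption~\ref{assumption_on_parameters}. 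Replacing your Hadamard/AND step by this Fourier/XOR step closes the gap; with $q+p=O(1)$ (as in the application) your approach would still recover the main theorem, but it does not prove the lemma as stated.
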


To prove \cref{claim:pcps_to_rigid}, we make use of the following fast algorithm that counts the number of $1$'s in a low rank matrix (given its low rank decomposition).
\begin{theorem}[\cite{ChanW2016,AlmanC2019}]
\label{thm:matrix_mul}
Given two matrices $A\in \F_2^{n\times \rho}$ and $B \in \F_2^{\rho \times n}$ where $\rho = n^{o(1)}$, there is a (deterministic) algorithm that computes the number of $1$'s in the matrix $A\cdot B$ in time $T(n,\rho) \coloneqq n^{2 - \Omega\left(\frac{1}{\log \rho}\right)}$.
\end{theorem}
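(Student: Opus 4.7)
The plan is to construct the claimed $\FNP$ machine $M$ as follows: on input $1^N$, $M$ computes $n\coloneqq\ell^{-1}\pars{N}$ (well-defined by strict monotonicity of $\ell$; if $N$ is not in the range of $\ell$, $M$ rejects), non-deterministically guesses a matrix $\pi\in\F_2^{N\times N}$, and deterministically verifies that the PCP verifier $V\pars{1^n}$ accepts $\pi$ on \emph{every} random string (taking $2^r\cdot t=\poly\pars{N}$ time). On success, $M$ outputs $\pi$. By perfect completeness, $M$ accepts $1^N$ iff $1^n\in L$, and in that case every accepting path outputs a valid PCP proof for $1^n$. It therefore suffices to show that, for infinitely many $n$ with $1^n\in L$, every valid PCP proof of $1^n$ is $\pars{\tfrac{1-s}{q}N^2,\rho\pars{n}}$-rigid.

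I would prove this by contradiction, invoking $L\notin\NTIME\pars{O(2^n/n)}$. Suppose that for all but finitely many $n$ with $1^n\in L$, some valid PCP proof $\pi^\ast$ fails to be $\pars{\tfrac{1-s}{q}N^2,\rho}$-rigid; i.e., $\pi^\ast$ is $\tfrac{1-s}{q}$-close in relative Hamming distance to a matrix $\tilde\pi\coloneqq PQ$ for some $P\in\F_2^{N\times\rho}$, $Q\in\F_2^{\rho\times N}$. Under this assumption I would design an $\NTIME\pars{O(2^n/n)}$ algorithm for $L$ as follows: on input $1^n$, non-deterministically guess $P$ and $Q$, deterministically compute the acceptance probability of $V$ on the proof $\tilde\pi=PQ$, and accept iff this probability is at least $s$. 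Completeness follows from \emph{smoothness-induced tolerance}: since $\tilde\pi$ is $\tfrac{1-s}{q}$-close to a fully-accepting $\pi^\ast$ and $V$ makes $q$ queries, a union bound over the queries (each uniformly distributed over the proof locations by smoothness) yields acceptance probability $\geq 1-q\cdot\tfrac{1-s}{q}=s$. Soundness is immediate from the PCP's soundness error $s$.

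The technical crux is the deterministic computation of the acceptance probability of $\tilde\pi=PQ$ in time $O(2^n/n)$, which leverages both the rectangular and ROP structure of $V$ together with \cref{thm:matrix_mul}. I would iterate over the $2^{\tau r}$ values of the shared randomness $R_\shared$; by ROP this fixes the decision predicate $D$ and the affine parity checks $C_1,\dots,C_p$. For a fixed $R_\shared$, by rectangularity each of the $q$ bits read by $V$ is $\pars{P_k\cdot Q_k}\brackets{R_\row,R_\column}$, where $P_k$ (resp.\ $Q_k$) is the $2^{(1-\tau)r/2}\times\rho$ (resp.\ $\rho\times 2^{(1-\tau)r/2}$) matrix gathering the rows of $P$ (resp.\ columns of $Q$) selected by $V_\row$ (resp.\ $V_\column$) at query $k$; each parity check, being affine in $(R_\row,R_\column)$, likewise yields a rank-$O(1)$ matrix over $\F_2$. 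Expanding $D$ as an $\F_2$-multilinear polynomial in its $q+p$ inputs and applying the Khatri--Rao factorization to each pointwise product, the acceptance-indicator matrix decomposes as $A\cdot B$ over $\F_2$ with inner dimension at most $(\rho+1)^q\cdot 4^p = 2^{O(\log((q+p)\rho))}$. Feeding this decomposition into \cref{thm:matrix_mul} counts the accepting $(R_\row,R_\column)$ pairs in time $2^{(1-\tau)r\pars{1-\Omega(1/\log((q+p)\rho))}}$ per $R_\shared$, and constructing $A,B$ takes $2^{(1-\tau)r/2}\cdot\pars{t+\rho}$ time per $R_\shared$ (up to lower-order factors for evaluating the verifier's query functions and looking up rows of $P,Q$).

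Summing over the $2^{\tau r}$ shared-randomness choices, the total construction time is $\leq 2^{(1+\tau)r/2}\cdot\pars{t+\rho}\leq O(2^n/n)$ by Assumption~1, and the total counting time is $\leq 2^{q+p+r-\Omega\pars{(1-\tau)r/\log((q+p)\rho)}}\leq O(2^n/n)$ by Assumption~2. Hence the derived algorithm decides $L$ in $\NTIME\pars{O(2^n/n)}$, yielding the desired contradiction. The main obstacle is arranging the inner dimension of the acceptance-matrix decomposition to be $2^{O(\log((q+p)\rho))}$ rather than the naive $\rho^q$: the Khatri--Rao-plus-multilinear-expansion construction (together with the constancy of $q$ and $p$ in the intended application) achieves this, so the two assumptions of the lemma align tightly with the two dominant terms in the running-time analysis.
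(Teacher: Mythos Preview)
First, a mismatch: the stated theorem (\cref{thm:matrix_mul}) is a black-box result cited from \cite{ChanW2016,AlmanC2019}; the paper does not prove it. Your proposal is in fact an argument for \cref{claim:pcps_to_rigid}, which \emph{uses} \cref{thm:matrix_mul} as a tool. I will evaluate it as such.

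As a proof of \cref{claim:pcps_to_rigid}, your overall architecture coincides with the paper's: the same $\FNP$ machine, the same contradiction via an $\NTIME(O(2^n/n))$ algorithm that guesses a low-rank factorisation, the same use of smoothness for tolerance, and the same outer loop over $R_\shared$. The divergence is in the ``fast counting'' step. The paper expands $D$ over the \emph{reals} in the $\{\pm 1\}$-Fourier basis and, for each of the $2^{q+p}$ subsets $K\subseteq[q+p]$, counts ones in $\bigoplus_{k\in K}A^{(k)}B^{(k)}$, which has inner dimension at most $(q+p)\rho$. This is why Assumption~2 carries the additive $q+p$ (from the $2^{q+p}$ subsets) and the denominator $\log((q+p)\rho)$.

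Your route---ANF over $\F_2$ plus Khatri--Rao---is a genuine alternative, and it is \emph{correct in principle}: the acceptance-indicator matrix is indeed an $\F_2$-product $A\cdot B$. But your running-time accounting does not match the lemma's hypotheses. The inner dimension you obtain is $(\rho+1)^q\cdot 4^p$, whose logarithm is $\Theta(q\log\rho+p)$, not $O(\log((q+p)\rho))$; the equality you assert fails unless $q$ and $p$ are bounded. Consequently the per-$R_\shared$ counting time you claim, $2^{(1-\tau)r(1-\Omega(1/\log((q+p)\rho)))}$, does not follow from \cref{thm:matrix_mul} applied to your decomposition, and the total bound $2^{q+p+r-\Omega(\cdots)}$ you invoke from Assumption~2 has a $2^{q+p}$ prefactor that your single-call approach never produces. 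A second, related slip: merely writing down your $A$ (which has $(\rho+1)^q\cdot 4^p$ columns) already costs $2^{(1-\tau)r/2}\cdot(\rho+1)^q\cdot 4^p$, not the $2^{(1-\tau)r/2}\cdot(t+\rho)$ you budget.

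In the paper's application $q,p=O(1)$, so your approach would go through there; but as a proof of \cref{claim:pcps_to_rigid} in the generality stated, the decomposition must keep the inner dimension polynomial in $(q+p)\rho$, which is exactly what the paper's per-subset Fourier approach achieves and your Khatri--Rao approach does not.
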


We now prove \cref{claim:pcps_to_rigid}.
\begin{proof}[{Proof of \cref{claim:pcps_to_rigid}}]
Let $V$ be the postulated PCP verifier for $L$.
The $\FNP$-machine computing rigid matrices (infinitely often) runs as follows. On input $1^N$,
\begin{enumerate}
    \item If $N = \ell\pars{n}$ for some $n \in \N$:
    \footnote{ 
    This can be done in time $O(n) = O(\log N)$ by finding the first integer $n$ such that $\ell\pars{n}\ge N$, and verifying that $\ell\pars{n}=N$.}
    \begin{enumerate}
        \item Guess an $N \times N$ matrix denoted by $\pi$.
        \item Emulate $V^\pi \pars{1^n}$ on all possible $2^r$ random coins. If $V$ accepted on all randomness, \defemph{accept} and output $\pi$; else, \defemph{reject}.
    \end{enumerate}
    \item Else ($N \neq \ell\pars{n}$ for any $n$), \defemph{reject}.
\end{enumerate}
This machine runs in time $O\pars{2^r \cdot t} = \poly\pars{2^n} = \poly(\ell(n)) = \poly(N)$, and whenever $N = \ell\pars{n}$ for some $n$ such that $1^n \in L$, one of its non-deterministic guesses lead to acceptance by completeness of the PCP. Note that there could be multiple non-deterministic guesses that lead to acceptance. We show that for infinitely many  $N = \ell\pars{n}$ such that $1^n \in L$, any guessed $\pi$ that leads to acceptance is $\pars{\frac{1-s}{q} \cdot N^2, \rank}$-rigid for $\rank \coloneqq \rank\pars{n} = \rank\pars{\ell^{-1}\pars{N}}$.

Assume towards contradiction that this is not the case.
Then, there exists an $n_0$ such that for any $n\ge n_0$, $1^n \in L$ if and only if there exists a proof $\pi$ (for the verifier $V$) which is $\frac{1-s}{q}$-close to a rank $\rank$ matrix. 
We describe a non-deterministic algorithm that decides $L$ in time $O(2^n / n)$ -- a contradiction. Given input $1^n$,

\begin{enumerate}
	\item Guess matrices $A$ and $B$ of dimensions $\ell \times \rank$ and $\rank \times \ell$ respectively.
	(The right guess is when $A\cdot B$ is $\delta$-close to $\pi$; by smoothness, the acceptance probability of $A\cdot B$ will then be close to that of $\pi$, and the task is reduced to estimating the acceptance probability of $A\cdot B$.)
	\item Compute the acceptance probability of $A\cdot B$ by $V$ as follows. For each sequence of coins in the shared part of the randomness $R_\shared \in \bool^{\tau \cdot r}$:
	
	\begin{enumerate}
		\item \label{compute_decision_circuit} Compute the predicate $D\coloneqq D(R_\shared)$ and randomness parity checks $C_j \coloneqq C_j(R_\shared)$, $j\in [p]$.
		\item \label{compute_mapping_row_and_column} \textbf{Prepare queries into proof:} For each $k\in [q]$,
		\begin{enumerate}
			\item {\bf Prepare left matrices:} Compute the $2^{(1-\tau)r/2} \times \rank$ matrix $A^{(k)}$ whose $R_\row$-th row is just the row indexed by $i_{\row}^{(k)}(R_{\row}, R_{\shared})$ in $A$, for any $R_{\row}\in \{0,1\}^{(1-\tau)r/2}$.
			\item {\bf Prepare right matrices:}  Compute the $\rank \times 2^{(1-\tau)r/2}$ matrix $B^{(k)}$ whose $R_\column$-th column is just the column indexed by $i_{\column}^{(k)}(R_{\column}, R_{\shared})$ in $B$, for any $R_{\column}\in \{0,1\}^{(1-\tau)r/2}$.\\
		\end{enumerate}
		\vspace{-0.5cm}
		Observe that $\pars{A^{(k)} \cdot B^{(k)}}_{R_{\row}, R_{\column}}$ is exactly the $k$-th bit read by the verifier on randomness $(R_{\row},R_{\column}, R_{\shared})$.
		\item \label{compute_parity_checks}\textbf{Prepare randomness parity checks:} For each $j\in [p]$,
		\begin{enumerate}
				    \item Compute the $2^{\pars{1-\tau}r/2} \times 3$ matrix $A^{(q+j)}$ and the $3 \times 2^{\pars{1-\tau}r/2}$ matrix $B^{(q+j)}$, for which $\pars{A^{(q+j)} \cdot B^{(q+j)}}_{R_\row, R_\column} = C_{j}(R_\row, R_\column)$. Such matrices exist and can be computed in time $O(r\cdot 2^{\pars{1-\tau}r/2})$, as described in \Cref{producing_randomness_matrices}.
		\end{enumerate}
		\item \label{fast_counting} {\bf Fast counting:}
		Fourier analysis tells us that there are (unique) coefficients $\braces{\widehat{D}\pars{K}}_{K \subseteq \brackets{q+p}}$ such that for any $y \in \bool^{q+p}$,
		\begin{equation*}
		     D(y_1, \dots, y_{q+p}) = \sum_{K \subseteq \brackets{q+p}} \widehat{D}\pars{K} (-1)^{\bigoplus_{i \in K} y_i}.
		 \end{equation*}
		 In particular, by linearity of expectation, to compute the expected value (which is also the acceptance probability) of $D(y)$ over a random $y$ sampled from some distribution, it suffices to compute the expected value of all parity predicates on over $y$, namely $\Exp_y\brackets{\bigoplus_{i \in K}y_i}$ for all $K \subseteq \brackets{q+p}$.
		 
		This observation is useful because the final task is to compute the acceptance probability of the predicate $D$ on inputs $\pars{A^{(1)} \cdot  B^{(1)}}_{R_\row, R_\column}, \ldots, \pars{A^{(q + p)} \cdot B^{(q + p)}}_{R_\row, R_\column}$ for uniformly random $R_\row$ and $R_\column$. Thus, it suffices to compute the acceptance probability of all parity predicates, i.e., the number of $1$'s in $\bigoplus_{k \in K} A^{(k)} \cdot B^{(k)}$ for each $K \subseteq \brackets{q + p}$.

		 For each $K$, note that $\bigoplus_{k \in K} A^{(k)} \cdot  B^{(k)}$ is the product of a $2^{(1-\tau)r/2}\times (|K|\rank)$ and a $(|K|\rank)\times 2^{(1-\tau)r/2}$ matrix over $\F_2$ (namely, concatenate the rows of the $|K|$ matrices $\{A^{(k)}\}_{k\in K}$ and concatenate the columns of the $|K|$ matrices  $\{B^{(k)}\}_{k\in K}$).
		  Thus, for each $K$, computing the acceptance probability of $\bigoplus_{k \in K} A^{(k)} \cdot B^{(k)}$ can be done with the fast counting algorithm for low-rank matrices of \cref{thm:matrix_mul}. Its runtime is
		  \[
		  T(2^{(1-\tau) r/2}, \pars{q+p} \cdot \rank) = 
			\pars{2^{\pars{1-\tau}r / 2}}^{\pars{2-\Omega \pars{\frac{1}{\log \pars{\pars{q+p} \cdot \rank}}}}} = 2^{\pars{1-\tau} r  - \Omega\pars{\frac{\pars{1-\tau}r} {\log \pars{\pars{q+p} \cdot \rank} }}}
			\]
	\end{enumerate}
	\item We have thus computed the acceptance probability of $A\cdot B$ by the verifier $V$. If this probability is at least the soundness error $s$, decide that the input $1^n$ is in $L$. Otherwise, decide that the input is not in $L$.
\end{enumerate}
We claim that the algorithm correctly decides $L$, for input length $n\ge n_0$. Indeed, if $1^n \in L$ then, when guessing $A$ and $B$ such that $A\cdot B$ is $(\frac{1-s}{q})$-close to the correct proof $\pi$ for $1^n$ (such $A$ and $B$ exist by our assumption towards contradiction), smoothness of the verifier $V$ implies that its acceptance probability is at least $1-q\cdot(\frac{1-s}{q}) = s$. On the other hand, if $1^n \notin L$, then soundness of $V$ implies any guessed $A$ and $B$ leads to rejection with probability strictly less than $s$.
Since $n_0$ is constant we can hard-wire the values of $L$ on $1^{n}$ for $n<n_0$ so that the algorithm correctly decides $L$ on all inputs.

As for its runtime, observe that \cref{compute_decision_circuit} takes time $O(t)$, \cref{compute_mapping_row_and_column} takes time $O( 2^{(1-\tau)r/2}\cdot (t+\rank))$ and \cref{compute_parity_checks} takes time $O(r\cdot 2^{(1-\tau)r/2})$. Since $t\geq r$, these are dominated by $O(2^{(1-\tau)r/2}\cdot (t+\rank))$. Therefore the runtime of the algorithm is
\begin{align*}
	&O\pars{2^{\tau \cdot r} \cdot \left( 
	2^{\frac{(1-\tau)r}{2}}\cdot (t+\rank)
	\;+\; 2^{q+p}\cdot 2^{\pars{1-\tau} r - \Omega\pars{\frac{\pars{1-\tau}r}{\log \pars{\pars{q+p} \cdot \rank}}}}
	\right)}\\
	&=
	O\pars{2^{\frac{(1 + \tau)r}{2}} \cdot  (t+\rank)  \;+\; 2^{q + p   + r  - \Omega\pars{\frac{\pars{1-\tau}r}{\log \pars{\pars{q+p} \cdot \rank}}}}}.
\end{align*}
By the assumption on the parameters of the PCP, this is at most $O(2^n/n)$ -- a contradiction.
\end{proof}

Now that we formalized a connection between rectangular PCPs and rigid matrices, let us introduce the PCP that we construct in the remainder of this work, and show how it implies the rigid matrix construction asserted in \cref{thm:main}.

\begin{theorem}[{\cref{thm:combinedPCP}, instantiated}]
\label{thm:combinedPCP_simplified}
For any $L \in \NTIME\pars{2^n}$, and constants $s\in (0,1/2)$ and $\tau \in (0,1)$, $L$ has a PCP verifier over alphabet $\{0,1\}$ with the following parameters:
\begin{itemize}
    \item Randomness complexity $r\pars{n} = n +  O(\log n)$
    \item Proof length $\ell(n)^2 = 2^n\cdot \poly(n)$.
    \item Soundness error $s$.
    \item Decision,  query and parity-check complexities all $O(1)$.
    \item Verifier runtime $t(n) = 2^{O(\tau n)}$.
    \item The verifier is $\tau$-rectangular and has $\tau$-ROP. Furthermore, the shared and the aware parts of the randomness is the same.
\end{itemize}
\end{theorem}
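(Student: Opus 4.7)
The plan is to construct the PCP in four stages, following the roadmap laid out in the Organization section. The base object is the Reed--Muller-based short and efficient robust PCP of Ben-Sasson~\etal~\cite{BenSassonGHSV2006,BenSassonGHSV2005}. The first and most delicate step is to re-analyze this outer verifier and establish that it satisfies the rectangular neighbor-listing property of \cref{def:RNL}. The verifier is a composition of low-degree tests and a sum-check on polynomials over a product domain $\F^m = \F^{m/2} \times \F^{m/2}$. The low-degree component is naturally perfectly rectangular: a point--line test through $(x_1,x_2)\in \F^{m/2} \times \F^{m/2}$ splits coordinate-wise, analogously to the BLR warm-up. The sum-check component, however, is not perfectly rectangular: a sum-check query specifies both a point and a ``suffix'' that couples the row and column halves. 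I would absorb this coupling into a $\tau$-sized shared randomness slice $R_\shared$ whose only role is to carry the sum-check suffix and any other non-factoring choices, so that once $R_\shared$ is fixed the remaining randomness cleanly splits as $(R_\row, R_\column)$. Establishing RNL rather than mere rectangularity additionally requires constructing row and column agents $A_\row, A_\column$ that enumerate the neighbors of a configuration in a zipped, synchronized order; this follows from the fact that, once $R_\shared$ is fixed, the preimage of any proof location under the outer query map factors as a product of row-preimages and column-preimages that can be listed in the same canonical order on both sides.

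Next, I would invoke the ROP transformation of \cref{sec:rop} on the robust outer PCP with RNL. The idea is to rewrite the decision predicate as a bounded function of $O(1)$ parity checks over the oblivious randomness, with both the choice of parities and the top-level decision depending only on the aware (shared) part. Since ROP leaves query locations untouched, RNL is preserved. The third step is composition (\cref{section:composition}): compose the resulting RNL--ROP robust outer with Mie's short efficient constant-query PCP of proximity~\cite{Mie2009}. The inner verifier need not itself be rectangular, but because it operates on a single outer configuration of size polylogarithmic in $2^n$, its own randomness and queries can be absorbed into the shared/aware part of the composed randomness while preserving the row/column split of the outer queries. This yields a PCP of \emph{constant} query complexity at soundness error $s$ that still has RNL and ROP. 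Finally, I apply the smoothification transformation of \cref{sec:smoothification} to convert RNL into smoothness together with (perfect) rectangularity, carrying ROP along for the ride.

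Tracking parameters: the outer PCP contributes the nearly-linear proof length $2^n\cdot\poly(n)$ and randomness $n+O(\log n)$; composition collapses query and decision complexity to $O(1)$; and the verifier runtime $2^{O(\tau n)}$ comes from the shared-randomness slice of size $\tau r$ that the RNL agents (and smoothification) must enumerate, while the row/column agents themselves run in time $\poly(n)$ for each fixed shared value. The main obstacle is the first step: the BGHSV outer PCP was not designed with rectangularity or neighbor listing in mind, so RNL must be extracted by rewriting essentially every randomness-to-query step of its construction in row/column/shared factored form, and then exhibiting explicit agents $A_\row, A_\column$ whose outputs coincide entry-by-entry after zipping as required by \cref{def:RNL}. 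The remaining three transformations are modular: they act on whatever RNL structure the outer verifier provides, without requiring a re-examination of its internals.
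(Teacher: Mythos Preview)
Your four-stage pipeline (show BGHSV outer has RNL $\to$ add ROP $\to$ compose with Mie's PCPP $\to$ smoothify) is exactly the paper's approach, assembled in \cref{thm:combinedPCP} and then instantiated with $T(n)=2^n$ and $m=\Theta(1/\tau)$. A few details are off, however. Smoothification (\cref{thm:smoothification}) preserves $\tau$-rectangularity rather than yielding \emph{perfect} rectangularity---the shared randomness is precisely what survives, which is why the theorem only claims $\tau$-rectangularity. The ROP transformation of \cref{sec:rop} does not produce $O(1)$ parity checks: it encodes the full outer randomness via a linear error-correcting code, yielding $q_\outer = 2^{\Theta(\tau n)}$ parities and $0$-ROP; it is only \emph{composition} that collapses the parity count to $O(1)$, since the inner PCPP makes $O(1)$ queries, at most $O(1)$ of which land on parity positions of the outer input oracle. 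The runtime $2^{O(\tau n)}$ is driven by the outer query complexity $T(n)^{1/m}$ via the choice $m=\Theta(1/\tau)$ (and by the factor $2^{r_\inner}$ in the composite RNL agents, of the same order), not by enumerating the shared randomness. Lastly, in the paper's analysis the non-rectangularity of the outer verifier arises from a cyclic \emph{shift} on $\F^m$ (so the shared block absorbs the boundary coordinates $R_2,R_{(m\pm 1)/2},R_m$ and the line direction $R_y$) rather than from a sum-check suffix, though your high-level intuition---a small amount of row/column coupling pushed into $R_\shared$---is correct.
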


\cref{thm:combinedPCP_simplified} is obtained by instantiating \cref{thm:combinedPCP} with parameters $T(n) \coloneqq 2^n$ and $m \coloneqq \Omega(1/\tau)$. Next, we restate the rigid matrix construction asserted in \cref{thm:main} and reckon that it is obtained by combining \cref{thm:combinedPCP_simplified} and \cref{claim:pcps_to_rigid}.

\begin{corollary}[{\cref{thm:main}, restated}]\label{cor:main_thm}
    \maintheoremstatement
\end{corollary}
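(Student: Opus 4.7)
The plan is to derive \cref{cor:main_thm} by combining \cref{thm:combinedPCP_simplified} with \cref{claim:pcps_to_rigid}, where the parameters are chosen carefully to maximize the rank bound. The skeleton is immediate: invoke the non-deterministic time-hierarchy theorem \cite{Zak1983} to obtain a unary language $L \in \NTIME(2^n) \setminus \NTIME(O(2^n/n))$; apply \cref{thm:combinedPCP_simplified} to $L$ to obtain a smooth, rectangular PCP with ROP; then feed it to \cref{claim:pcps_to_rigid} with an appropriate rank function $\rho(n)$. The rest of the argument is a parameter-chasing exercise.

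The core task will be to verify the two numerical hypotheses of \cref{claim:pcps_to_rigid} for our PCP. With the choices $s \in (0, 1/2)$ arbitrary, $\tau>0$ a small constant, and $\rho(n) = 2^{c \cdot n / \log n}$ for a sufficiently small constant $c>0$ to be determined, the PCP obtained from \cref{thm:combinedPCP_simplified} has $r(n) = n + O(\log n)$, $q, p = O(1)$, and $t(n) = 2^{O(\tau n)}$. Plugging into assumption~(1) of \cref{claim:pcps_to_rigid} gives
\begin{equation*}
    \frac{1+\tau}{2}\,(n + O(\log n)) + \log\bigl(2^{O(\tau n)} + \rho(n)\bigr) \;=\; \Bigl(\frac{1+\tau}{2} + O(\tau)\Bigr) n + O(\log n),
\end{equation*}
which is at most $n - \log n$ once $\tau$ is chosen small enough (note $\log \rho(n) = o(n)$ is negligible here). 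For assumption~(2), the dominant quantity is
\begin{equation*}
    q + p + r - \Omega\!\left(\frac{(1-\tau)\,r}{\log((q+p)\rho(n))}\right) \;=\; n + O(\log n) - \Omega\!\left(\frac{n}{\log \rho(n)}\right),
\end{equation*}
and substituting $\log \rho(n) = c\cdot n / \log n$ turns the subtracted term into $\Omega(\log n / c)$, which for $c$ small enough dominates $O(\log n)$, yielding the required bound of $n - \log n$.

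With these parameters, \cref{claim:pcps_to_rigid} produces an $\FNP$-machine that, on infinitely many inputs $1^N$ with $N = \ell(n)$, outputs an $N \times N$ matrix over $\F_2$ which is $\bigl(\tfrac{1-s}{q} N^2,\, \rho(n)\bigr)$-rigid. Since $q$ and $s$ are absolute constants, the distance $\tfrac{1-s}{q} N^2$ is $\delta N^2$ for some constant $\delta \in (0,1)$. It remains to translate the rank bound into the form $2^{\log N / \Omega(\log \log N)}$. Because $\ell(n)^2 = 2^n \cdot \poly(n)$, we have $\log N = n/2 + O(\log n)$ and hence $\log \log N = \log n + O(1)$, so
\begin{equation*}
    \log \rho(n) \;=\; \frac{c\, n}{\log n} \;=\; \Theta\!\left(\frac{\log N}{\log \log N}\right),
\end{equation*}
giving $\rho(n) = 2^{\log N / \Omega(\log \log N)}$ as required.

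The only real obstacle is bookkeeping: picking $\tau$ small enough to beat the $O(\tau n)$ loss from the verifier running time in assumption~(1), and simultaneously picking $c$ small enough so that the $\Omega(n/\log \rho) = \Omega((\log n)/c)$ savings in assumption~(2) overwhelm the unavoidable $O(\log n)$ slack; both constraints are straightforward since $\tau$ and $c$ can be chosen independently of each other and of $n$. Everything else — completeness, the $\FNP$ form of the output, and the fact that the set of ``good'' input lengths is infinite — is already guaranteed by \cref{claim:pcps_to_rigid}.
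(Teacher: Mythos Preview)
Your proposal is correct and follows essentially the same approach as the paper's proof: invoke \cite{Zak1983} for the hard unary language, instantiate \cref{thm:combinedPCP_simplified} with a small constant $\tau$, set $\rho(n) = 2^{\Theta(n/\log n)}$, and verify the two numerical hypotheses of \cref{claim:pcps_to_rigid} before translating back via $\ell^{-1}(N) = \Theta(\log N)$. The only cosmetic difference is that the paper writes $\rho(n) = 2^{n/(K\log n)}$ with $K$ large rather than your $2^{cn/\log n}$ with $c$ small, and bounds assumption~(1) as $(1/2 + O(\tau))n + O(n/\log n)$ rather than absorbing $\log\rho$ into the dominant $O(\tau n)$ term as you do.
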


\begin{proof}
From \cite{Zak1983}, there exists a unary language $L \in \NTIME\pars{2^n} \setminus \NTIME\pars{O(2^n/n)}$.  Let $L$ be any such language.
Fix $\rank(n) \coloneqq 2^{n/(K\log n)}$ for a large enough constant $K$ to be determined later. We verify that the parameters of the PCP of \cref{thm:combinedPCP_simplified}, for a sufficiently small $\tau\in (0,1)$, satisfy all the conditions from \cref{claim:pcps_to_rigid} for this $\rank$. Let $q$ and $p$ denote the query complexity and parity-check complexity of the PCP respectively. Set $\delta = (1-s)/q$. Now, for small enough $\tau$,
$$\pars{\tfrac{1 + \tau}{2}} \cdot r + \log (\rank + t) \leq (1/2 + O(\tau))n + O(n/\log n) < n - \log n,$$
as required in \cref{another_assumption_on_parameters}. Also, the parameters satisfy \cref{assumption_on_parameters}, because
$$q + p + r  - \Omega\pars{\frac{\pars{1-\tau}r }{ \log ((q + p) \rank)}}  \le n  + O(\log n) - \Omega(K \log n) < n - \log n,$$ where the last inequality holds for a suitable choice of the constant $K$.
As the proof length is $\ell(n)^2 = 2^n\cdot \poly(\log n)$, we have $\ell^{-1}(N) = \Theta(\log N)$. Therefore, $\rho(\ell^{-1}(N)) = 2^{\log N/\Omega(\log \log N)}$ and the corollary follows from \cref{claim:pcps_to_rigid}.
\end{proof}

\begin{remark}
	The only bottleneck preventing \cref{claim:pcps_to_rigid} from giving rigid matrices for polynomial ranks, $\rank=N^{\Omega(1)}$, via \cref{thm:combinedPCP_simplified} is the runtime of the counting algorithm used in \Cref{fast_counting}. That is, such results would be obtained if there was an algorithm for counting the number of nonzero entries in a rank $N^{\Omega(1)}$ matrix (of dimensions $N \times N$) that ran in time slightly better than $O\pars{N^2}$. Our reduction would go through even if the algorithm's answer was only \emph{approximately} correct (while losing the respective approximation factor in the distance of the resulting matrices from low rank ones). In fact, this seems to be the only bottleneck even up to rank $\rank = N^{1-O(\tau)}$. 
\end{remark}


\section{From RNL to smooth and rectangular PCPs}\label{sec:smoothification}
\newcommand{\idx}{i}

\newcommand{\new}{\mathrm{new}}
\newcommand{\old}{\mathrm{old}}
\newcommand{\oldpi}{{\pi_\old}}
\newcommand{\oldV}{{V_\old}}
\newcommand{\newpi}{{\pi_\new}}
\newcommand{\newV}{{V_\new}}

In this section, we show how any PCP verifier with RNL can be made into a smooth and rectangular PCP. This conversion preserves the ROP.

\begin{theorem}\label{thm:smoothification}

Suppose $L$ has a PCP with verifier $\oldV$ as described in \cref{fig:smoothification}, and $\tau$-RNL and $\tau$-ROP such that the \defemph{shared} and  \defemph{aware} parts of the randomness are the same.
Then for any $\mu \in \pars{0,1}$,
$L$ has a PCP with verifier $\newV$ as described in \cref{fig:smoothification} which is \emph{smooth}, $\tau$-rectangular and $\tau$-ROP such that the \defemph{shared} and  \defemph{aware} parts of the randomness are the same.

\begin{table}[ht]
    \centering
    \begin{tabular}{|c|c|c|}
        \hline 
        Complexity & $\oldV$ & $\newV$\tdelim
        \hline 
        Alphabet size &
        $2$ &
        $2$ \tdelim
        Soundness error & $s$ & $s + \mu$ \tdelim
        Randomness & $r$ & $r$ \tdelim
        Query & $q$ & $\poly\pars{q / \mu}$ \tdelim
        Parity-check & $p$ & $p$ \tdelim
        Proof length & $m$ & $2^r \cdot q$ \tdelim
        Decision & $d$ & $d + \poly\pars{q / \mu}$ \tdelim
        Runtime & $t$ & $t + q \cdot \poly\pars{t_\RNL}$ \tdelim
        \end{tabular}
    \caption{The complexities of the original $\oldV$ and the smooth verifier $\newV$. $t_\RNL$ is the running time of the row and column neighbor-listing agents of $\oldV$.}
    \label{fig:smoothification}
\end{table}

\end{theorem}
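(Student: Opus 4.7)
The plan is to define $\newpi$ as a ``duplicated'' version of $\oldpi$ indexed by the configurations of $\oldV$. Concretely, $\newpi$ will have length $2^r \cdot q$, indexed by pairs $(R, k) \in \bool^r \times [q]$, with the intended honest value $\newpi[R, k] = \oldpi[i^{(k)}(R)]$. To make $\newpi$ a matrix and $\newV$ rectangular, I would view $(R, k)$ as a product of row/column indices: the row index comprising $(R_\row, R_\sharedrow)$ together with part of the bits of $k$, and the column index comprising $(R_\column, R_\sharedcolumn)$ together with the remaining bits of $k$, so that each configuration maps to a unique matrix cell. This immediately makes any query of the form $(R, k)$ rectangular.

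The verifier $\newV$ samples $R$ using the same $r$ random bits as $\oldV$ and performs two things. First, a \emph{main check}: query $\newpi[R, 1], \dots, \newpi[R, q]$ and apply $\oldV$'s decision predicate. Second, \emph{consistency checks} enforcing that all configurations mapping to the same old-proof location carry the same value in $\newpi$: for each main query $(R, k)$, run the row and column RNL agents to enumerate the neighboring configurations $(R', k')$ of $(R, k)$, query $\newpi$ at each, and check equality with $\newpi[R, k]$. The RNL property is essential here: given $(R, k)$, the row agent outputs row parts $(R'_\row, R'_\sharedrow, k')$ and the column agent outputs column parts $(R'_\column, R'_\sharedcolumn, k')$ in synchronized order, so the row index of the $j$-th neighbor depends only on $(R_\row, R_\sharedrow, k, j)$ and its column index only on $(R_\column, R_\sharedcolumn, k, j)$. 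To keep the number of queries uniform across main queries regardless of the degree of the associated proof location, the neighbor lists are padded to a common maximum length using self-loops (trivial checks of $(R, k)$ against itself).

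Completeness is immediate from honest duplication. For soundness, given any purported $\newpi$, define the ``decoded'' old proof $\oldpi^\ast$ by the majority value across configurations mapping to each location. If $\newpi$ disagrees with the duplication of $\oldpi^\ast$ on a $\mu$-fraction of configurations, a counting argument on disagreeing pairs within neighbor lists shows the consistency checks catch a mismatch with probability at least $1 - \mu/2$. Otherwise, $\newpi$ is $(\mu/2)$-close to the duplication of $\oldpi^\ast$, so the main check effectively runs $\oldV$ on $\oldpi^\ast$ up to a small perturbation, and $\oldV$'s soundness bounds the overall acceptance probability by $s + \mu/2$. Smoothness follows from the padded structure: each location $(R_0, k_0)$ is queried as a main query precisely when $R = R_0$ (contributing mass $2^{-r}$), and as a consistency neighbor of each main query whose padded neighbor list contains $(R_0, k_0)$; the padding ensures the total mass at every location is exactly $d_{\max}/2^r$, where $d_{\max}$ is the common padded list length. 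Rectangularity and ROP carry over because the neighbor indices are computed through the RNL agents (rectangular) and the consistency checks are equalities fed into the decision predicate, so no new parity checks or non-shared dependencies are introduced.

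The main obstacle will be jointly arranging (i) that every consistency query is rectangular---forcing us to index neighbors via the RNL agents and to route the selection index through the shared (aware) part of the randomness so that ROP is preserved---and (ii) that the consistency-based distance amplification catches inconsistency with probability $1 - \mu/2$ within the $\poly(q/\mu)$ query budget and with no extra randomness beyond $R$. The padding step, while conceptually simple, has to be performed so that the row- and column-agent outputs remain matched index-by-index, and so that it does not disturb the product-style matrix cell assignment used for rectangularity.
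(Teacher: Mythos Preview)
Your overall architecture is right: index the new proof by configurations $(R,k)$, emulate $\oldV$ on $(R,1),\dots,(R,q)$, and add equality checks among neighboring configurations to force consistency, with rectangularity coming from the RNL agents and ROP carrying over because only equality gates are added. The smoothness-by-padding argument is also correct in spirit.

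The gap is in the query budget. You propose to query, for each $(R,k)$, \emph{all} configurations in its neighbor list (padded to a common length $d_{\max}$). But the theorem promises query complexity $\poly(q/\mu)$, and nothing in the hypotheses bounds the neighbor-list length by $\poly(q/\mu)$; it is bounded only by $t_{\RNL}$. In the paper's actual outer verifier the lists have length $\poly(\log T(n))$, so your construction would yield a non-constant number of queries after composition, breaking the final constant-query claim. You flag ``staying within the $\poly(q/\mu)$ budget'' as an obstacle, but the scheme you describe does not address it.

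The paper's fix is to avoid querying the whole neighborhood. For each $(R,k)$, both agents build a \emph{canonical $(\mu/2q)$-sampler graph} (an explicit $\poly(q/\mu)$-regular expander) on the vertex set $[|L_{\row}|]=[|L_{\column}|]$, and the verifier queries only the sampler-neighborhood of the index of $(R,k)$ in that list. Items~\ref{rnl_zipped}--\ref{rnl_index} of the RNL definition are exactly what makes this rectangular: the two agents agree on the list length and ordering, and both know the index of $(R,k)$, so they compute the \emph{same} set of sampler-neighbor indices and then output the row (resp.\ column) parts of those neighbors. Smoothness now comes from regularity of the sampler (no padding needed), query complexity is $q\cdot\poly(q/\mu)=\poly(q/\mu)$, and soundness follows from the sampler property: if a neighborhood has noticeable minority fraction, all but an $\alpha$-fraction of its vertices see a non-constant sampler-neighborhood. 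Your majority-decoding-plus-counting sketch does not go through with only sampler-neighbor queries; the paper's soundness argument conditions on whether all queried locations have consistency $\ge 1-2\alpha$ and uses the sampler guarantee in the complementary case.
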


The smooth proof system of \cref{thm:smoothification} utilizes an explicit construction of \emph{sampler graphs}, defined and stated next.
\begin{definition}[Sampler graph]
    Fix $\alpha \in \brackets{0,1}$. Graph $G = \pars{V,E}$ is an $\alpha$-sampler if for every $S \subseteq V$,
    \begin{equation*}
        \Pr_{v \in V}\brackets{ \abs{ \frac{|S|}{|V|} - \frac{|\Gamma\pars{v} \cap S|}{|\Gamma\pars{v}|}} > \alpha} < \alpha.
    \end{equation*}
\end{definition}

\begin{fact}[{\cite[Section~5.1]{Goldreich2011-samp}}]\label{fact:samplers}
    There exists an algorithm that given an integer $n$ and $\alpha \in \pars{0,1}$, constructs a $(4/\alpha^4)$-regular graph on $n$ vertices which is an $\alpha$-sampler in time $\poly\pars{n}$.
\end{fact}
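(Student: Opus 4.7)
My plan is to construct the sampler as a suitable power of a constant-degree spectral expander, via the classical Chebyshev-based ``expander implies sampler'' argument. First I would observe that if $G$ is a $d$-regular graph with normalized second singular value $\sigma \coloneqq \lambda_2(G)/d$, then for every $S \subseteq V$ of density $\mu \coloneqq |S|/|V|$, the standard variance estimate from spectral expansion gives
\begin{equation*}
\mathrm{Var}_v\brackets{\frac{\abs{\Gamma(v)\cap S}}{d}} \;\leq\; \sigma^2 \cdot \mu\pars{1-\mu} \;\leq\; \sigma^2,
\end{equation*}
and Chebyshev's inequality then yields
\begin{equation*}
\Pr_v\brackets{\abs{\mu - \tfrac{\abs{\Gamma(v)\cap S}}{d}} > \alpha} \;\leq\; \frac{\sigma^2}{\alpha^2}.
\end{equation*}
Consequently, whenever $\sigma \leq \alpha^{3/2}$, the graph $G$ is automatically an $\alpha$-sampler.

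Next I would invoke a strongly explicit constant-degree expander family --- for concreteness, Lubotzky--Phillips--Sarnak Ramanujan graphs or the zig-zag construction of Reingold--Vadhan--Wigderson --- to obtain a $D_0$-regular graph $H_n$ on $n$ vertices with normalized second eigenvalue at most $\sigma_0$, where $D_0 \geq 2$ and $\sigma_0 < 1$ are absolute constants, and the adjacency structure is computable in time $\poly\pars{n}$. By pre-powering $H_n$ a constant number of times if necessary, I may further assume $\sigma_0 \leq D_0^{-1/2}$ (this holds up to a constant for Ramanujan graphs, and can be forced for any expander family at the cost of a few additional powers, since powering multiplies $\log \sigma_0$ and $\log D_0$ by the same factor).

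I then set $k \coloneqq \lceil \log_{D_0}\pars{4/\alpha^4} \rceil$ and take $G \coloneqq H_n^k$, the graph on the same vertex set whose edges correspond to length-$k$ walks in $H_n$. This $G$ is $D_0^k$-regular with normalized second eigenvalue $\sigma_0^k \leq D_0^{-k/2} \leq \pars{4/\alpha^4}^{-1/2} = \alpha^2/2 \leq \alpha^{3/2}$, so by the opening observation it is an $\alpha$-sampler. Its degree $D_0^k$ matches $4/\alpha^4$ up to an absolute multiplicative constant; matching the constant $4$ exactly is a routine adjustment (e.g.\ tuning the base degree so that $D_0^k$ lands on $4/\alpha^4$, or adding a bounded number of parallel edges at each vertex). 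Writing the power graph down explicitly takes time $\poly\pars{n, D_0^k} = \poly\pars{n, 1/\alpha}$, which is $\poly\pars{n}$ provided $1/\alpha \leq \poly\pars{n}$; otherwise the sought graph has super-polynomial size and the statement is vacuous.

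The argument has no real obstacles --- only careful bookkeeping to simultaneously hit the degree $4/\alpha^4$ and the sampling error $\alpha$. Both quantities are controlled by the base expander's ratio $-\log \sigma_0 / \log D_0$, and the specific shape of the stated bound is essentially an encoding of the Ramanujan-level tradeoff $\sigma_0 \leq D_0^{-1/2}$. If one were willing to relax the degree to $O\pars{1/\alpha^4}$ with a larger absolute constant, then any explicit expander family (without any pre-powering tuning) would suffice.
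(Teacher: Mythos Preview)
The paper does not prove this fact at all; it is quoted verbatim from Goldreich's sampler survey and used as a black box. Your proposal reconstructs the standard argument from that reference (spectral expansion $\Rightarrow$ variance bound $\Rightarrow$ Chebyshev $\Rightarrow$ sampler, then power a constant-degree expander to hit the target parameters), and the core of it is correct: the variance estimate $\mathrm{Var}_v[|\Gamma(v)\cap S|/d] \le \sigma^2\mu(1-\mu)$ and the conclusion that $\sigma \le \alpha^{3/2}$ suffices are both right.

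One small slip worth fixing: your pre-powering manoeuvre cannot achieve $\sigma_0 \le D_0^{-1/2}$ if it did not already hold, precisely because (as you yourself note) powering multiplies $\log\sigma_0$ and $\log D_0$ by the same factor --- so the ratio $-\log\sigma_0/\log D_0$ is invariant, and a Ramanujan graph with $\sigma_0 \approx 2D_0^{-1/2}$ stays at ratio just below $1/2$ forever. The fix is simply to start with a base expander of sufficiently large constant degree $D_0$ (e.g.\ $D_0 \ge 2^8$); a short calculation then shows that $D_0^k = 4/\alpha^4$ forces $\sigma_0^k \le (4/D_0)^{k/2}$ small enough to land below $\alpha^{3/2}$. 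With that adjustment your argument goes through, and it matches what the cited source does.
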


\subsection{The smooth and rectangular PCP}
The smooth and rectangular PCP verifier is obtained by applying the degree reduction transformation of \cite[Theorem 5.1]{DinurH2013}. We restate this transformation with syntactic changes that will be helpful for showing rectangularity.\footnote{Our presentation is from the ``proof systems'' perspective of PCP verifiers, rather than the ``label cover'' perspective given in \cite[Theorem 5.1]{DinurH2013}}

Let $\oldV$ be the verifier postulated in \Cref{thm:smoothification} and denote by $\newV$ the new, smooth and rectangular, verifier. We start by describing the proofs expected by $\newV$.

\paragraph{Proofs in the new PCP system.}
New proofs are of length $2^r \cdot q$, which we think of as indexed over $\bool^r \times \brackets{q}$. Each location in the new proof corresponds to a full configuration of the original verifier. A correct proof in the original proof system is transformed to a correct proof in the new one by writing in the $\pars{R,k}$-th location the answer of the original correct proof to the $k$-th query when randomness $R$ is sampled. That is, for an input $x \in L$ and a correct proof $\pi$ for $\oldV$ (i.e., one that is accepted w.p. 1), the $\pars{R,k}$-th location of the correct proof for $\newV$ will have the answer of $\pi$ to the $k$-th query issued by $\oldV$ upon sampling random coin sequence $R$. Notice that in a correct proof for the new verifier, any two locations in the new proof corresponding to neighboring configurations (see \Cref{def:neighbor}) should take the same value.

\paragraph{The new verifier.}
The basic idea is for the new verifier to emulate the original one: when the original samples coin sequence $R$, the new one queries locations $\pars{R,1},\dots,\pars{R,q}$ in the new proof. However, if the original verifier queried the same location $i \in \brackets{\ell}$ for two different (neighboring) configurations $\pars{R,k}$ and $\pars{R^\prime, k^\prime}$, a new (``cheating'') proof could be \emph{inconsistent} in its answers, using this inconsistency to cause the new verifier to accept when the original would not.

Thus, consistency between neighboring configurations must be checked. To guarantee smoothness and preserve the randomness of the new verifier, consistency is checked only between certain neighboring configurations, and not all. Namely, neighboring configurations are connected by a $O(\mu/q)$-sampler graph of constant degree, with edges on the sampler corresponding to consistency tests. Two configurations that are adjacent on the sampler are said to be \defemph{sampler-neighbors}, which is a stronger condition than being neighbors as per \Cref{def:neighbor}.

In fact, both consistency and the original PCP verification are done in one fell swoop: when emulating the original verifier, the new verifier replaces a query to $\pars{R,k}$ with queries to its sampler-neighborhood, and checks its consistency. A sampler graph guarantees that inconsistency between neighboring configurations is reflected by this test w.h.p., so severely inconsistent proofs are rejected by the new verifier. On the other hand, regularity of the sampler implies smoothness, and its constant degree incurs only a constant blowup to the number of queries. 

The point of this theorem is in showing rectangularity of the new verifier. Specifically, we ought to show how construction of the sampler and sampler-neighborhoods can be done \emph{rectangularly}. That is, for any location $\pars{R_\row,R_\column,R_\shared, k}$, it is not enough to find all other sampler-neighboring $\pars{R^\prime_\row,R^\prime_\column, R^\prime_\shared, k^\prime}$; it should be the case that the 
\emph{row-part} of the sampler-neighbors can be found based on $\pars{R_\row, R_\shared, k}$. Similarly, the 
\emph{column-part} of the sampler-neighbors should be found only from $\pars{R_\column, R_\shared, k}$. 

We clarify what we mean by ``row-part'' and ``column-part'' of a query. The new proof (of length $2^r \cdot q$) can be thought of as a square matrix as follows: Fix a location $\pars{R_\row, R_\column, R_\sharedrow, R_\sharedcolumn, k}$ in the new proof. Split $k$ into $k_\row$ and $k_\column$. The rows of the matrix are indexed by $\pars{R_\row, R_\sharedrow, k_\row}$, and the columns are indexed by $\pars{R_\column, R_\sharedcolumn, k_\column}$. Indeed, with this definitions, it is possible to find the row-parts (resp., column-parts) of the sampler-neighbors based on $\pars{R_\row, R_\shared, k}$ (resp., $\pars{R_\row, R_\shared, k}$) thanks to RNL.

Following is a detailed description of this construction.

\newcommand{\answer}{\mathtt{answer}}

\begin{algorithm}\label{alg:new_verifier}
Fix the original verifier $\oldV$.

\begin{enumerate}
	\item Sample a coin sequence $R$.
	\item \label{find_neighbors} For each $k \in \brackets{q}$, construct the sampler of the neighborhood of $\pars{R,k}$ and check consistency of its sampler-neighborhood \emph{in a rectangular way} as follows: Denote the randomness partition by of $R$ by $\pars{R_\row,R_\column,R_\shared}$.
	\begin{enumerate}
		\item \label{find_row_part} Find the ``row parts'': Compute $L_\row \coloneqq A_\row\pars{R_\row, R_\shared, k}$ where $A_\row$ is the neighbor listing agent. Construct a canonical $(\mu/2q)$-sampler on the set of $|L_{\row}|$ vertices, one corresponding to every entry in the list $L_{\row}$. From the index of $(R,k)$ in the list $L_{\row}$, find the indices of the sampler-neighbors of $\pars{R, k}$, and output their ``row-part'' $\pars{R^\prime_\row,R^\prime_\sharedrow, k^{\prime}_{\row}}$.
		In addition, output the row-part of $\pars{R,k}$.
		\item \label{find_column_part} Find the ``column part'': Similarly, compute $L_\column \coloneqq A_\column\pars{R_\column, R_\shared, k}$, construct a canonical $(\mu/2q)$-sampler on the set of $|L_{\column}|$ vertices, find the indices of the sampler-neighbors of $\pars{R, k}$, and output their ``column-part''  $\pars{R^{\prime}_\column,R^{\prime}_\sharedcolumn, k^{\prime}_{\column}}$.
		In addition, output the column-part of $\pars{R,k}$.

	\end{enumerate}
	\item 
	Let $\agla-1$ denote the degree of the sampler above. Note that we are making $\agla\cdot q$ many queries.
	Feed the $\agla\cdot q$ bits queried from the proof $\pi_{\new}$ to a circuit that first checks consistency between every sampler-neighborhood. That is, it check that in each of the $q$ blocks of $\agla$ bits, all the $\agla$ bits are equal. If an inconsistency is spotted, the circuit immediately rejects. Otherwise, feed the first bit in every block to the decision circuit of the original verifier $\oldV$ (along with the $p$ parity-checks on the randomness) and output its answer.
\end{enumerate}

\end{algorithm}

\subsection{Proof of \Cref{thm:smoothification}}

\paragraph{Rectangularity.}

The randomness of the new verifier is split exactly the same as the original verifier into $R_\row$, $R_\column$ and $R_\shared = (R_{\sharedrow}, R_{\sharedcolumn})$. 
It follows from the description of the algorithm that $R_\row$ and $R_\shared$ determine the row-part of each of the $q\cdot \agla$ queries. Similarly, $R_\column$ and $R_\shared$ determine the column-part of each of the $q\cdot \agla$ queries.

\paragraph{ROP.}
The new decision predicate can be implemented by taking a circuit that checks equality on each of the $q$ sampler-neighborhoods constructed in \cref{find_neighbors}, and ANDing its answer with the output of original decision circuit (fed an arbitrary representative of each sampler-neighborhood, as well as the randomness parity checks). Therefore, the $\tau$-ROP is preserved.

\paragraph{Query and decision complexities.}
By \cref{fact:samplers}, the size of each sampler-neighborhood (in a $\pars{\mu / 2q}$-sampler) is $\poly\pars{q / \mu}$. A sampler-neighborhood is queried for each of the $q$ original, so the query complexity $\pars{q / \mu}$. As described in the ROP analysis, the new decision circuit can be obtained by ANDing the original decision circuit (of size $d$) to $\poly\pars{q / \mu}$ equality checks. Thus, the size of the new circuit is $d + \poly\pars{q / \mu}$.

\paragraph{Smoothness.}
The fact that the samplers we construct are $\agla$-regular graphs implies that every location in the new proof is read with exactly the same probability (see \cref{sec:smooth_smoothness} for more details).

\paragraph{Soundness.}
One way to see soundness (as well as smoothness) would be to observe \Cref{alg:new_verifier} is the same as the verifier of \cite[Theorem 5.1]{DinurH2013}, and is therefore sound (and smooth). Since the latter theorem and its proof are described in the ``label cover'' view of PCPs whereas our work takes the ``proof systems'' view, we present an alternative proof in the latter view in \cref{sec:smooth_soundness}.

\paragraph{Runtime complexity.}
The new verifier emulates the original one. In addition, for each of the $q$ queries it invokes RNL agents and finds a neighborhood in the sampler. Invoking RNL agents takes $t_\RNL$ time. Constructing the explicit sampler on the configuration's neighborhood and finding its sampler-neighborhood takes time at most $\poly\pars{t_\RNL}$ time (we upper bound the size of each list with the runtime of each agent).

\section{A many-query robust PCP with RNL} \label{section:outer}

In this section, we prove that the Reed--Muller-based PCP of Ben-Sasson \etal~\cite{BenSassonGHSV2006,BenSassonGHSV2005} has RNL. This PCP issues many queries, but is {\em robust} -- which will come useful later in the composition stage (\cref{section:composition}) to reduce its query complexity. In particular, we modify the many-query robust PCP of Ben-Sasson \etal~\cite{BenSassonGHSV2006,BenSassonGHSV2005} to obtain the following PCP with RNL.\footnote{For the sake of simplicity, instead of working with verifier specifications (as in we work with verifiers instead of verifier specifications (\cite{BenSassonGHSV2005} use the latter). This costs us an extra $q(n)$ factor in the running time. However this loss of $q(n)$ has no effect on our application to rigid matrices.}

\newcommand{\AO}{\tilde{A}}

\begin{theorem}[{Strengthening of \cite[Theorem 3.1]{BenSassonGHSV2006} and \cite{BenSassonGHSV2005}, simplified}]\label{thm:outerBGHSV}

Suppose that $L$ is a language in
$\NTIME(T(n))$ for some non-decreasing function $T \colon \N \rightarrow \N$. There exists a universal constant $c$ such for all odd integers $m \in \N$ and $s \in \pars{0,1/2}$ satisfying $T(n)^{1/m} \geq m^{cm}/s^6$, $L$ has a robust PCP with the following parameters:
\begin{enumerate}
\item \label{bghsv_item_1} Alphabet $\{0,1\}$.
\item \label{bghsv_item_2} Randomness complexity $r\pars{n} = (1-\frac{1}{m})\log T(n) + O(m \log \log T(n))
+ O(\log(1/s))$.
\item \label{bghsv_item_3} Decision and  Query complexity $d(n) = q\pars{n} = T(n)^{1/m} \cdot \poly(\log T(n), 1/s)$.
\item \label{bghsv_item_4} Robust soundness error $s$ with robustness parameter $\Theta(s)$.
\item \label{bghsv_item_5} Runtime complexity $t(n) = q(n) \cdot \poly(n, \log  T(n))$.
\item \label{bghsv_item_6} The PCP verifier has $\tau$-RNL with running time $t_\RNL(n)$ where
\begin{align*}
\tau \cdot r(n) &=r_\shared = \frac{4}{m}\log T(n) +  O(m\log \log T(n)) 
+ O(\log(1/s)),\\
t_\RNL(n) &= \poly(\log T(n)).
\end{align*}%
\end{enumerate}
\end{theorem}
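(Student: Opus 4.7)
The plan is to take the Reed--Muller-based robust PCP of Ben-Sasson~\etal~\cite{BenSassonGHSV2006,BenSassonGHSV2005} as a black box for items (\ref{bghsv_item_1})--(\ref{bghsv_item_5}); all quantities there match their construction up to polylogarithmic bookkeeping. The real work is item (\ref{bghsv_item_6}): I would reexamine the randomness-to-query map of the BGHSV verifier and show that, after partitioning the $m$ coordinates of $\mathbb{F}^m$ into two halves $m_1, m_2$ with $m_1 + m_2 = m$ and $|m_1 - m_2| \le 1$, the verifier becomes $\tau$-RNL with the stated shared-randomness size.

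\textbf{Rectangular indexing and the sum-check obstacle.} I would view proof locations as pairs $(x^{(1)}, x^{(2)}) \in \mathbb{F}^{m_1} \times \mathbb{F}^{m_2}$, giving the Reed--Muller proof a natural matrix shape. The low-degree test picks a base point $x$ and a direction $y$ in $\mathbb{F}^m$ and reads $|\mathbb{F}|$ points on $\{x + t y : t \in \mathbb{F}\}$; each such point factors coordinate-wise as $(x^{(1)} + t y^{(1)}, x^{(2)} + t y^{(2)})$, so the row-part of each query is a function of $(x^{(1)}, y^{(1)})$ alone and analogously for columns --- this component is \emph{perfectly} rectangular, just as in the BLR warm-up. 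The sum-check component is harder: a round-$i$ sum-check query fixes the first $i$ coordinates of a point and aggregates proof values over the remaining $H^{m-i}$. When $i < m_1$ the sum stays on the column side, and when $i > m_1$ on the row side, so both extremes factor cleanly. The rounds whose pivot $i$ straddles the boundary (i.e., $i \in \{m_1 - O(1), \ldots, m_1 + O(1)\}$), together with the coordinates of the base point sitting near the boundary and the corresponding partial-sum checkpoint values, produce queries whose row-part depends on column-randomness and vice versa. I would absorb these $O(1)$ boundary coordinates and checkpoint field elements into the shared part of the randomness; since each coordinate of $\mathbb{F}$ costs $\log|\mathbb{F}| = (1/m)\log T(n)$ bits, this yields exactly the claimed $r_\shared = (4/m)\log T(n) + O(m\log\log T(n)) + O(\log(1/s))$, where the trailing terms account for field-size overhead and the $s$-dependent low-degree-test repetition.

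\textbf{Rectangular neighbor listing.} For RNL I would implement $A_\row$ so that on input $(R_\row, R_\shared, k)$ it enumerates all row-configurations producing the same row-index as the current one: concretely, all lines through a fixed point (for low-degree queries) or all sum-check rounds hitting a fixed row-index (for sum-check queries), both of which admit a canonical $\poly(\log T(n))$-size listing computable from the row-data alone. The column agent $A_\column$ does the analogous enumeration. To satisfy the synchronization condition (item \ref{rnl_sync} of \cref{def:RNL}), both agents order their output list using a deterministic rule that depends only on the shared randomness $R_\shared$ and the query index $k$ --- for instance, the query-type block is determined by $k$, and within a block the lexicographic order on direction vectors is determined by $R_\shared$. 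Because $R_\shared$ and $k$ are common inputs to both agents, the two lists zip into a matching ordered sequence of full configurations (item \ref{rnl_zipped}), and the index of $(R,k)$ inside the list is computable by either agent in the same deterministic fashion (item \ref{rnl_index}). Each enumeration and each ordering step runs in $\poly(\log T(n))$ time, giving the claimed $t_\RNL(n) = \poly(\log T(n))$.

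\textbf{The main obstacle} I expect is the boundary bookkeeping in the second paragraph: proving that absorbing the sum-check boundary rounds into $R_\shared$ does not disturb the BGHSV robust-soundness analysis (whose tests are tuned to the uniform distribution on $\mathbb{F}^m$), and that the shared-randomness budget really is $O(1)$ coordinates worth --- rather than $\Theta(m)$ through accumulated checkpoint data across all sum-check rounds. Once both are verified, items (\ref{bghsv_item_1})--(\ref{bghsv_item_5}) carry over verbatim from BGHSV and (\ref{bghsv_item_6}) follows from the agent construction above.
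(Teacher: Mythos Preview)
Your high-level plan is right --- items (\ref{bghsv_item_1})--(\ref{bghsv_item_5}) are inherited from BGHSV and the work is in (\ref{bghsv_item_6}) --- but your diagnosis of \emph{why} the verifier is only almost-rectangular is off, and this propagates into a vague RNL-agent construction.

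\textbf{The actual source of non-rectangularity.} The BGHSV verifier does not run a multi-round sum-check with a pivot index $i$ that sweeps across coordinates; instead, all partial-sum polynomials are packed into the proof oracle, and the verifier checks consistency between consecutive ones by reading a line $\calL$ \emph{and its cyclic left-shift} $\shift(\calL)$. Concretely, the verifier samples an intercept $x\in\{0\}\times\F^{m-1}$ and a direction $y$ (either first-axis or from a $\lambda$-biased set $S_\lambda$), and queries the four lines $\calL_0,\shift(\calL_0),\calL_1,\shift(\calL_1)$. The shift is what breaks perfect rectangularity: if the row half is coordinates $[2,\tfrac{m+1}{2}]$ and the column half is $[\tfrac{m+3}{2},m]$, then computing the row-part of $\shift(x+ty)$ requires coordinate $\tfrac{m+3}{2}$ of $x$, and the cyclic wrap-around additionally entangles coordinates $2$ and $m$. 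So exactly four field coordinates --- $x_2,x_{(m+1)/2},x_{(m+3)/2},x_m$ --- must go into $R_\shared$, together with the direction index $R_y\in[|S_\lambda|]$. This is where the $\tfrac{4}{m}\log T(n)$ comes from; there are no ``boundary rounds'' or ``checkpoint values'' to absorb.

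\textbf{The RNL agents.} Your description ``enumerate all lines through a fixed point'' would produce $|\F|^{m-1}$ neighbors, not $\poly(\log T(n))$. The actual enumeration is far smaller: two configurations are neighbors iff $\shift_{b_1}(x+ty)=\shift_{b'_1}(x'+t'y')$, which rearranges to $x'=\shift_{b_1-b'_1}(x+ty)-t'y'$. The agents iterate only over $(R'_y,b'_1,b'_2)\in S_\lambda\times\{0,1\}^2$; for each choice there is a \emph{unique} $t'$ forcing $x'_1=0$, and then $x'_\row$ (resp.\ $x'_\column$) is computable from the row (resp.\ column) configuration alone because the four boundary coordinates sit in $R_\shared$. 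Sorting by $(R'_y,k')$ gives the synchronized ordering. The list length is $O(|S_\lambda|)=\poly(\log T(n))$.

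\textbf{A non-issue.} Your ``main obstacle'' --- whether absorbing randomness into $R_\shared$ disturbs robust soundness --- does not arise: the partition is purely a relabeling of the same random bits, so the verifier's distribution and hence its soundness are untouched. The only remaining step is alphabet reduction to $\{0,1\}$ via a good binary code, which preserves RNL with the same partition.
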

\begin{remark}
	\cref{bghsv_item_1,bghsv_item_2,bghsv_item_3,bghsv_item_4} are exactly as in the statement of \cite[Theorem 3.1]{BenSassonGHSV2006}, the outer robust PCP construction of Ben-Sasson~\etal while Item 5 (the verifier running time) is obtained by the efficient PCP verifiers of Ben-Sasson~\etal~\cite{BenSassonGHSV2005}. The main difference between the two works of Ben-Sasson~\etal~\cite{BenSassonGHSV2006,BenSassonGHSV2005} is that the latter uses a reduction from Succinct-SAT to Succinct-Multivariate-Algebraic-CSP. To show that these PCPs have RNL, we need to analyze the query and predicate of the corresponding PCP verifiers. Since these are almost identical in both the constructions (i.e., the original robust PCP construction and the subsequent efficient version of it), we work with the robust PCP of \cite{BenSassonGHSV2006} and just observe that these modifications can be carried out efficiently as in \cite{BenSassonGHSV2005}.

\end{remark}


\begin{remark}\label{rem:CLW}
    We remark that the runtime complexity of the \cite{BenSassonGHSV2006,BenSassonGHSV2005} verifier is in fact $q(n)\cdot \polylog T(n) + O(n)$ (as observed by Chen, Lyu and Williams ~\cite{ChenLW2020}). This improvement is obtained by constructing a robust PCP of proximity variant of \cref{thm:outerBGHSV} (which constructs only a robust PCP), and then converting it to to a robust PCP using a standard transformation based on linear time-encodable error-correct codes. However, this improvement is not needed for our main result.
\end{remark}

The robust PCP verifiers of Ben-Sasson~\etal~\cite{BenSassonGHSV2006,BenSassonGHSV2005} already have the properties listed in \cref{bghsv_item_1,bghsv_item_2,bghsv_item_3,bghsv_item_4,bghsv_item_5}. In this section we show that it also has RNL as stated in \cref{bghsv_item_6} above. This is done in two steps: First (\Cref{sec:rPCP-large}), we show that the robust PCP from \cite[Section 8.2.1]{BenSassonGHSV2006} has RNL, albeit over a large alphabet. Second (\Cref{sec:alp_red}), we reduce the alphabet size to binary while preserving RNL.

\subsection{The robust RNL PCP verifier over a large alphabet}\label{sec:rPCP-large}

\begin{lemma}
\label{lemma:BGHSV_prop_large_alphabet}
	There exists a robust PCP verifier for a language in $\NTIME(T(n))$ with the properties mentioned in \Cref{thm:outerBGHSV}, but over an alphabet of size  $2^{\polylog\pars{T(n)}}$ (instead of the Boolean alphabet).
\end{lemma}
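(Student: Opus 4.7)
The plan is to instantiate the robust PCP verifier of Ben--Sasson~\etal~\cite{BenSassonGHSV2006,BenSassonGHSV2005}, which already satisfies \cref{bghsv_item_1,bghsv_item_2,bghsv_item_3,bghsv_item_4,bghsv_item_5} of \cref{thm:outerBGHSV}, and then to exhibit a randomness split together with row/column agents witnessing \cref{bghsv_item_6}. Recall that this PCP reduces the $\NTIME$ statement to a Succinct-Multivariate-Algebraic-CSP over the domain $\F^m$, where $|\F|=T(n)^{1/m}\cdot\polylog(T(n))$ and $m$ is odd, and that the proof is an evaluation table of an assignment polynomial (together with auxiliary sum-check polynomials) on $\F^m$. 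The verifier performs two kinds of tests: (i)~a plane-based low-degree test, which queries the proof on a random two-dimensional affine subspace of $\F^m$, and (ii)~a constraint/sum-check test that queries the auxiliary polynomials along a structured algebraic curve (essentially an axis-parallel line together with its image under the constraint polynomial) derived from the Succinct-SAT reduction.

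To obtain rectangularity I would index $\F^m=\F^{(m-1)/2}\times\F^{(m+1)/2}$, so that each query point $\alpha=(\alpha_\row,\alpha_\column)$ decomposes coordinate-wise. For the low-degree test, a random plane is parameterised by a base point and two direction vectors, all of which split coordinate-wise into row- and column-parts and hence can be placed into $R_\row$ and $R_\column$ respectively; the offsets $(\lambda,\mu)\in\F^2$ along the plane and the query index $k\in[q]$ must be placed in $R_\shared$, since they affect both the row- and column-indices of each query. This contributes $2\log|\F|=\tfrac{2}{m}\log T(n)+O(\log\log T(n))$ shared bits. The constraint/sum-check queries lie on a structured curve in $\F^m$ whose identity (which coordinate plays the role of the axis, plus the value of the running variable along it) affects either the row- or column-coordinates depending on which half of $[m]$ the axis index lies in; to decouple row from column, the entire axis-choice and axis-offset must be placed in $R_\shared$, adding another $\tfrac{2}{m}\log T(n)+O(\log m+\log\log T(n))$ shared bits. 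Together with the $O(m\log\log T(n)+\log(1/s))$ bits of bookkeeping randomness (repetitions, error amplification, and so on) already present in \cite{BenSassonGHSV2006,BenSassonGHSV2005}, this yields the bound $\tau\cdot r=\tfrac{4}{m}\log T(n)+O(m\log\log T(n))+O(\log(1/s))$ asserted in \cref{bghsv_item_6}.

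For the RNL property I would define the row-agent on input $(R_\row,R_\shared,k)$ as follows: compute the row-projection $\alpha_\row$ of the $k$-th queried point (which by the above decomposition depends only on $(R_\row,R_\shared,k)$), and enumerate, in a fixed canonical lexicographic order, every triple $(R'_\row,R'_\sharedrow,k')$ that yields the same row-projection $\alpha_\row$; the column-agent is defined symmetrically. Because the canonical ordering depends only on $(R',k')$ and on $\alpha_\row$ (for the row-agent), it is identical for any two neighboring configurations $(R,k)$ and $(\widetilde R,\widetilde k)$, so the resulting lists $L_\row$ and $L_\column$ zip up correctly as in~\eqref{List} and both agents can compute the index of $(R,k)$ itself by inverting the ordering. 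Each list has size at most $|\F|^{O(1)}\cdot q=\polylog(T(n))$ and the query maps induced by the plane and axis-parallel-line parameterisations can be enumerated in time polylogarithmic in $T(n)$, yielding $t_\RNL(n)=\poly(\log T(n))$.

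The main obstacle is the sum-check/constraint test: unlike the plane test it is intrinsically one-dimensional in $\F^m$, so no natural rectangular decomposition exists unless one commits the axis-identifier and the axis-parameter to the shared randomness. Carrying this out while preserving the robust soundness analysis of \cite[\S 8.2.1]{BenSassonGHSV2006}---and, in particular, while making sure the canonical enumeration of the ``axis-parallel slices'' produces synchronised row- and column-lists in the sense required by \cref{def:RNL}---is the delicate step. Once this is done, the $2^{\polylog(T(n))}$-alphabet bound follows from the fact that the answers read by the verifier are evaluations of polynomials of degree $\polylog(T(n))$ on lines and planes, each requiring $\polylog(T(n))$ bits to describe.
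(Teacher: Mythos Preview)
Your proposal misidentifies the structure of the verifier of \cite[Section~8.2.1]{BenSassonGHSV2006}, and as a consequence both the randomness partition and the RNL agents miss the mark.

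The verifier does not perform a plane-based low-degree test. It samples a single \emph{line} $\calL\subseteq\F^m$---either first-axis-parallel or with direction drawn from a small $\lambda$-biased set $S_\lambda\subseteq\F^m$---and queries the proof on $\calL$ together with its cyclic shift $\shift(\calL)$, where $\shift(x_1,\dots,x_m)=(x_2,\dots,x_m,x_1)$. The shift is the actual obstruction to rectangularity: under a coordinate split of $\F^m$ into a row block and a column block, $\shift$ moves one column coordinate into the row block and (via wrap-around) one row coordinate into the column block. The paper deals with this by placing the four ``boundary'' intercept coordinates $x_2,x_{(m+1)/2},x_{(m+3)/2},x_m$, together with the direction index $R_y\in[|S_\lambda|]$, into the shared randomness; this is where the $\tfrac{4}{m}\log T(n)$ term really comes from. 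Your accounting (two plane offsets plus an axis-choice and axis-value) lands on the same number by coincidence, but none of those objects are present in the verifier. There is also no separate sum-check test over a varying axis to worry about.

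Your RNL agents do not produce synchronised lists in the sense of \cref{def:RNL}. You have the row agent enumerate every $(R'_\row,R'_\sharedrow,k')$ with the same row-projection $\alpha_\row$, and the column agent enumerate every triple with the same column-projection $\alpha_\column$. These two lists need not have equal length, and even when they do, the $i$-th entries need not share a common $k'$ or assemble into a single full configuration; the zip in \eqref{List} is then not the neighbour set of $(R,k)$ at all. The paper's agents work differently: both iterate over the \emph{same} small index set of shared choices $(R'_y,b'_1,b'_2)$; for each choice there is a unique $t'$ with $x'_1=0$, after which the identity $x'=\shift_{b_1-b'_1}(x+ty)-t'y'$ decouples coordinate-wise, letting each agent compute its own half of $x'$. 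Because both agents traverse the identical index set in the identical order, the zip is automatically correct and each agent knows the position of $(R,k)$ in it. Finally, your list-size bound $|\F|^{O(1)}\cdot q$ is not $\polylog(T(n))$ since $|\F|\approx T(n)^{1/m}$; in the paper the list length is $O(|S_\lambda|)=\poly(m,\log|\F|,\log(1/s))$, which is genuinely polylogarithmic.
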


For convenience, we remind the reader of the PCP verifier from \cite{BenSassonGHSV2005}, following its presentation in \cite[Section 8.2.1]{BenSassonGHSV2006} (see \cref{remark:bghsv_vs_bghsv_field_size} for a minor difference). That same work shows completeness and robust soundness of this verifier. Thus, we need only to show that this PCP has RNL, so it suffices for us to recall its \emph{query patterns} without detailing the way its \emph{decision (predicate)} is made based on these queries.

\begin{algorithm}[Verifier query pattern \protect{\cite[Section 8.2.1]{BenSassonGHSV2006}}]\label{alg:bghsv_query_pattern}
	Let $\F$ be a field of size $|\F| = T(n)^{1/m}\cdot \poly\log T(n)$ where $m$ is an odd integer.
	The proof oracle is a map $\Pi \colon \F^m \rightarrow \F^d$ where $d = m\cdot \poly(\log T(n))$. Let $\shift \colon \F^m \rightarrow \F^m$ denote the linear transformation that cyclically shifts each coordinate to the left; i.e., $\shift(x_1,\ldots,x_{m}) \coloneqq  (x_2, x_3, \ldots, x_{m}, x_1)$. 
	
	The queries will be based on \emph{lines} through $\F^m$, and their shifts. Recall that the line $\calL$ with intercept $x \in \F^m$ and direction $y\in \F^m$ is the set $\calL \coloneqq \braces{ x + ty \colon t \in \F }$. The verifier samples lines from two distributions:
	
	\begin{itemize}
	    \item A \defemph{first-axis parallel line}, in which $y = \pars{1,0,\dots,0}$, and $x$ is sampled uniformly at random from $\{0\}\times \F^{m-1}$.
	    \item A \defemph{canonical pseudorandom line} is chosen by first uniformly sampling a direction $y$ from a $\lambda$-biased set $S_\lambda \subseteq \F^m$. The direction $y$ partitions the space $\F^m$ into $|\F|^{m-1}$ parallel lines, and one of these lines is chosen uniformly at random. Note that this requires $\log(|\bset_\lambda|) + (m-1)\log(|\F|)$ random bits.
	    Without loss of generality, we assume that all $y\in \bset_{\lambda}$ have $y_1\neq 0$.%
	    \footnote{To obtain such a $\lambda$-biased set, construct a $\lambda/2$-biased set and remove all $y$'s with $y_1=0$ from it.} Thus, similarly to first-axis parallel lines, $x$ can be sampled uniformly at random from $\{0\}\times \F^{m-1}$.
	\end{itemize}

\end{algorithm}

\begin{remark}[Differences in the query pattern of \cite{BenSassonGHSV2006} and \cite{BenSassonGHSV2005}] \label{remark:bghsv_vs_bghsv_field_size}
    The only difference in the robust PCP construction of \cite{BenSassonGHSV2006} and its efficient counterpart in \cite{BenSassonGHSV2005} is the reduction from $\NTIME(T(N))$ to (succinct) Multivariate-Algebraic-CSP (which in turn uses the Pippenger--Fisher reduction \cite{PippengerF1979})~\cite[Definition~6.3, Theorem~6.4]{BenSassonGHSV2005}. This causes the field size to increase from $O(m^2\cdot T(n)^{1/m})$ to $T(n)^{1/m}\cdot \poly\log T(n)$.
\end{remark}

\begin{proof}[Proof of \Cref{lemma:BGHSV_prop_large_alphabet}]

We prove that the verifier of \cref{alg:bghsv_query_pattern} has RNL. First, note that the verifier uses a total of $\log(|\bset_\lambda|) + (m-1)\log(|\F|)$ random bits to sample a first-axis parallel line $\calL_0$ and a canonical pseudorandom line $\calL_1$, reusing the random bits between these two lines. It then queries the proof oracle on $\calL_0$, $\shift(\calL_0)$,  $\calL_1$,  and $\shift(\calL_1)$.\footnote{The size of the set $S_\lambda$ and a detailed description of how we derived this query pattern based on the tests of the verifier of \cite[Section 8.2.1]{BenSassonGHSV2006} can be found in \Cref{appendix:bghsv_query_details}.}

The randomness used for sampling $x$ is partitioned into $(m-1)$ parts of equal length, denoted by $(R_2, R_3, \ldots, R_m)$ where $|R_i| = \log(|\F|)$, and $R_i$ determines $x_i$ for each $i \in \braces{2, \dots, m}$ (recall that $x_1 = 0$ always). 
The randomness used to sample a direction from $\bset_\lambda$ is denoted by $R_y$. Recall that there are two-types of lines, canonical and axis-parallel lines. In both cases the direction of the line $y$ is a function of the bits $R_y$ (in the axis-parallel line the direction is just the constant function).

The \defemph{row}, \defemph{column} and \defemph{shared} parts of the randomness are portrayed in \cref{fig:sep_partition}. Formally,
\begin{align*}
	R_\row &\coloneqq \pars{R_3, \ldots, R_{(m-1)/2}}  \\
	R_\column &\coloneqq \pars{R_{(m+5)/2}, \dots, R_{m-1}} \\
	R_\sharedrow &\coloneqq (R_2,R_{(m+1)/2}) \\
	R_\sharedcolumn &\coloneqq (R_{(m+3)/2},R_{m})\\
	R_\shared &\coloneqq \pars{R_\sharedrow, R_\sharedcolumn, R_y},
\end{align*}
where $R_y$ is divided arbitrarily between $R_\sharedrow$ and $R_\sharedcolumn$. Thus, the randomness has parts of length $r_\row \coloneqq \abs{R_\row}$, $r_\column \coloneqq \abs{R_\column}$, $r_\shared \coloneqq \abs{R_\shared}$, with total randomness $r = r_\row + r_\column + r_\shared$ and $\tau \cdot r = r_\shared$.

\begin{figure}
    \centering

 
\tikzset{
pattern size/.store in=\mcSize, 
pattern size = 5pt,
pattern thickness/.store in=\mcThickness, 
pattern thickness = 0.3pt,
pattern radius/.store in=\mcRadius, 
pattern radius = 1pt}
\makeatletter
\pgfutil@ifundefined{pgf@pattern@name@_6rkh1tm79}{
\pgfdeclarepatternformonly[\mcThickness,\mcSize]{_6rkh1tm79}
{\pgfqpoint{0pt}{0pt}}
{\pgfpoint{\mcSize+\mcThickness}{\mcSize+\mcThickness}}
{\pgfpoint{\mcSize}{\mcSize}}
{
\pgfsetcolor{\tikz@pattern@color}
\pgfsetlinewidth{\mcThickness}
\pgfpathmoveto{\pgfqpoint{0pt}{0pt}}
\pgfpathlineto{\pgfpoint{\mcSize+\mcThickness}{\mcSize+\mcThickness}}
\pgfusepath{stroke}
}}
\makeatother

 
\tikzset{
pattern size/.store in=\mcSize, 
pattern size = 5pt,
pattern thickness/.store in=\mcThickness, 
pattern thickness = 0.3pt,
pattern radius/.store in=\mcRadius, 
pattern radius = 1pt}
\makeatletter
\pgfutil@ifundefined{pgf@pattern@name@_zdgom59gv}{
\pgfdeclarepatternformonly[\mcThickness,\mcSize]{_zdgom59gv}
{\pgfqpoint{0pt}{0pt}}
{\pgfpoint{\mcSize+\mcThickness}{\mcSize+\mcThickness}}
{\pgfpoint{\mcSize}{\mcSize}}
{
\pgfsetcolor{\tikz@pattern@color}
\pgfsetlinewidth{\mcThickness}
\pgfpathmoveto{\pgfqpoint{0pt}{0pt}}
\pgfpathlineto{\pgfpoint{\mcSize+\mcThickness}{\mcSize+\mcThickness}}
\pgfusepath{stroke}
}}
\makeatother

 
\tikzset{
pattern size/.store in=\mcSize, 
pattern size = 5pt,
pattern thickness/.store in=\mcThickness, 
pattern thickness = 0.3pt,
pattern radius/.store in=\mcRadius, 
pattern radius = 1pt}
\makeatletter
\pgfutil@ifundefined{pgf@pattern@name@_1ccy1g5z9}{
\pgfdeclarepatternformonly[\mcThickness,\mcSize]{_1ccy1g5z9}
{\pgfqpoint{0pt}{0pt}}
{\pgfpoint{\mcSize+\mcThickness}{\mcSize+\mcThickness}}
{\pgfpoint{\mcSize}{\mcSize}}
{
\pgfsetcolor{\tikz@pattern@color}
\pgfsetlinewidth{\mcThickness}
\pgfpathmoveto{\pgfqpoint{0pt}{0pt}}
\pgfpathlineto{\pgfpoint{\mcSize+\mcThickness}{\mcSize+\mcThickness}}
\pgfusepath{stroke}
}}
\makeatother

 
\tikzset{
pattern size/.store in=\mcSize, 
pattern size = 5pt,
pattern thickness/.store in=\mcThickness, 
pattern thickness = 0.3pt,
pattern radius/.store in=\mcRadius, 
pattern radius = 1pt}
\makeatletter
\pgfutil@ifundefined{pgf@pattern@name@_yyrml1dyb}{
\pgfdeclarepatternformonly[\mcThickness,\mcSize]{_yyrml1dyb}
{\pgfqpoint{0pt}{0pt}}
{\pgfpoint{\mcSize+\mcThickness}{\mcSize+\mcThickness}}
{\pgfpoint{\mcSize}{\mcSize}}
{
\pgfsetcolor{\tikz@pattern@color}
\pgfsetlinewidth{\mcThickness}
\pgfpathmoveto{\pgfqpoint{0pt}{0pt}}
\pgfpathlineto{\pgfpoint{\mcSize+\mcThickness}{\mcSize+\mcThickness}}
\pgfusepath{stroke}
}}
\makeatother

 
\tikzset{
pattern size/.store in=\mcSize, 
pattern size = 5pt,
pattern thickness/.store in=\mcThickness, 
pattern thickness = 0.3pt,
pattern radius/.store in=\mcRadius, 
pattern radius = 1pt}
\makeatletter
\pgfutil@ifundefined{pgf@pattern@name@_swlg460vj}{
\pgfdeclarepatternformonly[\mcThickness,\mcSize]{_swlg460vj}
{\pgfqpoint{0pt}{0pt}}
{\pgfpoint{\mcSize+\mcThickness}{\mcSize+\mcThickness}}
{\pgfpoint{\mcSize}{\mcSize}}
{
\pgfsetcolor{\tikz@pattern@color}
\pgfsetlinewidth{\mcThickness}
\pgfpathmoveto{\pgfqpoint{0pt}{0pt}}
\pgfpathlineto{\pgfpoint{\mcSize+\mcThickness}{\mcSize+\mcThickness}}
\pgfusepath{stroke}
}}
\makeatother
\tikzset{every picture/.style={line width=0.75pt}} 

\begin{tikzpicture}[x=0.75pt,y=0.75pt,yscale=-1,xscale=1]

\draw  [pattern=_6rkh1tm79,pattern size=3pt,pattern thickness=0.75pt,pattern radius=0pt, pattern color={rgb, 255:red, 74; green, 74; blue, 74}] (60,102.83) -- (90,102.83) -- (90,124.53) -- (60,124.53) -- cycle ;
\draw  [pattern=_zdgom59gv,pattern size=3pt,pattern thickness=0.75pt,pattern radius=0pt, pattern color={rgb, 255:red, 74; green, 74; blue, 74}] (210,102.83) -- (240,102.83) -- (240,124.53) -- (210,124.53) -- cycle ;
\draw  [pattern=_1ccy1g5z9,pattern size=3pt,pattern thickness=0.75pt,pattern radius=0pt, pattern color={rgb, 255:red, 74; green, 74; blue, 74}] (240,102.83) -- (270,102.83) -- (270,124.53) -- (240,124.53) -- cycle ;
\draw  [pattern=_yyrml1dyb,pattern size=3pt,pattern thickness=0.75pt,pattern radius=0pt, pattern color={rgb, 255:red, 74; green, 74; blue, 74}] (390,102.83) -- (420,102.83) -- (420,124.53) -- (390,124.53) -- cycle ;
\draw  [pattern=_swlg460vj,pattern size=3pt,pattern thickness=0.75pt,pattern radius=0pt, pattern color={rgb, 255:red, 74; green, 74; blue, 74}] (437,102.83) -- (457,102.83) -- (457,124.53) -- (437,124.53) -- cycle ;

\draw  [fill={rgb, 255:red, 111; green, 111; blue, 111 }  ,fill opacity=1 ] (90,102.83) -- (120,102.83) -- (120,124.53) -- (90,124.53) -- cycle ;
\draw  [fill={rgb, 255:red, 111; green, 111; blue, 111 }  ,fill opacity=1 ] (180,102.83) -- (210,102.83) -- (210,124.53) -- (180,124.53) -- cycle ;
\draw  [fill={rgb, 255:red, 111; green, 111; blue, 111 }  ,fill opacity=1 ] (120,102.83) -- (180,102.83) -- (180,124.53) -- (120,124.53) -- cycle ;
\draw  [fill={rgb, 255:red, 255; green, 255; blue, 255 }  ,fill opacity=1 ] (270,102.83) -- (300,102.83) -- (300,124.53) -- (270,124.53) -- cycle ;
\draw  [fill={rgb, 255:red, 255; green, 255; blue, 255 }  ,fill opacity=1 ] (300,102.83) -- (360,102.83) -- (360,124.53) -- (300,124.53) -- cycle ;
\draw  [fill={rgb, 255:red, 255; green, 255; blue, 255 }  ,fill opacity=1 ] (360,102.83) -- (390,102.83) -- (390,124.53) -- (360,124.53) -- cycle ;
\draw   (210,101.75) .. controls (210,97.08) and (207.67,94.75) .. (203,94.75) -- (159,94.75) .. controls (152.33,94.75) and (149,92.42) .. (149,87.75) .. controls (149,92.42) and (145.67,94.75) .. (139,94.75)(142,94.75) -- (97,94.75) .. controls (92.33,94.75) and (90,97.08) .. (90,101.75) ;
\draw   (390,101.75) .. controls (390,97.08) and (387.67,94.75) .. (383,94.75) -- (340.13,94.75) .. controls (333.46,94.75) and (330.13,92.42) .. (330.13,87.75) .. controls (330.13,92.42) and (326.8,94.75) .. (320.13,94.75)(323.13,94.75) -- (277,94.75) .. controls (272.33,94.75) and (270,97.08) .. (270,101.75) ;
\draw    (73,101.8) -- (73,62.7) ;
\draw    (226,101.75) -- (226,69.76) ;
\draw    (255,101.75) -- (255,69.76) ;
\draw    (405,101.75) -- (405,62.71) ;
\draw    (73,62.7) -- (214,62.7) ;
\draw    (264,62.71) -- (405,62.71) ;
\draw    (448,101.75) -- (448,62.71) ;
\draw    (405,62.71) -- (448,62.71) ;

\draw (69,126.61) node [anchor=north west][inner sep=0.75pt]  [font=\small]  {$2$};
\draw (100,126.61) node [anchor=north west][inner sep=0.75pt]  [font=\small]  {$3$};
\draw (182,127.78) node [anchor=north west][inner sep=0.75pt]  [font=\small]  {$\frac{m-1}{2}$};
\draw (212,127.78) node [anchor=north west][inner sep=0.75pt]  [font=\small]  {$\frac{m+1}{2}$};
\draw (242,127.78) node [anchor=north west][inner sep=0.75pt]  [font=\small]  {$\frac{m+3}{2}$};
\draw (272,127.78) node [anchor=north west][inner sep=0.75pt]  [font=\small]  {$\frac{m+5}{2}$};
\draw (360,128.6) node [anchor=north west][inner sep=0.75pt]  [font=\scriptsize]  {$m-1$};
\draw (399,128.61) node [anchor=north west][inner sep=0.75pt]  [font=\scriptsize]  {$m$};
\draw (139,128.4) node [anchor=north west][inner sep=0.75pt]  [font=\large]  {$\cdots $};
\draw (319,128.4) node [anchor=north west][inner sep=0.75pt]  [font=\large]  {$\cdots $};
\draw (138,74.9) node [anchor=north west][inner sep=0.75pt]   [align=left] {row};
\draw (306,71.89) node [anchor=north west][inner sep=0.75pt]   [align=left] {column};
\draw (220,53.51) node [anchor=north west][inner sep=0.75pt]   [align=left] {shared};
\draw (443,128.61) node [anchor=north west][inner sep=0.75pt]  [font=\scriptsize]  {$y$};
\draw (34,105.8) node [anchor=north west][inner sep=0.75pt]  [font=\large]  {$R$};

\end{tikzpicture}
    \caption{The partition of the randomness of \cref{alg:bghsv_query_pattern}.}
    \label{fig:sep_partition}
\end{figure}

\paragraph{Rectangular Neighbor-Listing (RNL)}

The BGHSV verifier makes $4 \cdot |\F|$ queries.
We index the queries by  $k := (b_1,b_2,t)\in \{0,1\}^2\times \F$ as follows.
\begin{itemize}
    \item $b_1 = 0$ indicates the query is to a line. $b_1 = 1$ indicates it is to a shifted line.
    \item $b_2 = 0$ indicates the query is to a first-axis parallel line. $b_2 = 1$ indicates it is to a canonical line.
    \item $t \in \F$ indicates the position on the line.
\end{itemize}

Given a configuration $(R, k)$ that results in a query to location $z\in \F^m$, we ought to show how to list all neighboring configurations (i.e. configurations that lead to location $z$) in a rectangular way.
We first show who are the neighboring configurations, and then show how to rectangularly (and synchronously) list them by describing the listing agents $A_{\row}$ and $A_{\column}$. This shows that the verifier has RNL.

\paragraph{Neighbors of $\pars{R, k}$.}

A full configuration $(R,k) = (R_\row, R_\column, R_\shared,k)$ specifies $k = (b_1, b_2, t)$, $x = (0, R_2, \ldots, R_m)$, and $R_y \in [|\bset_\lambda|]$. $R_y$ and $b_2$ determine $y \in \F^m$ as follows: if $b_2=0$ then $y=(1,0,\ldots, 0)$, otherwise $y = \bset_{\lambda}[R_y]$.

Recall that $\shift\pars{x}$ denotes cyclic shift of $x$ one step to the left. Given the full configuration specified by $k =  (b_1, b_2, t)$, $x$ and $R_y$, the location of the $k$th query will be $x+t\cdot y$ if $b_1=0$, or $\shift(x+t\cdot y)$ if $b_1=1$. More concisely, letting $\shift_j\pars{x}$ denote the cyclic shift of $x$ by $j$ steps for any $j \in \Z$, the location of the $k$th query is $\shift_{b_1}(x+t\cdot y)$.

Thus, any other configuration $(R', k')$ that specifies $k' = (b'_1, b'_2, t')$, $x'$ and $R'_y$ would query the same location if and only if
\begin{equation}\label{eq:who_are_the_neighbors}
\shift_{b'_1}(x'+t'\cdot y') = \shift_{b_1}(x+t\cdot y).
\end{equation}
In other words, $\pars{R',k'}$ neighbors $\pars{R,k}$ if and only if it satisfies \cref{eq:who_are_the_neighbors}.

\paragraph{The neighbor listing agents.}
Rearranging \cref{eq:who_are_the_neighbors}, we see that $(x, b_1, b'_1, t, t', y, y')$ uniquely determines $x'$ by the equation
\begin{equation}\label{eq:x'}
x' = \shift_{b_1-b'_1}(x+t\cdot y) - t'\cdot y',
\end{equation}

Recall that any location queried must fulfill the condition $x'_1 = 0$. For any possible $k' = \pars{b'_1, b'_2, t'}$ and $y'$, fulfillment of this condition is determined only by $k =\pars{b_1, b_2, t}$, $y$ and $\pars{x_2,x_m}$. Thus, we say that partial configuration $\pars{R'_{y}, k'}$ is \defemph{realizable (for $\pars{R_\shared,k}$)} if any only if $x'_1 = 0$.

Furthermore, notice that for any $(R'_{y},k')$ there is a unique $t'$ such that \cref{eq:x'} has $x'_1 = 0$, and finding such $t'$ can be done in a constant number of basic arithmetic operations. This will come in handy shortly, when we construct the listing agents.

We can now present two algorithms listing all neighbors of a given configuration in a rectangular and synchronized fashion. First the row neighbor-listing agent $A_\row$. On input row configuration $\pars{R_\row, R_\shared, k}$,
\begin{enumerate}
    \item \label{agents_obtain} Obtain $t$, $b_1$, $b_2$, $y$, $x_{[2,(m+3)/2]}$ and $x_m$ from the input. Initialize an empty list $L_\row$.
    \item For each $\pars{R'_y, b'_1, b'_2}$:
    \begin{enumerate}
        \item\label{agents_t'} Find $t' \in \F$ such that $\pars{R'_y, k'} = \pars{t', b'_1, b'_2}$ is realizable (as previously explained).
        \item\label{agents_x'} Compute $x'_\row \coloneqq x'_{[2, (m+1)/2]}$ as in \cref{eq:x'}. Append $\pars{x'_\row, R'_y, k'}$ to the list $L_\row$.
    \end{enumerate}
    \item \label{agents_sort} Sort $L_\row$ according to $\pars{R'_y, k'}$. Output $L_\row$.
\end{enumerate}

The column neighbor-listing agent runs similarly. On input column configuration $\pars{R_\column, R_\shared, k}$,
\begin{enumerate}
    \item Obtain $t$, $b_1$, $b_2$, $y$, $x_2$ and $x_{[(m+1)/2, m]}$ from the input. Initialize an empty list $L_\column$.
    \item For each $\pars{R'_y, b'_1, b'_2}$:
    \begin{enumerate}
        \item Find $t' \in \F$ such that $\pars{R'_y, k'} = \pars{t', b'_1, b'_2}$ is realizable (as previously explained).
        \item Compute $x'_\column := x'_{[(m+3)/2, m]}$ as in \cref{eq:x'}. Append $\pars{x'_\column, R'_y, k'}$ to the list $L_\column$.
    \end{enumerate}
    \item Sort $L_\column$ according to $\pars{R'_y, k'}$. Output $L_\column$.
\end{enumerate}

Note that both agents give lists indexed and sorted by all realizable $(R'_y, k')$. Thus, both arrays are of the same length and ordering. Moreover, the ordering of $L_{\row}$ and $L_{\column}$ when the agents are input any two neighboring configurations $\pars{R,k}$ and $\pars{R', k'}$. Thus, \cref{rnl_sync} of \cref{def:RNL} is satisfied.

For each $i \in \brackets{\abs{L_\row}}$, concatenating $L_{\row}[i]$ and $L_{\column}[i]$ gives a tuple $\pars{x'_\row, R'_y, k', x'_\column, R'_y, k'}$. By splitting $R'_y$ arbitrarily into $R'_{y.\row}$ and $R'_{y.\column}$ we have that $L_\row$ and $L_\column$, when appropriately ``zipped'' (as in \cref{rnl_zipped} of \cref{def:RNL}) give the list $L$ of all neighboring configurations to $\pars{R,k}$.

To see \cref{rnl_index} of \cref{def:RNL}, note that both $A_{\row}$ and $A_{\column}$ can identify the index of $(R,k)$ in $L$ since it corresponds to the index of the entry whose two last elements are $(R_y,k)$ in $L_{\row}$ and in $L_{\column}$.

Lastly, we calculate $t_\RNL$, which is the runtime of $A_\row$ (the case of $A_\column$ is analogous). \cref{agents_obtain} takes $\poly(m\cdot \log |\F|\cdot \log(1/s))$ time; this includes getting $y\in \bset_\lambda$ from a random string $R_y$ using the efficient construction of $\lambda$-biased set in \cite{AlonGHP1992,NaorN1993}. For each $(R'_y, b'_1, b'_2)$, \cref{agents_t',agents_x'} each take  $O(m\cdot \log |\F|)$ time. The output list length is upper bounded by $4|\bset_\lambda| = \poly(m\cdot \log |\F|\cdot \log(1/s))$, so sorting it in \cref{agents_sort} takes at most $\poly(m\cdot \log |\F|\cdot \log(1/s))$ time. Thus, overall the running time is dominated by $\poly(m\cdot \log |\F|\cdot \log(1/s))$. By the choice of parameters in \Cref{thm:outerBGHSV} and size of $\F$, this is asymptotically equal to $\poly(\log T(n))$.\end{proof}

\subsection{Alphabet Reduction}
\label{sec:alp_red}
\Cref{lemma:BGHSV_prop_large_alphabet} gives a robust PCP with RNL \emph{over a large alphabet}, but the final construct requires a PCP over the \emph{Boolean alphabet}. Next, we show that standard alphabet reduction \cite{Forney1965} preserves RNL. Namely, each symbol is replaced with its encoding in a binary error correcting code. We use the following lemma which gives a constant rate and constant distance code such that the decoding and encoding time is linear.

\begin{theorem}[{\cite{Spielman1996}}]
\label{lemma:ecc_linear}
There is a constant rate, constant distance linear error correcting code with a linear-time encoder, and a linear-time decoder recovering a message from a codeword with up to a fixed constant fraction of errors.
\end{theorem}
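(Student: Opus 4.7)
The plan is to follow Spielman's expander-based construction. The central ingredient is a family of explicit bipartite expander graphs $G = (L \cup R, E)$ with $|L| = n$, $|R| = \Omega(n)$, constant left-degree $d$, and strong \emph{unique-neighbor} expansion: every sufficiently small subset $S \subseteq L$ has more than $(d/2) \cdot |S|$ neighbors in $R$ connected to $S$ by exactly one edge. Given such expanders, I would define the code by treating left vertices as message bits, right vertices as parity checks, and requiring each check to equal the XOR of its neighbors (or, dually, by reading off a systematic generator whose parity bits are such XORs). Since $d$ is constant, encoding reduces to computing these $O(n)$ XORs in $O(n)$ time, and constant relative distance follows from the expansion.

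For decoding from a constant fraction of errors, I would use an iterative bit-flipping algorithm: while some parity check is violated, find a variable bit that participates in more violated than satisfied checks, and flip it. The unique-neighbor expansion guarantees that whenever the current error weight is below the decoding radius such a bit exists, and each flip strictly decreases the number of violated checks; with an appropriate data structure keeping a priority queue over variables by their unsatisfied-check count, the algorithm performs $O(n)$ flips and runs in $O(n)$ total time. To lift this to a full code, I would iterate this ``error-reducer'' a constant number of times so that after each round the number of residual errors drops by a constant factor, then invoke a brute-force decoder on a small inner base code of constant (but sufficient) size, all arranged recursively so that $T(n) = T(n/2) + O(n) = O(n)$.

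The main obstacle is the explicit construction of expanders whose unique-neighbor expansion exceeds $d/2$; mere spectral expansion does not directly give this parameter. Spielman resolves this through an algebraic construction together with a careful combinatorial argument leveraging the recursion to amortize away any polylogarithmic overhead in constructing the expanders; modern alternatives would appeal to the lossless expanders of Capalbo--Reingold--Vadhan--Wigderson. Once the expander family is fixed, the correctness of the bit-flipping decoder, the linearity of the encoder, and the recursion for overall linear time follow by routine arguments.
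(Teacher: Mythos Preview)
The paper does not prove this theorem; it is quoted verbatim as a result of Spielman~\cite{Spielman1996} and used as a black box in the alphabet-reduction step (\cref{lemma:alphabet_red_RNL}) and again when adding ROP (\cref{lemma:adding_rop}). So there is no ``paper's own proof'' to compare your proposal against.

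That said, your sketch is a reasonable outline of Spielman's argument, with one conflation worth flagging. You describe left vertices as \emph{message bits} and right vertices as parity checks whose values are XORs of their neighbors. That object is Spielman's \emph{error-reduction code}: it has trivial linear-time encoding (just compute the $cn$ parities), but by itself it is not a good code---its distance is only guaranteed to reduce errors by a constant factor, not to correct them outright. The Sipser--Spielman expander code, by contrast, takes left vertices as \emph{codeword bits} and right vertices as homogeneous parity constraints; this has good distance and the bit-flipping decoder you describe, but no obvious linear-time encoder. Spielman's actual construction interleaves these two objects in a recursive cascade (the ``superconcentrator-like'' structure), so that encoding is linear by the first mechanism and decoding is linear by repeatedly invoking error reduction down the recursion and then bit-flipping back up. Your final paragraph gestures at this recursion, but the body of your plan reads as if a single expander layer suffices for both properties, which it does not.
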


\begin{lemma}
\label{lemma:alphabet_red_RNL}

Suppose language $L$ has a PCP with verifier $V$ as described in \cref{fig:alphabet_reduction}, then $L$ has a PCP with verifier $V'$ as described in \cref{fig:alphabet_reduction}. Furthermore, if $V$ has $\tau$-RNL, then so does $V'$.

\begin{table}[ht]
    \centering
    \begin{tabular}{|c|c|c|}
        \hline 
        Complexity & $V$ & $V'$\tdelim
        \hline 
        Alphabet & $\Sigma$ & $\braces{0,1}$ \tdelim
        Robust soundness error & $s$ & $s$ \tdelim
        Robustness parameter & $\rho$ & $\Omega(\rho)$ \tdelim
        Randomness & $r$ & $r$ \tdelim
        Query & $q$ & $O(q\cdot \log |\Sigma|)$ \tdelim
        Proof length & $m$ & $O(m \cdot \log|\Sigma|)$ \tdelim
        Decision & $d$ & $d\cdot \polylog(|\Sigma|)$ \tdelim
        Runtime & $t$ & $t\cdot\polylog(|\Sigma|)$ \tdelim
        \end{tabular}
    \caption{The complexities of original verifier $V$ and the Boolean verifier $V'$.}
    \label{fig:alphabet_reduction}
\end{table}

\end{lemma}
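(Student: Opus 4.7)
The plan is to carry out standard concatenation with the binary code of \cref{lemma:ecc_linear}, and then verify that the rectangular neighbor-listing structure is inherited essentially for free, since the alphabet reduction only refines each ``old'' proof location into a block of bits while leaving the randomness untouched.

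First, I would describe $V'$. Fix the code $\Enc\colon\Sigma\to\{0,1\}^c$ from \cref{lemma:ecc_linear}, where $c=O(\log|\Sigma|)$; by padding the codewords with $O(1)$ extra zero bits we may assume $c=c_0^2$ is a perfect square with no loss in constant-distance (and the linear-time decoder still recovers from a constant fraction of errors). The proof for $V'$ is the symbol-wise encoding of the $\Sigma$-proof for $V$, of total length $m'=m\cdot c=(\ell c_0)^2$. On randomness $R$, the verifier $V'$ simulates $V(R)$: for each of the $q$ query locations $i^{(k)}\in[m]$ produced by $V$, it reads the entire $c$-bit block at position $i^{(k)}$ in the new proof; it decodes each of the $q$ blocks to $\Sigma$, and runs $V$'s decision predicate on the decoded symbols. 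Completeness is preserved by the encoding. Robust soundness goes through by a standard argument: if the $q$ old answers are $\rho$-far from every accepting $\Sigma^q$-tuple, then by the constant relative distance $d_0$ of the code, the $qc$ Boolean bits read are at least $\rho\cdot d_0=\Omega(\rho)$-far from every accepting string. The complexity bounds in \cref{fig:alphabet_reduction} follow from linear-time decoding.

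Next, I would handle RNL. The randomness of $V'$ is inherited verbatim from $V$ with the same partition $R=(R_\row,R_\column,R_\sharedrow,R_\sharedcolumn)$, so $\tau$ is unchanged. Index the $q'=qc$ queries of $V'$ by $k'=(k,j)$ with $k\in[q]$ and $j=(j_\row,j_\column)\in[c_0]\times[c_0]$, and view the proof matrix of $V'$ as $[\ell c_0]\times[\ell c_0]$ by placing the $(j_\row,j_\column)$-bit of block $(i_\row,i_\column)\in[\ell]\times[\ell]$ at position $((i_\row,j_\row),(i_\column,j_\column))$. With this layout, the row-part of the $k'$-th query of $V'$ on randomness $R$ is $(i_\row^{(k)}(R_\row,R_\shared),j_\row)$ and its column-part is $(i_\column^{(k)}(R_\column,R_\shared),j_\column)$, so the dependence on $(R_\row,R_\shared)$ and $(R_\column,R_\shared)$ is exactly as required for rectangularity.

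Finally, I would construct the agents $A'_\row,A'_\column$ from the old agents $A_\row,A_\column$. The key observation is that two new configurations $(R,(k,j))$ and $(\widetilde R,(\widetilde k,\widetilde\jmath))$ are neighbors in $V'$ iff $j=\widetilde\jmath$ and $(R,k),(\widetilde R,\widetilde k)$ are neighbors in $V$: this is because the new query location $((i_\row,j_\row),(i_\column,j_\column))$ factors cleanly into the old location $(i_\row,i_\column)$ and the intra-block index $j$. Hence $A'_\row$, on input $(R_\row,R_\shared,(k,j))$, simply calls $A_\row(R_\row,R_\shared,k)$ to obtain the old row list $L_\row$ and then appends $j$ to the query-index field of every entry; $A'_\column$ acts analogously, using the same $j$. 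The three properties of \cref{def:RNL} transfer immediately: the zipped list equals the old zipped list with $j$ appended, so it enumerates precisely the neighbors (\cref{rnl_zipped}); any two neighbors of $(R,(k,j))$ share the same $j$, so synchronization (\cref{rnl_sync}) is inherited from the old agents; and the index of $(R,(k,j))$ in the new list equals its index in the old list (\cref{rnl_index}). The running time of the new agents is that of the old ones plus $O(c\cdot t_\RNL)=\polylog(|\Sigma|)\cdot t_\RNL$. The only step that requires a bit of care is the matrix layout (ensuring $c$ is a perfect square and the factorization $c=c_0\cdot c_0$ is compatible with the $[\ell]\times[\ell]$ old layout), which I expect to be the main---but very mild---obstacle; everything else is bookkeeping.
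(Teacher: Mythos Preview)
Your proposal is correct and follows essentially the same route as the paper: concatenate with the Spielman code, keep the randomness partition unchanged, observe that two new configurations $(R,(k,j))$ and $(\widetilde R,(\widetilde k,\widetilde\jmath))$ are neighbors iff $j=\widetilde\jmath$ and $(R,k),(\widetilde R,\widetilde k)$ were neighbors for $V$, and build the new agents by calling the old ones and appending $j$ to each entry.

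Two minor remarks. First, your discussion of the square matrix layout (padding $c$ to a perfect square, splitting $j=(j_\row,j_\column)$) is unnecessary: RNL (\cref{def:RNL}) is a property of the randomness partition and the listing agents only, and makes no reference to the proof being arranged as a square; the paper's proof accordingly says nothing about proof layout. Second, the robust-soundness sketch is slightly loose as written---without an explicit codeword check, a single bit flip in a block that sits near a decoding boundary could change the decoded symbol, so the $\rho d_0$ bound is not immediate; the paper's $V'$ rejects any block that is not a valid codeword and then defers to \cite[Lemma~2.13]{BenSassonGHSV2006} for the $\Omega(\rho)$ robustness.
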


\begin{proof}
	Fix a linear-time computable and decodable error correcting code of constant rate and distance, denoted $\Enc \colon \Sigma \rightarrow \{0,1\}^{\sigma}$ for $\sigma = O\pars{\log\abs{\Sigma}}$ from \Cref{lemma:ecc_linear}. 
	Assume that the first $\log\abs{\Sigma}$ bits in the encoding are the binary representation of the (non-binary) message. The new (Boolean) proof is written in a natural way: each non-binary symbol is replaced with its encoding under $\Enc$.
	
	The Boolean PCP verifier $V^\prime$ emulates the non-Boolean verifier $V$ as follows: when the non-Boolean verifier queries a location $b\in [m]$, the Boolean verifier queries the whole block $[(b-1)\sigma+1, b\sigma]$ and checks if it is a valid encoding of the symbol specified by the first $\log\abs{\Sigma}$ bits (if not, reject). Once all the queries are decoded correctly, the verifier $V'$ does the verification on the decoded values as $V$.
	
	It is easy to observe that the query complexity and the proof length increase by a factor of $O(\log |\Sigma|)$. Since the new verifier has to perform the decoding of an error correcting code, this adds a multiplicative overhead of $\polylog(|\Sigma|)$ in the running time and the decision complexity. 
	
	The importance of using the error correcting code is to make sure that the new verifier $V^\prime$ is still robust. This follows from \cite[Lemma 2.13]{BenSassonGHSV2006}, where it was shown that the soundness error remains the same and the robustness parameter decreases by a constant factor.
	
	It is also easy to observe that RNL is preserved with the same partition of the randomness. Fix a $j$-th location from the block  $[(b-1)\sigma+1, b\sigma]$. First observe the following proposition:
	\begin{description}\item{\bf Proposition (a):}\label{rop_proposition} If a full configuration $(R,k)$ queries a location $b\in [m]$ in the original proof, then the configuration $(R, \sigma (k-1)+j)$ queries the $j$-th location from the block $[(b-1)\sigma+1, b\sigma]$ in the new proof and vice-versa. 
	\end{description}
	Let $A_\row$ and $A_\column$ are the row and column agents of the non-Boolean verifier $V$. The row agent $A^\prime_\row(R_\row, R_\sharedrow, \tilde{k}) \rightarrow L^\prime_\row $ and the column agent $A^\prime_\column(R_\column, R_\sharedcolumn, \tilde{k}) \rightarrow L^\prime_\column $ of the new verifier $V^\prime$ are as follows: Both the agents first compute the block number $k = \lceil \tilde{k}/\sigma \rceil$ and the index 
	$j = \tilde{k} - \sigma(k-1)$.
	\begin{align*}
	    L^\prime_\row [i] \coloneqq  (R^\prime_\row, R^\prime_\sharedrow, \sigma(k' - 1) + j) \quad &s.t. \quad (R^\prime_\row, R^\prime_\sharedrow, k') = L_\row[i]\\
	    L^\prime_\column [i] \coloneqq  (R^\prime_\column, R^\prime_\sharedcolumn, \sigma(k' - 1) + j) \quad &s.t. \quad (R^\prime_\column, R^\prime_\sharedcolumn, k') = L_\column[i]
	\end{align*}
	where $L_\row \leftarrow A_\row(R_\row, R_\sharedrow,k)$ and $L_\column \leftarrow A_\column(R_\column, R_\sharedcolumn, k)$. To see the correctness, suppose the full configuration $(R_\row, R_\column, R_\sharedrow, R_\sharedcolumn, k)$ queries the location $b\in [m]$ from the original proof. By RNL of $V$, the full configuration $(R^\prime_\row, R^\prime_\column, R^\prime_\sharedrow,R^\prime_\sharedcolumn, k')$ given by $L_\row[i]$ and $L_\column[i]$ leads to the same location $b\in [m]$. Using Proposition (a), we can conclude that the verifier $V'$ when given the full configuration $(R^\prime_\row, R^\prime_\column, R^\prime_\sharedrow,R^\prime_\sharedcolumn, \sigma(k'-1)+j)$, queries the location $(b-1)\sigma+j$ in the new proof. As $\tilde{k} = \sigma(k-1) + j$, again using  Proposition (a), the input full configuration to the agents $(R^\prime_\row, R^\prime_\column, R^\prime_\sharedrow,R^\prime_\sharedcolumn, \tilde{k})$ also leads to the same location $(b-1)\sigma+j$ in the new proof. Therefore, every full configuration given by $(L^\prime_\row[i], L^\prime_\column[i])$ leads to the location $(b-1)\sigma+j$ in the new proof. Furthermore, since the number of full configurations on which $V$ queries $b\in [m]$ is the same as the number of full configurations on which $V'$ queries any fixed location from the block $[(b-1)\sigma+1, b\sigma]$ (in fact, there is a bijection given by  Proposition (a)), the list $(L^\prime_\row, L^\prime_\column)$ is exhaustive.
	\end{proof}

We now finish the proof of \Cref{thm:outerBGHSV}.
\begin{proof}[{Proof of \Cref{thm:outerBGHSV}}]
    \Cref{lemma:BGHSV_prop_large_alphabet} shows the existence of a verifier with the additional $\tau$-RNL over a large alphabet. The alphabet reduction technique from \Cref{lemma:alphabet_red_RNL} converts the PCP to a PCP over the Boolean alphabet.  Since, the original PCP is over an alphabet of size $2^{\polylog(T(n))}$, this conversion increases the proof length, query complexity, decision complexity and the verifier's running time by a multiplicative factor of $\polylog(T(n))$.  With all these changes, these four parameters of the new verifier are asymptotically same as the ones mentioned in \Cref{thm:outerBGHSV}.

    In the whole process, the robustness parameter of the verifier changes by a constant multiplicative factor and this change is irrelevant in proving the lemma.
\end{proof}

\section{Adding ROP to a robust PCP}\label{sec:rop}

One way at looking at the verification procedure is as follows: On sampling the randomness $R$, the verifier constructs a circuit $D := D(R)$ and a subset $I := I(R)$ of proof locations of size $q$. The verifier outputs the verdict of $D(\pi\restrict{I})$. In this abstract way, the circuit $D$ depends on the full randomness $R$. However, for our application we need the verifier to have randomness-oblivious circuits (ROP). 

Recall that the ROP states that the decision predicate depends only on a small fraction of the randomness, but may take as input a limited number of parity checks on the entire randomness. We generalize robust soundness to the ROP setting in the natural way, measuring the distance of both the bits read by the verifier \emph{as well as the randomness parity checks} from satisfying the decision predicate. This definition will be useful when we compose a robust PCP having ROP with a PCPP (in \Cref{section:composition}).

\begin{definition}[robust soundness for ROP verifier]
\label{def:robustnessROP}
For functions
$s,\rho : \pintegers\rightarrow [0,1]$,
a PCP verifier $V$ for a language $L$
with $\tau$-ROP and parity check complexity $p$
has \defemph{robust-soundness error} $s$ with \defemph{robustness parameter $\rho$}
if the following holds for every $x\notin L$:
For every oracle $\pi$,
the input to the decision predicate (that consists of bits read by the verifier and parities of the randomness) are $\rho$-close to being
accepted with probability
strictly less than $s$. Formally,
\[\forall \pi
\Pr_{(I,D,P)\getsr V(x)}
    [\mbox{$\exists a,b$ s.t. $D(ab)=1$ and $\delta(ab,\pi\restrict{I} P)\leq \rho$}]
  < s(|x|).\]
\end{definition}

\begin{lemma}
\label{lemma:adding_rop}
Suppose language $L$ has a PCP with verifier $V$ as described in \cref{fig:adding_rop}, then $L$ has a PCP with verifier $V'$ as described in \cref{fig:adding_rop} with $0$-ROP. Furthermore, if $V$ has $\tau$-RNL, then so does $V'$.
\begin{table}[ht]
    \centering
    \begin{tabular}{|c|c|c|}
        \hline 
        Complexity & $V$ & $V'$\tdelim
        \hline
        Robust soundness error & $s$ & $s$ \tdelim
        Robustness parameter & $\rho$ & $\Omega(\rho)$ \tdelim
        Randomness & $r$ & $r$ \tdelim
        Query & $q \geq r$ & $q$ \tdelim
        Proof length & $m$ & $m$ \tdelim
        Decision & $d$ & $\widetilde{O}\pars{t}$ \tdelim
        Parity-check & $-$ & $q$ \tdelim
        Runtime & $t$ & $\widetilde{O}\pars{t}$ \tdelim
        \end{tabular}
    \caption{The complexities of the original verifier $V$ and the $0$-ROP verifier $V'$.}
    \label{fig:adding_rop}
\end{table}
\end{lemma}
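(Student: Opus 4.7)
The plan is to absorb the randomness dependence of $V$'s decision predicate into the parity checks of a new, randomness-oblivious decision circuit. Concretely, I will use an error-correcting code to encode $R$ into parity-check values and let the new predicate decode this encoding before emulating $D(x; R)$. The key point is to choose the encoding so that small perturbations of the parity-check block cannot switch the decoded value to a different, more easily satisfiable $R'$; this is exactly what constant relative distance of the code delivers, and it is where the $q \geq r$ hypothesis is used.

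Let $\Enc\colon \{0,1\}^r \to \{0,1\}^{q^{\ast}}$ be a binary linear error-correcting code of constant rate and constant relative distance with a linear-time decoder correcting a constant fraction of errors (\cref{lemma:ecc_linear}); since $q$ is a constant multiple of (or larger than) $r$ we can take $q^{\ast} \leq q$. The new verifier $V'$ uses the same randomness, the same queries, and the same randomness partition as $V$. Its $q$ parity checks $C_1, \ldots, C_q$ output the $q^{\ast}$ bits of $\Enc(R)$ (each is an affine function of $R$ since $\Enc$ is linear), padded with constant parities. The new decision predicate $D'(a, b)$ is fixed by $x$ alone: decode $b$, rejecting if $b$ lies outside the decoding radius of every codeword; otherwise recover the unique nearby $R'$, simulate $V$ on $(x; R')$ to produce $D(x; R')$, and output $D(x; R')(a)$. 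Since $D'$ and the parity-check functions are determined by $x$ alone, $V'$ has $0$-ROP; since the queries and randomness partition are unchanged, $V'$ inherits $\tau$-RNL from $V$ using $V$'s listing agents verbatim.

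Completeness is immediate. For robust soundness, let $\eta$ be the code's relative distance, and suppose for some $R$ that is robustly sound for $V$ there exist $(a', b')$ with $D'(a', b') = 1$ and $|(a', b') - (\pi\restrict I, \Enc(R))| \leq \rho' (q + q^{\ast})$. Let $R'$ be the value decoded from $b'$, so $|b' - \Enc(R')| \leq \eta q^{\ast}/3$. If $R' = R$, then $D(x; R)(a') = 1$, and robustness of $V$ forces $|a' - \pi\restrict I| \geq \rho q$. If $R' \neq R$, the code's distance and the triangle inequality give $|b' - \Enc(R)| \geq \eta q^{\ast} - \eta q^{\ast}/3 = \Omega(q^{\ast})$. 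Either way, the absolute Hamming distance from $(\pi\restrict I, \Enc(R))$ to $(a', b')$ is $\Omega(\min(\rho q, q^{\ast}))$; since the total input length is $\Theta(q)$, this forces $\rho' = \Omega(\rho)$. Hence the fraction of $R$'s that admit such an $(a', b')$ is still strictly less than $s$. Finally, the new decision predicate $D'$ composes the linear-time decoder (size $O(q)$) with a universal circuit of size $\widetilde O(t)$ that simulates $V$'s computation to produce $D$ and evaluates it on $a$, giving the claimed $\widetilde O(t)$ decision complexity and runtime.

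The main technical obstacle is the two-sided nature of robust soundness in \cref{def:robustnessROP}: one must prevent a cheating ``shadow'' $b'$ from redirecting the decoder to a different, conveniently satisfiable $R'$, which is precisely why an error-correcting code (rather than, e.g., the identity encoding) is essential.
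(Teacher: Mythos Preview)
Your approach is the same as the paper's: encode $R$ with a linear constant-rate constant-distance code, feed the codeword in as the parity checks, and let the new predicate decode the parity block before emulating $D(x;R)$. The one substantive slip is in the robustness bookkeeping. You take $q^\ast = O(r)$ and then pad to $q$ parity checks with constants; the padding bits are ignored by your decoder, so an adversary can switch the first $q^\ast$ bits into the decoding ball of a different codeword $\Enc(R')$ at absolute cost only $\Theta(q^\ast)$. Your final step asserts $\Omega(\min(\rho q, q^\ast))/\Theta(q) = \Omega(\rho)$, but this requires $q^\ast = \Omega(\rho q)$, which the hypothesis $q \geq r$ alone does not give. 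In the intended application (\cref{thm:outerBGHSV}) one has $q \approx T(n)^{1/m}$ while $r \approx \log T(n)$, so $q^\ast/q$ is vanishingly small and the robustness you actually obtain is only $\Omega(r/q)$, not $\Omega(\rho)$.

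The fix is exactly what the paper does: instead of padding with constants, \emph{repeat} the base code so that $\Enc\colon\{0,1\}^r\to\{0,1\}^q$ has output length $q$ and still constant relative distance. Then switching to any other codeword costs $\Omega(q)$ bit flips, and your case analysis yields $\rho' = \Omega(\min(\rho,1)) = \Omega(\rho)$ as claimed. (The paper also replaces your bounded-distance decoder with an exact codeword test $\Test$ conjoined into the predicate, but your variant works just as well once the code length is corrected.)
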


\begin{proof}

We can replace the circuit $D$ of $V$, which depends on the randomness $R$, with another circuit $D_\rop$ such that $D_\rop(x,R) = D(x)$ for all $x\in \{0,1\}^q$ and randomness $R$. In other words, we give $R$ as an explicit input to the circuit $D_\rop$ and therefore remove the dependence of $D_\rop$ on $R$. Note however that the output of $D_\rop$ depends on both the randomness $R$ as well as the proof locations $\pi\restrict{I}$.

It is easy to see that this preserves the completeness and soundness of the PCP. However, this transformation might lose the guarantee on the {\em robust soundness} when we look at the input to the circuit. This is because even if $\pi\restrict{I}$ is far from satisfying $D$, it might be the case that $(\pi\restrict{I}, R)$ is close to satisfying $D_\rop$ (in this case when measuring the distance from satisfying answers, changes in $R$ are also allowed).

In order to overcome this issue we encode the randomness with a good error code. By suitably repeating the code of \Cref{lemma:ecc_linear}, we have a linear error correcting code $\Enc \colon \bool^r \to \bool^q$ with constant relative distance.
Based on its linear time decoder, two circuits are constructed:

\begin{itemize}
    \item Circuit $\Dec$ that on input $y \in \bool^q$ outputs $x \in \bool^r$ such that $\Enc\pars{x} = y$ if such $x$ exists, and an arbitrary value otherwise.\footnote{In fact, the circuit only needs decode valid codewords, which is easier than decoding noisy codewords.}
    \item Circuit $\Test$ that on input $y \in \bool^q$ outputs $1$ if $y$ is a codeword (i.e., in the image of $\Enc$), and $0$ if it is not.
\end{itemize}

The actual decision circuit generated by $V'$ is $D'\pars{z, y} \coloneqq D_\rop\pars{z, \Dec\pars{y}} \wedge \Test\pars{y}$. The parity checks generated by $V'$ are simply $\Enc\pars{R}$ -- indeed, since the encoding is linear then $\Enc\pars{R}_1,\dots, \Enc\pars{R}_q$ are linear functions in $R$.

\paragraph{Decision and runtime complexities.}
The original verifier $V$ constructs $D$ in time $t$. The circuit $D_\rop$ can emulate this using $\widetilde{O}\pars{t}$ gates \cite{Cook1988}. By \cref{lemma:ecc_linear}, the runtime of decoding and testing is $O\pars{q}$, thus circuits $\Dec$ and $\Test$ can be constructed using at most $\widetilde{O}\pars{q} = \widetilde{O}\pars{t}$ gates. All in all, the size of the new decision circuit $D'$ is at most $\widetilde{O}\pars{t}$.

Apart from constructing $D'$, the verifier $V'$ also produces the queries of $V$ and the parity checks $\Enc\pars{R}$. Queries are constructed in time at most $t$, and again by \cref{lemma:ecc_linear} computing $\Enc\pars{R}$ takes time $O\pars{q}$. Thus the runtime of $V'$ is at most $\widetilde{O}\pars{t}$.

\paragraph{Robust soundness.}
The original verifier has robust-soundness error $s$ with robustness parameter $\rho$, so with probability at least $1-s$ the input $\pi\restrict{I}$ to the circuit $D:=D(R)$ is at least $\rho$-far from satisfying $D$. This means that with probability at least $1-s$, $(\pi\restrict{I}, E(R))$ is at least $\min\{ \rho/2, \Omega(1)\}$ far from satisfying $D'$. To see that, note that to satisfy $D'$ either the first half of the input $\pi\restrict{I}$ needs to be changed in $q\cdot \rho$ locations, or the second half of the input $E(R)$ needs to be changed to another legal encoding $E(R')$ which by the properties of $E$ requires $\Omega(q)$ bit-changes. Thus, the robustness parameter of $V'$ is $\min\{ \rho/2, \Omega(1)\}\geq \Omega(\rho)$, and the robust soundness error remains $s$.
\end{proof}


\section{RNL-preserving PCP composition}
\label{section:composition}
In this section, we strengthen the composition theorem of \cite{BenSassonGHSV2006} by showing that it preserves RNL and (to some extent) ROP of a robust PCP.

\subsection{PCPs of Proximity}

Recall that we think of a PCP verifier as accepting input $x$ and proof $\pi$ if the answers received from $\pi$, denoted $\pi\restrict{I}$, satisfy the decision circuit $D$ generated by $V$. In a nutshell, PCP composition is done by replacing the naive verification of the claim \emph{``$\pi\restrict{I}$ satisfies $D$''} with a verification by an \defemph{inner} verifier.

The goal of PCP composition is to reduce the query complexity of an (\defemph{outer}) verifier by composing it with an inner verifier of smaller (even constant) query complexity. Hence, the inner verifier has restricted access not only to its proof, but also to part of its input (namely, $\pi\restrict{I}$). Since such a constrained verifier cannot distinguish between answers that satisfy $D$ to those that are close to satisfying $D$, its soundness condition is relaxed to rejection of answers that are \emph{far} from all satisfying assignments to $D$. Indeed, if the outer PCP had suitable robustness, this relaxation still yields a sound PCP. We formalize this discussion next.

\begin{definition}[Pair language and \CVP]
    A \defemph{pair language} $L$ is a subset of $\bool^\ast \times \bool^\ast$. The \defemph{Circuit Valuation Problem} (\CVP) is the pair language consisting of circuits and their accepting inputs. Formally,
    \begin{equation*}
        \CVP \coloneqq \braces{\pars{C,y} \vert C\pars{y} = 1}.
    \end{equation*}
\end{definition}

A PCP of Proximity for a pair language $L$ is given a pair $\pars{x,y}$ and a proof $\pi$, where access to $x$ is \emph{explicit} (i.e., $x$ can be read entirely) while only oracle access is given to $y$ and $\pi$ (so queries to $y$ are accounted for in the verifier's query complexity). The soundness condition is weakened to rejection with high probability only of $\pars{x,y}$ such that $y$ is \emph{far} from $L\pars{x} \coloneqq \braces{y' \vert \pars{x,y'} \in L}$. Formally:
\begin{definition}[PCP of Proximity (PCPP)]
Let $L \subseteq \bool^\ast \times \bool^\ast$ be a pair language. For $s,\delta \colon \N \to \N$, a restricted verifier $V$ over $\Sigma$ is a \defemph{PCP of proximity verifier} for $L$ with \defemph{proximity parameter} $\delta$ and \defemph{soundness error} $s$ if the following two conditions hold for any $x,y \in \bool^\ast$:
\begin{description}
	\item[Completeness] If $(x,y)\in L$, then there exists a proof $\pi$ such that $V(x)$ accepts the oracle $y \pi$ ($y$ is called the {\em input oracle} and $\pi$ is called the {\em proof oracle}) with probability $1$. Formally, \[\exists \pi \qquad \Pr_{(I,D)\getsr V(x)}[D(y\pi\restrict{I}) = 1]=1.\]
	\item[Soundness] If $y$ is $\delta$-far from the set $L(x) = \{ y'\mid (x,y')\in L\}$, then for every  proof oracle $\pi$, $V(x)$ accepts the oracle $y \pi$ with probability strictly less than $s$. Formally, \[\forall \pi \qquad
\Pr_{(I,D)\getsr V(x)}[D(y\pi\restrict{I}) = 1] < s(|x|).\]
\end{description}
For convenience, we assume that the input locations $I = \pars{i_1, \dots, i_q}$ are each of the form $i_k = \pars{b_k, j_k}$, where $b_k$ is a bit signifying the oracle of the $k$-th query, and $j_k$ is the location in that oracle. Formally, for each $k \in \brackets{q}$, if $b_k = 0$ (resp. $b_k = 1$) then $j_k \in \brackets{\abs{y}}$ (resp. $j_k \in \brackets{\abs{\pi}}$) is the location in $y$ (resp. in $\pi$) of the $k$-th query of $V$.
\end{definition}
\subsection{The composition theorem}
Ben-Sasson \etal \cite{BenSassonGHSV2006} show that the composition of a robust PCP $V_\outer$ and a PCP of proximity $V_\inner$ with suitable parameters yields a sound \defemph{composite} PCP (described in \cref{fig:composition_bghsv}) that enjoys the inner query complexity and (roughly) the outer randomness complexity.

\begin{figure}[ht]
	\begin{tcolorbox}[
		standard jigsaw,
		opacityback=0.5, 
		]
\begin{description}
    \item[Hardwired:] Outer verifier $V_\outer$ and inner verifier $V_\inner$.
    \item[Input:] Explicit input $x$, outer proof $\Pi$ and inner proofs $\braces{\pi_{R_\outer} \;\vert\; R_\outer \in \bool^{r_\outer}}$.
\end{description}
\begin{enumerate}

\item Sample $R_{\outer} \in \{0,1\}^{r_{\outer}}$.

\item Run $V_{\outer}(x;R_{\outer})$ to obtain $I_{\outer} = (i_1, i_2, \ldots, i_{q_{\outer}})$ and $D_\outer$.

\item Sample $R_{\inner}\in \{0,1\}^{r_{\inner}}$.

\item 
Run $V_{\inner}(D_{\outer}, R_{\inner})$ to obtain $I_{\inner} =  ((b_1,j_1),\ldots ,(b_{q_{\inner}},j_{q_{\inner}}))$ and $D_{\inner}$.

\item Initialize $I_\comp \coloneqq \emptyset$, and $D_\comp \coloneqq D_\inner$
\item For each $k \in \brackets{q_\inner}$, determine the queries of the composite verifier:
\begin{enumerate}
\item If $b_{k} = 0$, let $\compr{i}_k$ be the location of $\Pi(i_{j_k})$. Append $\compr{i}_k$ to $ I_\comp$.
\item If $b_k = 1$, let $\compr{i}_k$ be the location of $\pi_{R_{\outer}}(j_k)$. Append $\compr{i}_k$ to $ I_\comp$.
\end{enumerate}
\end{enumerate}
\begin{description}
    \item[Output] $(I_\comp, D_\comp)$.
\end{description}
\end{tcolorbox}
\caption{The composite verifier $V_\comp$ of \cite{BenSassonGHSV2006}.}
\label{fig:composition_bghsv}
\end{figure}

\begin{theorem}[{\cite[Theorem 2.7]{BenSassonGHSV2006}}]\label{thm:bghsv_composition}
    Suppose language $L$ has a robust PCP verifier $V_\outer$, and that $\CVP$ has a PCPP verifier $V_\inner$, with parameters described in \cref{fig:composition_parameters} such that $\rho_\outer \geq \delta_\inner$.
    Then, language $L$ has a PCP verifier $V_\comp$ as described in \cref{fig:composition_parameters}.
\end{theorem}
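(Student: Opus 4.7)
My plan is to separately verify completeness, soundness, and the parameter accounting for the composite verifier $V_\comp$ specified in \cref{fig:composition_bghsv}. The conceptual picture that guides everything is that the inner PCPP is invoked on explicit input $D_\outer$, with the outer answer vector $\Pi\restrict{I_\outer}$ playing the role of the PCPP's ``input oracle'' (accessed via the $b_k = 0$ branch of the routing) and $\pi_{R_\outer}$ playing the role of the ``proof oracle'' (the $b_k = 1$ branch). Thus $V_\inner$ is verifying exactly the statement $(D_\outer, \Pi\restrict{I_\outer}) \in \CVP$, and once this identification is made, everything reduces to invoking the outer and inner guarantees in sequence.

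For completeness, given $x \in L$ I would take any outer proof $\Pi$ accepted by $V_\outer$ with probability $1$. For each $R_\outer$ the string $\Pi\restrict{I_\outer(R_\outer)}$ satisfies $D_\outer(R_\outer)$, so $(D_\outer, \Pi\restrict{I_\outer(R_\outer)}) \in \CVP$; completeness of $V_\inner$ then furnishes an inner proof $\pi_{R_\outer}$ that is always accepted under the pair oracle described above. Collecting one $\pi_{R_\outer}$ per $R_\outer$ assembles a composite proof that is always accepted by $V_\comp$.

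For soundness, I plan to use the contrapositive together with a union bound. Fix $x \notin L$ and any composite proof $(\Pi, \{\pi_{R_\outer}\}_{R_\outer})$. By robust soundness of $V_\outer$, with probability at least $1 - s_\outer$ over $R_\outer$ the answer vector $\Pi\restrict{I_\outer}$ is $\rho_\outer$-far from every satisfying assignment to $D_\outer$; the hypothesis $\rho_\outer \geq \delta_\inner$ is precisely what promotes this to the ``$\delta_\inner$-far from $\CVP(D_\outer)$'' hypothesis needed by the PCPP. Conditioned on this event, PCPP soundness of $V_\inner$ yields rejection with probability at least $1 - s_\inner$ over $R_\inner$ regardless of the choice of $\pi_{R_\outer}$, so the overall soundness error of $V_\comp$ is at most $s_\outer + s_\inner$.

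The parameters can be read off from \cref{fig:composition_bghsv} with essentially no work: randomness $r_\outer + r_\inner$ from the independent coins, query complexity $q_\inner$ since only $V_\inner$ queries the composite proof, decision complexity $d_\inner$ (the composite predicate is $D_\inner$ unchanged), proof length $m_\outer + 2^{r_\outer}\cdot m_\inner$ (one inner proof per setting of $R_\outer$), and runtime $t_\outer + t_\inner$. The main subtle step I anticipate is bookkeeping for the query routing $j_k \mapsto \compr{i}_k$: I must check that when $b_k = 0$ the location $\compr{i}_k$ really lives in the $\Pi$ block and equals the $j_k$-th entry of the outer query vector $I_\outer$, and when $b_k = 1$ it indexes into the $R_\outer$-th inner proof block, so that the inner verifier's two oracle views coincide exactly with $\Pi\restrict{I_\outer}$ and $\pi_{R_\outer}$. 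Once this identification is pinned down, the completeness and soundness arguments above go through cleanly.
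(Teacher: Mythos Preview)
Your proposal is correct and follows essentially the same approach as the paper. The paper cites this theorem from \cite{BenSassonGHSV2006} without a standalone proof, but the soundness paragraph in the proof of \cref{lemma:composition_RNL_rop} gives precisely your argument: robust soundness of $V_\outer$ ensures that with probability at least $1-s_\outer$ the answer vector is $\rho_\outer$-far (hence $\delta_\inner$-far) from satisfying $D_\outer$, and then PCPP soundness of $V_\inner$ yields rejection with probability at least $1-s_\inner$, giving overall rejection probability at least $(1-s_\outer)(1-s_\inner)\ge 1-s_\outer-s_\inner$.
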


\begin{table}[ht]
    \centering
    \begin{tabular}{|c|c|c|c|}
        \hline 
        Complexity & $V_\outer$ & $V_\inner$ & $V_\comp$ \tdelim
        \hline 
        Soundness error & (Robust:) $s_\outer$ & $s_\inner$ & $s_\outer + s_\inner$ \tdelim
        Proximity parameter & - & $\delta_\inner$ & - \tdelim
        Robustness parameter & $\rho_\outer$ & - & - \tdelim
        Randomness & $r_\outer$ & $r_\inner$ & $r_\outer + r_\inner $\tdelim
        Query & $q_\outer$ & $q_\inner$ & $q_\inner$\tdelim
        Decision & $d_\outer$ & $d_\inner$ & $d_\inner$ \tdelim
        Runtime & $t_\outer$ & $t_\inner$ & $t_\outer + t_\inner$ \tdelim
        \end{tabular}
    \caption{The complexities of $V_\outer$, $V_\inner$ and $V_\comp$. The complexities of each verifier are taken with respect to its input; that is, the complexities of the outer and composite verifier are with respect to $n$, while those of the inner verifier are with respect to $d_\outer\pars{n}$. For example, $r_\outer + r_\inner$ refers to $r_\outer(n) + r_\inner(d_\outer(n))$.}
    \label{fig:composition_parameters}
\end{table}

Our goal is to reduce the query complexity of the PCP of \cref{section:outer} by composing it with a constant-query inner PCPP (see \cref{section:all_together}). Towards this end, we adapt the composition to consider ROP (described in \cref{fig:composition_bghsv_rop}), and strengthen \cref{thm:bghsv_composition} by showing that it preserves RNL of the outer PCP, and (to some extent) its ROP.
\begin{figure}[ht]
	\begin{tcolorbox}[
		standard jigsaw,
		opacityback=0.5, 
		]
\begin{description}
    \item[Hardwired:] Outer verifier $V_\outer$ and inner verifier $V_\inner$.
    \item[Input:] Explicit input $x$, outer proof $\Pi$ and inner proofs $\braces{\pi_{R_\outer} \vert R_\outer \in \bool^{r_\outer}}$.
\end{description}
\begin{enumerate}

\item Obtain outer verifier circuit $D_{\outer}$ from $V_\outer(x)$.\footnote{We use $0$-ROP of the outer verifier to obtain the circuit before sampling the outer randomness.}

\item Sample $R_{\outer} \in \{0,1\}^{r_{\outer}}$.

\item Run $V_{\outer}(x;R_{\outer})$ to obtain $I_{\outer} = (i_1, i_2, \ldots, i_{q_{\outer}})$ and the parity-checks $P_{\outer} = (C_1, \ldots, C_{p_{\outer}})$.

\item Sample $R_{\inner}\in \{0,1\}^{r_{\inner}}$.

\item 
Run $V_{\inner}(D_{\outer}, R_{\inner})$ to obtain $I_{\inner} =  ((b_1,j_1),\ldots ,(b_{q_{\inner}},j_{q_{\inner}}))$ and $D_{\inner}$.

\item Initialize $I_\comp, P_\comp \coloneqq \emptyset$, and $D_\comp \coloneqq D_\inner$
\item For each $k \in \brackets{q_\inner}$, determine the queries of the composite verifier:
\begin{enumerate}
\item If $b_{k} = 0$ and $j_k \leq i_{q_\outer}$, let $\compr{i}_k$ be the location of $\Pi(i_{j_k})$. Append $\compr{i}_k$ to $I_\comp$.
\item \label{step:parity}If $b_{k} = 0$ and $j_k > i_{q_\outer}$. Append $C_{j_k - q_\outer}$ to $P_\comp$.
\item If $b_k = 1$, let $\compr{i}_k$ be the location of $\pi_{R_{\outer}}(j_k)$. Append $\compr{i}_k$ to $I_\comp$.
\end{enumerate}
\end{enumerate}
\begin{description}
    \item[Output] $(I_\comp, P_\comp, D_\comp)$.
\end{description}
\end{tcolorbox}
\caption{The composite verifier $V_\comp$ of \cref{fig:composition_bghsv}, adapted to preserve ROP.}
\label{fig:composition_bghsv_rop}
\end{figure}

\begin{lemma}
\label{lemma:composition_RNL_rop}
    In the setting of \cref{thm:bghsv_composition}, assume further that $V_\outer$ has $0$-ROP with parity-check complexity $p_\outer$, and $\tau$-RNL with listing agent runtime of $t_\RNL$. Then $V_\comp$ has $\compr{\tau}$-ROP with parity-check complexity $p_\outer$, and $\compr{\tau}$-RNL with listing agent runtime of $t_\RNL \cdot 2^{r_\inner} \cdot q_\inner \cdot t_\inner$, where
    \begin{equation*}
        \compr{\tau} = \frac{r_\inner + \tau \cdot r_\outer}{r_\inner + r_\outer}.
    \end{equation*}
    Furthermore, the shared and aware parts of the randomness are the same.
\end{lemma}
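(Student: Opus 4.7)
The plan is to partition the composite randomness so that the $\hat{\tau}$ formula emerges naturally, then verify ROP and RNL with respect to this partition. I set $R_{\comp.\row} \coloneqq R_{\outer.\row}$, $R_{\comp.\column} \coloneqq R_{\outer.\column}$, and $R_{\comp.\shared} \coloneqq (R_{\outer.\shared}, R_\inner)$ (splitting $R_\inner$ arbitrarily between the row-shared and column-shared sub-parts), and declare $R_{\comp.\aware} \coloneqq R_{\comp.\shared}$ with $R_{\comp.\oblivious} \coloneqq (R_{\outer.\row}, R_{\outer.\column})$. This gives $|R_{\comp.\aware}| = \tau \cdot r_\outer + r_\inner$, matching $\hat{\tau} \cdot (r_\outer + r_\inner)$, and the shared and aware parts coincide by construction.

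For ROP, I will verify the following: since $V_\outer$ has $0$-ROP, its decision circuit $D_\outer$ and parity-check functions $C_1,\ldots,C_{p_\outer}$ depend only on $x$, with each $C_j$ an affine function on $R_\outer$. The composite decision circuit $D_\comp = D_\inner$ is produced by $V_\inner(D_\outer, R_\inner)$ and hence depends only on $R_\inner$, which is contained in $R_{\comp.\aware}$. The parity checks appended to $P_\comp$ are precisely the $C_{j_k - q_\outer}$ selected by $V_\inner$'s behavior on $R_\inner$, each of which is an affine function of $R_\outer = R_{\comp.\oblivious}$, as required by the ROP definition.

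For RNL, a composite full configuration $(R_\outer, R_\inner, k_\comp) \in \bool^{r_\outer} \times \bool^{r_\inner} \times [q_\inner]$ has its query location determined by simulating $V_\inner(D_\outer, R_\inner)$ to obtain $(b_{k_\comp}, j_{k_\comp})$: if $b_{k_\comp} = 1$ it points to $\pi_{R_\outer}(j_{k_\comp})$; if $b_{k_\comp} = 0$ with $j_{k_\comp} \leq q_\outer$ it points to the outer proof location that $V_\outer(x; R_\outer)$ queries as its $j_{k_\comp}$-th query; the $j_{k_\comp} > q_\outer$ case is a parity check and excluded from the query list. The composite row agent, on input $(R_{\comp.\row}, R_{\comp.\shared}, k_\comp)$, extracts $R_\inner$ from the shared part, simulates $V_\inner(D_\outer, R_\inner)$ to find $(b_{k_\comp}, j_{k_\comp})$, and branches. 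If $b_{k_\comp} = 1$, it enumerates all $(R'_\inner, k') \in \bool^{r_\inner} \times [q_\inner]$ for which $V_\inner(D_\outer, R'_\inner)$'s $k'$-th query equals $(1, j_{k_\comp})$, and emits the row-parts keeping $R_{\outer.\row}$ unchanged (since inner-proof locations are indexed by the full $R_\outer$, all neighbors share the same $R_\outer$) in a canonical deterministic order. If $b_{k_\comp} = 0$ with $j_{k_\comp} \leq q_\outer$, it invokes the outer row agent $A_{\outer.\row}$ on $(R_{\outer.\row}, R_{\outer.\shared}, j_{k_\comp})$, and for each returned outer-neighbor index $k'_\outer$ enumerates all $(R'_\inner, k')$ whose $k'$-th inner query is $(0, k'_\outer)$, emitting the corresponding row-parts. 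The column agent is symmetric.

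The zipped-list property (\cref{rnl_zipped} of \cref{def:RNL}) holds because each outer-agent entry supplies row and column randomness pieces that recombine into the full $R'_\outer$, and the inner enumeration proceeds identically on both sides; synchronization (\cref{rnl_sync}) follows from the deterministic inner enumeration order combined with the synchronization guaranteed by $V_\outer$'s RNL; the index-in-list property (\cref{rnl_index}) follows since each agent knows its outer sub-index (by $V_\outer$'s RNL) and the position of its $(R_\inner, k_\comp)$ within the deterministic inner enumeration, all of which live in the shared randomness. The main obstacle is that $V_\inner$ is not assumed to have any structure, so finding all inner configurations hitting a given target requires brute-force enumeration over $\bool^{r_\inner} \times [q_\inner]$ with an $O(t_\inner)$ inner-verifier simulation per candidate; this combines multiplicatively with the outer agent's $t_\RNL$ runtime for listing outer neighbors, giving the claimed bound $t_\RNL \cdot 2^{r_\inner} \cdot q_\inner \cdot t_\inner$.
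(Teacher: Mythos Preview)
Your proposal is correct and follows essentially the same route as the paper: identical randomness partition (outer row/column parts become composite row/column parts, outer shared plus all of $R_\inner$ becomes composite shared and aware), the same two-case analysis for RNL (inner-proof queries handled by brute-force over $(R'_\inner,k')$ with $R_\outer$ fixed; outer-proof queries handled by invoking the outer agents and then brute-forcing the inner configurations per outer neighbor), and the same ROP argument via $0$-ROP of $V_\outer$ making $D_\outer$ randomness-free so that $D_\inner$ depends only on $R_\inner$.

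One small slip to fix: you write ``$R_\outer = R_{\comp.\oblivious}$'', but these are not equal since $R_{\outer.\shared}$ sits inside $R_{\comp.\aware}$, not $R_{\comp.\oblivious}$. The outer parity checks $C_j$ are affine in all of $R_\outer$; once $R_{\outer.\shared}$ is fixed (as part of the aware randomness), each $C_j$ becomes an affine function of $(R_{\outer.\row},R_{\outer.\column}) = R_{\comp.\oblivious}$ whose constant term may depend on the aware part, which is exactly what the ROP definition permits. The paper makes this point explicitly; otherwise your argument matches.
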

\begin{remark}
    We stress that the outer verifier's ROP (as well as RNL) is used in showing RNL of the composite verifier. Briefly, this is so that the composite listing agents can emulate the inner verifier so as to obtain its queries, without having access to the outer randomness.
\end{remark}

\begin{proof}
First, some names and notation. In the following proof, $V_\outer$, $V_\inner$ and $V_\comp$ will be called the \defemph{outer}, \defemph{inner} and \defemph{composite} verifiers (respectively). We affix the terms \defemph{outer}, \defemph{inner} and \defemph{composite} when discussing components of the respective verifiers. For example, the outer randomness refers to the randomness used by the outer verifier. We denote the outer randomness by $R_{\outer}$, the inner randomness by $R_\inner$, and the composite by $\compr{R}$ -- this will avoid double-indices when we further partition the outer and composite randomness.

\paragraph{Soundness}
Note that the outer verifier $V_\outer$ on randomness $R_\outer$ queries the outer proof $\Pi$ at $q_\outer$ many locations $I$. The verifier then computes $p_\outer$ many parities $P_\outer\coloneqq (C_1(R_\outer), C_2(R_\outer), \ldots, C_{p_\outer}(R_\outer))$ of $R_\outer$ and evaluates the circuit $D_\outer(\Pi_{|I}, P_\outer)$. By the robust soundness property of $V_\outer$ with probability at least $(1-s_\outer)$ over $R_\outer$, the string $(\Pi_{|I_\outer}, P_\outer) $ is at least $\rho_\outer$-far from satisfying $D_\outer$. Therefore, with probability at least $(1-s_\outer)$, the inner verifier $V_\inner$ gets $(D_\outer, (\Pi_{|{I_\outer}}, P_\outer))$ instance of circuit value problem where $(\Pi_{|I_\outer}, P)$ is $\rho_\outer$-far\footnote{The guarantee on the input to the circuit is strong than just saying that it is $\rho_\outer$-far from satisfying $D_\outer$. This is because the inner verifier always gets the {\em honest} parities $C_i(R_\outer)$ when needed, unlike the proof queries which can be arbitrary in soundness case (See \Cref{step:parity} in \Cref{fig:composition_bghsv_rop}). However, we do not need this structure in proving the lemma. } from satisfying $D_\outer$. Since, the proximity parameter $\delta_\inner$ of $V_\inner$  satisfies $\delta_\inner \leq \rho_\outer$, $V_\inner$ rejects any such proofs with probability at least $1-s_\inner$. Therefore, the composite verifier rejects the proof with probability at least $(1-s_\outer)(1-s_\inner)$ and hence the soundness error is at most $1-(1-s_\outer)(1-s_\inner) \leq s_\outer + s_\inner$.

\paragraph{RNL}
To show RNL of $V_\comp$, we show that there is a partition of the composite randomness, as well as a row neighbor-listing agent $\compr{A}_\row$ and column neighbor-listing agent $\compr{A}_\column$.

Let the RNL partition of the outer randomness $R_{\outer}$ be given by
\begin{equation*}
    R = (R_\row, R_\column, R_\sharedrow, R_\sharedcolumn)\in \{0,1\}^{r_{\outer}}.
\end{equation*}

The partition of the composite randomness $\compr{R}$ (which consists of the randomness of both outer and inner verifier) is rather simple: the composite row and column parts are the same as the outer's, and the shared part is formed of the outer's shared part as well as the entire inner randomness. Formally, composite randomness $\compr{R}$ is partitioned into
\begin{align*}
    \compr{R}_\row &\coloneqq R_\row \\
    \compr{R}_\column &\coloneqq R_\column \\
    \compr{R}_\sharedrow &\coloneqq \pars{R_\sharedrow, R_\inner}\\
    \compr{R}_\sharedcolumn &\coloneqq \pars{R_\sharedcolumn, R_\inner}.
\end{align*}
As before, some bits of shared randomness (in this case, $R_{\inner}$) appear in both $R_\sharedrow$ and $R_\sharedcolumn$. Again, we can precisely meet the requirement in \Cref{def:RNL} by further partitioning the inner randomness into two parts. Then, indeed we have
\begin{equation*}
    \compr{\tau} \coloneqq \frac{ \abs{R_\inner} + \abs{R_\sharedrow} + \abs{R_\sharedcolumn} }{\abs{R_\inner} + \abs{R}} = \frac{r_\inner + \tau \cdot r_\outer}{r_\inner + r_\outer}.
\end{equation*}

We construct the row and column agents of $V_\comp$, denoted $\compr{A}_\row$ and $\compr{A}_\column$ (respectively). Fix a configuration $(\compr{R},k) \coloneqq (R_\outer, R_\inner, k) $  where the outer randomness is $R_\outer$ and the inner randomness is $R_\inner$. We describe only the row agent $\compr{A}_\row$, as the column agent $\compr{A}_\column$ is analogous. The outcome depends on whether the query was issued to the outer proof or to one of the inner proofs. (In the notation of \cref{fig:composition_bghsv_rop}, this is determined by the value of $b_k$.)

\begin{description}
\item[Case 1] The $k$-th query is to an inner proof (i.e., $b_k=1$).

Observe that if a configuration $(R'_\outer, R'_{\inner}, k')$ (\footnote{We stress that $R'_{\outer}$ and, $R'_{\inner}$ are the outer and inner randomness of the configuration.}) results in a query to the same location, then it must have the same outer randomness as $\pars{\compr{R},k}$; that is, $R'_\outer=R_\outer$. If indeed $R_\outer=R'_\outer$, checking whether the location queried by the two configurations is the same depends only on the inner randomness and the query index, and these are known given the composite shared randomness. In such a case, $\compr{A}_{\row}$ lists all neighboring configurations by finding all possible $(R'_{\inner}, k')$ that lead to the same location via ``brute-force'' enumeration. Thus, in this case, on input $\pars{R_\row, R_\shared, R_\inner}$, the row agent runs as follows:
\begin{enumerate}
    \item Initialize a list $\compr{L}_\row$.
    \item For each $R'_\inner \in \bool^{r_\inner}$ and $k' \in \brackets{q_\inner}$ such that $b_{k'}=1$:
    \begin{enumerate}
        \item Check if the inner configuration $\pars{R'_\inner,k'}$ results in the same query location as the configuration $\pars{R_\inner,k}$. If so, add $\pars{R_\row, R_\sharedrow, R'_\inner, k'}$ to the list $\compr{L}_\row$.
    \end{enumerate}
    \item Output $\compr{L}_\row$.
\end{enumerate}

\item[Case 2]  The $k$-th query is to the outer proof (i.e., $b_k=0$).

Let $(R'_{\outer},R'_{\inner},k')$ be a configuration that potentially leads to read the same proof location as $(R_{\outer}, R_{\inner}, k)$.
Denote by $(b'_{k'}, j'_{k'})$ the $k'$-query of the inner verifier on $(D_{\outer}, R'_{\inner})$.

In this case, we have that $(R_{\outer}, R_{\inner}, k)$ and $(R'_{\outer},R'_{\inner},k')$ are neighboring configurations if and only if
$b'_{k'}=0$ and the underlying configurations to $V_{\outer}$, namely $(R_{\outer},j_{k})$ and $(R'_{\outer}, j'_{k'})$, are neighboring configurations with respect to $V_{\outer}$.

Thus, we start by listing all neighboring configurations of $(R_{\outer}, j_k)$ with respect to $V_{\outer}$ in a rectangular and synchronized fashion. Then, we extend each such neighbor $(R'_{\outer}, j')$ to a list of all configurations $(R'_{\outer}, R'_{\inner}, k')$ to $V_{\comp}$ that read the same location in the outer proof.

Thus, in this case, on input $\pars{R_\row, R_\shared, R_\inner}$, the row agent runs as follows:

\begin{enumerate}
    \item Initialize a list $\compr{L}_\row$.
    \item Compute $j_k$ based on $D_\outer$ and $R_\inner$.
    \item Invoke the outer row agent $A_\row\pars{R_\row, R_\shared, j_k}$ to obtain a list $L_\row$.
    \item \label{comp_RNL_dominating_runtime} For each $(R'_{\row}, R'_{\sharedrow} ,j')$ in $L_\row$:
    \begin{enumerate}
        \item For each $R'_\inner \in \bool^{r_\inner}$ and $k' \in \brackets{q_\inner}$:
        \begin{enumerate}
            \item Emulate $V_\inner$ on input $D_\outer$ and randomness $R'_\inner$. If the $k^\prime$-th query is to location $j'$ in the input oracle, i.e., $b_{k'} = 0$ and $j_{k'} = j'$, add $\pars{R'_\row, R'_\sharedrow, R'_\inner, k'}$ to $\compr{L}_\row$.
        \end{enumerate}
    \end{enumerate}
    \item Output $\compr{L}_\row$.
\end{enumerate}

\end{description}
It is simple to verify that these lists are synchronized and canonical, and that the zip $\compr{L}$ of the two  lists $\compr{L}_{\row}$ and $\compr{L}_{\column}$ lists all neighboring configurations of $(\compr{R},k)$.

We show that $\compr{A}_{\row}$ knows the index of the full given configuration $(\compr{R}, k)$ in $\compr{L}_{\row}$ (an analogous statement holds for $\compr{A}_{\column}$). In case 2, observe that running $A_{\row}$ on $(R_{\row}, R_{\shared}, j_k)$ gives the unique index of the entry corresponding to $(R_{\outer},j_k)$ in $L_\row$. Furthermore, from the shared randomness, $\compr{A}_{\row}$ knows $(R_{\inner},k)$. Together this identifies uniquely the list entry in $\compr{L}_{\row}$ that corresponds to $(R, k)$. In case 1, this is even simpler, since from the shared randomness $\compr{A}_{\row}$ knows $(R_{\inner},k)$ and can thus identify the unique corresponding list entry in $\compr{L}_{\row}$.

The running time of both agents is dominated by the double enumeration in \cref{comp_RNL_dominating_runtime} of Case 2, which is dominated by
\begin{equation}
    t_\RNL \cdot 2^{r_\inner} \cdot q_\inner \cdot t_\inner.
\end{equation}

\paragraph{ROP}
We prove that the composite verifier has $\compr{\tau}$-ROP, where the aware and shared part of the randomness are the same. Since the outer verifier has $0$-ROP, then the description of $D_{\outer}$ is indeed independent of the randomness. Also, the set of queries $I_{\inner} =  ((b_1,j_1),\ldots ,(b_{q_{\inner}},j_{q_{\inner}}))$ and $D_{\inner}$ depend only on $R_{\inner}$, which is a part of the aware randomness. Thus, once the aware randomness is fixed, the query corresponding to  $(b_c, j_c)$ where $b_{c} = 0$ and $j_c > i_{q_\outer}$, is a fixed parity of the outer randomness $R_{\outer}$. Since the aware part of randomness contains the shared outer randomness, the $j_c$-th input to $D_\inner$ is a fixed affine (rather than linear) function of $R_\oblivious$.

This shows that the predicate of the composite verifier depends only on the aware randomness, which is the same as the shared randomness. Additionally, the randomness parity checks fed into the predicate are determined by the aware part of the randomness.
\end{proof}


\section{Obtaining the final construct}\label{section:all_together}
In this section, we obtain our final {\em short}, {\em efficient} and {\em constant-query} PCP for languages in $\NTIME(T(n))$ that is smooth, rectangular and has ROP.

We will use the randomness-efficient, constant-query PCPP of Mie in the inner level of our composition.

\newcommand{\mie}{\inner} 
\begin{theorem}[\cite{Mie2009}]
\label{thm:miepcpp}
Suppose $L$ is a pair language in $\NTIME(T(n))$ for some non-decreasing function $T(n)$. Then, for every constant $0<s,\delta<1$, $L$ has a PCPP verifier with the following parameters:
\begin{itemize}
\item randomness complexity $r_\mie(n) \coloneqq \log T(n) + O(\log \log T(n))$,
\item query complexity $q_\mie(n) \coloneqq O_{s, \delta}(1)$,
\item verification time $t_\mie(n) \coloneqq \poly(n,  \log K, \log T(n+ K))$.
\item soundness error $s$ and proximity parameter $\delta$.
\end{itemize}
Here $K$ is the length of the implicit input $y$.
\end{theorem}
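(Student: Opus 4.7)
The plan is to construct this PCPP by a two-level modular composition, analogous to the one used earlier in the paper. The outer level provides short proofs but many queries, and the inner level is applied to reduce queries to a constant while preserving both length and verification efficiency.

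First, I would invoke a short (nearly-linear length) robust PCPP for pair languages in $\NTIME(T)$, with randomness $r_\outer(n) = \log T + O(\log\log T)$, constant robust soundness error $s_\outer$, constant robustness parameter $\rho_\outer$, query and decision complexity $D(n) = \polylog T(n) \cdot T(n)^{o(1)}$, and verification time $\poly(n, \log T)$. The existence of such a PCPP follows from the Reed--Muller-based construction of Ben-Sasson~\etal~\cite{BenSassonGHSV2006,BenSassonGHSV2005}, converting the robust PCP verifier of \Cref{thm:outerBGHSV} to a robust PCPP by the standard transformation (using a linear-time-encodable error-correcting code to protect the implicit input, as mentioned in \cref{rem:CLW}).

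Second, for the inner verifier I would use a constant-query, polynomial-size PCPP of proximity for $\CVP$, with soundness error $s_\inner$ strictly smaller than $\rho_\outer$, proximity parameter $\delta_\inner \leq \rho_\outer$, and randomness $r_\inner(D) = O(\log D)$. This can be obtained by taking any polynomial-size, constant-query PCPP of proximity (for instance, one built from Dinur's gap amplification together with the BGHSV short PCPP) and instantiating it on the decision circuit of the outer verifier, which has size $D$. Since the inner verifier only needs to handle circuits of size $D = \polylog T$, its running time is $\poly(D) = \poly(\log T)$.

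Third, I would apply the PCPP composition theorem (essentially \cref{thm:bghsv_composition} adapted to pair languages) to compose these two verifiers. The resulting composite verifier has randomness complexity $r_\outer + r_\inner = \log T + O(\log\log T) + O(\log D) = \log T + O(\log\log T)$ (since $\log D = O(\log\log T)$), query complexity $q_\inner = O_{s,\delta}(1)$, soundness error at most $s_\outer + s_\inner$ which can be driven below any desired constant $s$ by repetition, proximity parameter $\delta$ determined by $\delta_\inner$, and verification time $t_\outer + t_\inner = \poly(n, \log T) + \poly(\log T) = \poly(n, \log K, \log T(n+K))$ as required. The main subtlety to check is that the verifier's running time picks up only polylogarithmic dependence on the implicit input length $K$: this is ensured because the outer verifier's access to the implicit input is black-box (the implicit input enters only through the queries of the PCPP), and the decision circuit it produces has size depending only on $T$ and $n$, not on $K$.
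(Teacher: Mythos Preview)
The paper does not prove this theorem at all; it is imported as a black box from Mie~\cite{Mie2009} and used as the inner verifier in the composition of \cref{thm:combinedPCP}. So there is no ``paper's own proof'' to compare against.

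That said, your sketch has a genuine gap in the randomness accounting. With constant $m$, the outer BGHSV verifier has decision complexity $D = T^{1/m}\cdot\polylog T$, not $\polylog T \cdot T^{o(1)}$. A generic polynomial-size inner PCPP then has randomness $c\log D$ for some constant $c>1$, so the composite randomness is
\[
\Big(1-\tfrac{1}{m}\Big)\log T + c\cdot\tfrac{1}{m}\log T + O(\log\log T) \;=\; \Big(1+\tfrac{c-1}{m}\Big)\log T + O(\log\log T),
\]
which is $(1+\varepsilon)\log T$ for a positive constant $\varepsilon$, not $\log T + O(\log\log T)$. To close this gap you would need the inner PCPP itself to have near-linear proof length (i.e., randomness $\log D + O(\log\log D)$) with constant queries --- but that is exactly the content of the theorem you are trying to prove, so the argument is circular. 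Letting $m$ grow does not help either: the $O(m\log\log T)$ term in the outer randomness then exceeds $O(\log\log T)$.

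Mie's actual construction (building on \cite{BenSassonGHSV2005,Dinur2007}) avoids this by \emph{iterated} composition: roughly $O(\log\log T)$ rounds, each reducing the decision complexity by a polynomial while adding only $O(\log\log T)$ randomness, combined with Dinur's gap amplification to restore robustness after each round. A single two-level composition is provably insufficient for the stated randomness bound.
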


We can now prove the following main theorem.
\begin{theorem}
\label{thm:combinedPCP}
Let $L$ be a language in
$\NTIME(T(n))$ for some non-decreasing function $T \colon \N \rightarrow \N$. There exists a universal constant $c$ such that for all odd integers $m \in \mathbb{N}$, and any constant $s \in (0,\frac{1}{2})$ satisfying $T(n)^{1/m} \geq m^{cm}/s^6$, $L$ has a PCP verifier with the following parameters:
\begin{itemize}
    \item Alphabet $\bool$.
    \item Randomness complexity $r\pars{n} = \log T(n) + O(m\log \log T(n))+ O(\log n)$.
    \item Soundness error $s$.
    \item Decision,  query and parity-check complexities all $O_s(1)$.
    \item Verifier runtime $t(n) = \poly(n,T(n)^{1/m})$.
    \item The verifier has $\tau$-ROP and is $\tau$-rectangular, where
    $$\tau \cdot r(n) = \frac{6}{m}\log T(n) +  O(m\log \log T(n)) + O(\log n).$$
    Furthermore, the shared and the aware parts of the randomness are the same.
\end{itemize}
\end{theorem}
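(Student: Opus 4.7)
The plan is to chain together the building blocks developed in the preceding sections: start from the outer robust PCP with RNL (\cref{thm:outerBGHSV}), attach ROP while preserving RNL (\cref{lemma:adding_rop}), reduce query complexity to a constant by composing with Mie's PCPP (\cref{thm:miepcpp}) using the RNL/ROP-preserving composition (\cref{lemma:composition_RNL_rop}), and finally convert RNL into perfect rectangularity (and smoothness as a side benefit) via \cref{thm:smoothification}. Each link in the chain has already been shown to preserve or create exactly the feature needed by the next link, so the work is to instantiate parameters and compose.

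Concretely, first invoke \cref{thm:outerBGHSV} with soundness $s/4$ to obtain a robust PCP $V_\outer$ over $\bool$ with $\tau_\outer$-RNL, robustness parameter $\Theta(s)$, decision/query complexity $d_\outer = T(n)^{1/m}\cdot \poly(\log T(n), 1/s)$, runtime $\poly(n, T(n)^{1/m})$, and shared-randomness budget $\tau_\outer r_\outer = (4/m)\log T(n) + O(m\log\log T(n)) + O(\log(1/s))$. Apply \cref{lemma:adding_rop} to convert $V_\outer$ into a $0$-ROP verifier while preserving RNL and keeping robustness $\Theta(s)$. Second, take as inner verifier Mie's PCPP (\cref{thm:miepcpp}) for the circuit-value problem on circuits of size $d_\outer$, with proximity $\delta_\inner$ smaller than the outer robustness parameter and soundness $s/4$; this verifier has $O_s(1)$ queries, randomness $r_\inner = \log d_\outer + O(\log\log d_\outer) = (1/m)\log T(n) + O(\log\log T(n)) + O(1)$, and runs in $\poly(n, T(n)^{1/m})$ time. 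Apply the composition \cref{lemma:composition_RNL_rop} to obtain a constant-query PCP with $\compr{\tau}$-RNL, $\compr{\tau}$-ROP, and coincident shared/aware parts. Third, apply \cref{thm:smoothification} with a small constant $\mu$ to replace RNL with perfect rectangularity (and smoothness), while preserving ROP, shared $=$ aware, and $O_s(1)$ query/parity-check/decision complexities.

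The parameter bookkeeping unfolds as follows. The total randomness is $r(n) = r_\outer(n) + r_\inner(d_\outer(n)) = \log T(n) + O(m\log\log T(n)) + O(\log n)$. The composite shared-equals-aware part is the outer shared part plus the whole inner randomness, giving $\tau r(n) = (5/m)\log T(n) + O(m\log\log T(n)) + O(1)$, safely within the $(6/m)\log T(n) + O(m\log\log T(n)) + O(\log n)$ budget allowed by the statement. Soundness accumulates additively across the three stages (outer, inner, smoothification); taking $s/4$ at each of the first two stages and $\mu$ a small constant yields final soundness below $s$. Query, parity-check, and decision complexities are $O_s(1)$ after the inner composition and remain $O_s(1)$ under smoothification, and the runtime is dominated by the outer cost $\poly(n, T(n)^{1/m})$.

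The main obstacle is accountancy rather than any new conceptual ingredient: one must confirm the composition prerequisite $\rho_\outer \ge \delta_\inner$ by choosing constants appropriately, verify that Mie's PCPP invoked on inputs of size $\Theta(d_\outer)$ indeed contributes only $(1/m)\log T(n)$ to the total randomness, and carefully track the row/column/shared-row/shared-column partition from the RNL side in lockstep with the oblivious/aware partition from the ROP side so that, after all four transformations, the final verifier has the exact randomness decomposition required by \cref{def:rectangular} and the shared $=$ aware property needed for \cref{claim:pcps_to_rigid}. All the structural heavy lifting has been done in \cref{sec:smoothification,section:outer,sec:rop,section:composition}; here it merely needs to be assembled.
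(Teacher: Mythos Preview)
Your proposal is correct and follows essentially the same four-step pipeline as the paper: outer robust PCP with RNL (\cref{thm:outerBGHSV}), add ROP (\cref{lemma:adding_rop}), compose with Mie's PCPP (\cref{thm:miepcpp}, \cref{lemma:composition_RNL_rop}), then smoothify (\cref{thm:smoothification}). The only slip is in the size of the circuits fed to the inner PCPP: after \cref{lemma:adding_rop} the decision complexity blows up from $d_\outer$ to $\widetilde{O}(t_\outer(n)) = T(n)^{1/m}\cdot\poly(n,\log T(n))$, so $r_\inner$ picks up an $O(\log n)$ term you omitted---but this is absorbed by the $O(\log n)$ slack in the statement and does not affect the argument.
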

\begin{proof}
We start with the robust PCP verifier over the Boolean alphabet for $L$ given by \Cref{thm:outerBGHSV}, with the following complexities:
\begin{itemize}
    \item Randomness complexity $r_\outer(n) = (1-\frac{1}{m})\log T(n) + O(m \log \log T(n)) + O(\log(1/s)).$
    \item Query and Decision complexity $q_\outer(n) = d_\outer(n) = T(n)^{1/m} \cdot \poly(\log T(n), 1/s).$
    \item Verifier running time $t_\outer(n) = q_\outer(n) \cdot \poly(n, \log T(n)).$
    \item $\tau_\outer$-RNL, with 
    $\tau_\outer \cdot r_\outer(n)= \frac{4}{m}\log T(n) + O(m\log \log T(n)) + O(\log(1/s))$,
    and listing-agents runtime $t_{\RNL, \outer}(n) = \poly(\log T(n))$.
    \item Robust soundness error $s/3$ with robustness parameter $\rho = \Theta(s)$.
    
\end{itemize}

We next add the $\tau_\outer$-ROP to this PCP verifier using \Cref{lemma:adding_rop}. This step increases the decision complexity and verifier's running time to $\widetilde{O}(t_\outer(n))$ and adds the parity-check complexity $q_\outer(n)$. It also reduces the robustness parameter by a constant factor to $\Omega(\rho)$.  

Next, we wish to compose this (outer) robust PCP with the inner PCPP of \cref{thm:miepcpp}. Since the decision complexity of the outer verifier is $\widetilde{O}(t_\outer(n))$ the inner PCPP ought to verify the pair-language {\CVP } $\in \NTIME(\widetilde{O}(t_\outer(n)))$. Therefore, we compose this robust PCP with the PCPP verifier of Mie given in \Cref{thm:miepcpp}, for the pair language {\CVP } $\in \NTIME(\widetilde{O}(t_\outer(n)))$,  with soundness error $s/3$ and proximity parameter which is greater than the robustness parameter $\Omega(\rho)$ of the outer verifier.  The PCPP has query complexity $O_s(1)$, randomness complexity $r_\mie(\widetilde{O}(t_\outer(n))) = \frac{1}{m}\log T(n) + O(\log \log T(n))+ O(\log n) + O(\log(1/s))$  and verification time $t_\mie(\widetilde{O}(t_\outer(n))) = \poly( t_\outer(n)) = \poly(n, T(n)^{1/m}, 1/s)$.

By \Cref{thm:bghsv_composition,lemma:composition_RNL_rop}, the composite PCP verifier has soundness error $2s/3$, and query and parity-check complexities both $O_s(1)$. The randomness complexity of the composite verifier, denoted by $r(n)$, is $r_\outer(n) + r_\mie(\widetilde{O}(t_\outer(n)))$. 
The running time of the composite verifier, denoted $t_{\mathrm{comp}}(n)$, is $\widetilde{O}\pars{t_\outer(n)} + t_\mie(\widetilde{O}\pars{t_\outer(n)}) = \poly(t_\outer(n)) =  \poly(n,T(n)^{1/m})$. 

 The composite verifier has $\compr{\tau}$-RNL and $\compr{\tau}$-ROP where
 $$\compr{\tau} = \frac{r_\mie(\widetilde{O}(t_\outer(n))) + \tau_\outer \cdot r_\outer(n)}{r_\mie(\widetilde{O}(t_\outer(n))) + r_\outer(n)} \leq \tau .$$
 Furthermore, the $\shared$ and the $\aware$ parts of the randomness are indeed the same. The running time of the RNL agents, denoted $t_{\RNL}(n)$, is at most 
 $$O(t_{\RNL, \outer}(n) \cdot 2^{r_\mie(\widetilde{O}(t_\outer(n)))} \cdot t_\mie(\widetilde{O}(t_\outer(n))) \leq  \poly(n,T(n)^{1/m})\;.$$

Finally, we use \Cref{thm:smoothification} with $\mu = s/3$ to make the composite verifier smooth and rectangular. This step increases the soundness error by $s/3$ and thus the overall soundness error is $s$ as required. The PCP verifier becomes smooth and is $\tau$-rectangular and has $\tau$-ROP, with the $\shared$ and the $\aware$ parts of the randomness still the same. This conversion keeps the randomness and parity-check complexities the same.
The decision and query complexities remain the same up to constants. 
Finally, the running time of the verifier is
$t(n) = q \cdot \poly(t_{\RNL}(n)) + t_{\comp}(n) =\poly(n,T(n)^{1/m})$,
as asserted. 
\end{proof}
\begin{remark}
We remark that the randomness complexity of the composed verifier in \cref{thm:combinedPCP} is actually only $\log T(n) + O(m\log \log T(n))$ for $T(n) = \Omega(n^m)$. This improvement is obtained by the improved verifier running time $t_\outer(n)=q(n)\cdot \polylog T(n) + O(n) = T(n)^{1/m} \cdot \poly(\log T(n)) + O(n)$ mentioned in \cref{rem:CLW}. Note that when $T(n) = \Omega(n^m)$, $t_\outer(n) = T(n)^{1/m} \cdot \poly(\log T(n))$. Plugging this value of $t_\outer(n)$ in the above proof yields the above randomness complexity. As in the case of \cref{rem:CLW}, this improvement is not needed for our construction.
\end{remark}


\section*{Acknowledgements}
We thank Lijie Chen and Emanuele Viola for their helpful comments on \cref{sec:related}, and an anonymous reviewer for pointing out \cref{remark:BarakGoldreich}.

{\small 
\bibliographystyle{prahladhurl}
\bibliography{rectangular-bib.bib}
}

\appendix


\section{Soundness and smoothness of \Cref{alg:new_verifier}} \label{smoothification_appendix}

With rectangularity out of the way, we can describe the run of the new verifier given proof $\newpi$ a more succinct way:

\begin{algorithm}[\cref{alg:new_verifier}, simplified]\label{alg:new_verifier_simplified}
Given input $x$ and proof $\pi_\new$, the new verifier $\newV$ runs as follows:
\begin{enumerate}
    \item Sample $R \in \bool^r$.
    \item \label{step:consistency_check}For each $k \in \brackets{q}$, query $\newpi$ for $\pars{R,k}$ as well as its sampler-neighbors. 
    \item 
    	Feed the $q\cdot \agla$ bits queried from the proof $\pi_{\new}$ to a circuit that first checks consistency between every sampler-neighborhood. That is, it check that in each of the $q$ blocks of $\agla$ bits, all the $\agla$ bits are equal. If an inconsistency is spotted the circuit immediately rejects. Otherwise, feed the first bit in every block to the decision circuit of the original verifier $\oldV$ (along with the $p$ parity-checks on the randomness) and output its answer.
\end{enumerate}
\end{algorithm}

Indeed, \cref{alg:new_verifier} and \cref{alg:new_verifier_simplified} describe the same verifier precisely due to RNL: for any random coin sequence $R$ and query index $k$, \Cref{find_neighbors} of \cref{alg:new_verifier} indeed queries all neighbors of $\pars{R, k}$ in the sampler (and then checks their consistency). We now show that this verifier is sound and smooth.
\newcommand{\overbar}[1]{\mkern 1.5mu\overline{\mkern-1.5mu#1\mkern-1.5mu}\mkern 1.5mu}

\newcommand{\HighConsistency}{H}
\newcommand{\LowConsistency}{\overbar{\HighConsistency}}
\newcommand{\Consistent}{M}

\subsection{Soundness}\label{sec:smooth_soundness}
Let $\alpha \coloneqq \mu/2q$. Recall that $\newV$ denotes the new smooth verifier, and $\oldV$ denotes the original verifier. Fix an input $x \notin L$ and an alleged proof $\newpi$ for $\newV$. We will show that $\newV$ rejects $x$ and $\newpi$ with probability at least $1 - s - \mu$.

We let $i^{(k)}\pars{R}$ denote the location of the $k$-th query of $\oldV$ when sampling random coins $R$. Recall that locations in $\newpi$ are indexed by full configurations $(R,k)$, that can be partitioned into $m$ disjoint $\alpha$-samplers, where the $j$-th sampler connects all configurations $(R,k)$ such that $i^{(k)}\pars{R} = j$. We derive a proof $\oldpi$ for the original verifier $\oldV$ by assigning $\oldpi\pars{j}$ the majority value of $\newpi$ on the $j$-th sampler. Formally,
\begin{equation*}
    \oldpi\pars{j} \coloneqq \Maj_{R,k}\braces{ \newpi\pars{R,k} \;\big\vert\; i^{(k)}\pars{R} = j}.
\end{equation*}

We say that $j \in \brackets{m}$ has \defemph{consistency} $1-2\alpha$ if $\pars{1-2\alpha}$-fraction of the configurations $\pars{R,k}$ pointing to it satisfy $\newpi\pars{R,k} = \oldpi\pars{j}$.

Two events in the verifier's randomness $R$ will be of particular interest:
\begin{itemize}
    \item $\HighConsistency$: for all $k \in \brackets{q}$, $i^{(k)}\pars{R}$ has consistency at least $1-2\alpha$.
    \item $\Consistent$: for all $k \in \brackets{q}$, $\newpi\pars{R,k} = \oldpi\pars{i^{(k)}\pars{R}}$.
\end{itemize}
Let $\delta = \Pr\brackets{\LowConsistency}$. We show the following two claims.
\begin{claim}\label{eq:dh_soundness_1}
    \begin{equation*}
        \Pr\brackets{\newV^\newpi = 0 \;\big\vert\; \LowConsistency} \geq 1 - q\alpha.
    \end{equation*}
\end{claim}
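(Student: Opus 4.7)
The plan is to establish $\Pr\brackets{\newV^\newpi = 1,\, \LowConsistency} \leq q\alpha$ (the joint-probability form of the claim, which is what the subsequent soundness argument actually uses). For each $k \in \brackets{q}$ introduce the event $F_k$ that $i^{(k)}\pars{R}$ has consistency strictly less than $1-2\alpha$, so that $\LowConsistency = \bigcup_k F_k$. Whenever $\newV$ accepts, every one of its $q$ block-consistency checks must pass; hence
\begin{equation*}
    \Pr\brackets{\newV^\newpi = 1,\, \LowConsistency} \;\leq\; \sum_{k=1}^q \Pr_R\brackets{F_k,\, \text{block } k \text{ passes}} \;=\; q \cdot \Pr_{R,k}\brackets{F_k,\, \text{block } k \text{ passes}},
\end{equation*}
where in the last equality the pair $(R,k)$ is drawn uniformly from $\bool^r \times \brackets{q}$. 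The task reduces to showing the averaged bound $\Pr_{R,k}\brackets{F_k,\,\text{block }k\text{ passes}} \leq \alpha$.

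The core step is then a direct application of the sampler property. Fix any $j \in \brackets{m}$ with consistency $< 1 - 2\alpha$, let $V_j \coloneqq \braces{(R',k') : i^{(k')}(R') = j}$ be the set of all configurations that point to $j$, and let $S_j \subseteq V_j$ be the \emph{disagreers}, i.e.\ the configurations on which $\newpi$ differs from $\oldpi(j)$. Then $|S_j|/|V_j| > 2\alpha$ by definition of low consistency and $|S_j|/|V_j| \leq 1/2$ because $\oldpi(j)$ is the majority. The canonical $\alpha$-sampler $G_j$ on $V_j$ -- which the row and column agents in \cref{alg:new_verifier} jointly construct, with RNL guaranteeing that the two sides agree on vertex indexing via the zipped list $L$ -- then yields that for at least a $(1-\alpha)$-fraction of $v \in V_j$,
\begin{equation*}
    \tfrac{|\Gamma_{G_j}(v) \cap S_j|}{|\Gamma_{G_j}(v)|} \;\in\; \pars{\tfrac{|S_j|}{|V_j|} - \alpha,\; \tfrac{|S_j|}{|V_j|} + \alpha} \;\subseteq\; \pars{\alpha,\; \tfrac{1}{2} + \alpha},
\end{equation*}
so $\Gamma_{G_j}\pars{v}$ contains both an agreer and a disagreer, and the $k$-th consistency check fails. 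Conditioned on $i^{(k)}(R) = j$, the uniform pair $(R,k)$ is exactly uniform on $V_j$ (this is where it matters that we averaged over $k$ as well as $R$), so $\Pr_{R,k}\brackets{\text{block }k\text{ passes} \mid F_k} \leq \alpha$ and averaging over $j$ gives the desired $\alpha$ bound.

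The main subtle point is the distribution matching: with $k$ fixed the pair $(R,k)$ ranges only over the $k$-slice $V_j^{(k)} \subseteq V_j$, to which the sampler property does not directly apply. Averaging $k$ over $\brackets{q}$ restores uniformity on $V_j$, at the cost of a factor $q$ -- which is exactly the union-bound factor already present on the right-hand side of the displayed inequality above. With this arranged, RNL is invoked only to certify that the two-sided row/column sampler construction yields the \emph{same} graph $G_j$ (so that the sampler property even makes sense on full configurations), and the remainder is a clean sampler-property calculation.
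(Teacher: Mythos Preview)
Your core argument is correct and matches the paper's: both exploit the $\alpha$-sampler property on each low-consistency location $j$ to deduce that, for all but an $\alpha$-fraction of vertices in $V_j$, the sampler-neighborhood contains both values and hence the consistency check fails. The paper phrases this in terms of the average value of $\newpi$ on the neighborhood lying in $(\alpha,1-\alpha)$; your phrasing via the disagreer set $S_j$ is equivalent.

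Two remarks. First, you prove only the joint bound $\Pr[\newV^\newpi=1,\,\LowConsistency]\le q\alpha$, whereas the claim is the conditional bound $\Pr[\newV^\newpi=1\mid\LowConsistency]\le q\alpha$, and the paper's subsequent computation does use the conditional form (it multiplies by $\delta$ to get $\delta(1-q\alpha)$). Your own argument in fact yields the stronger conditional bound with one extra line: since $F_k\subseteq\LowConsistency$ we have $\Pr_{R,k}[F_k]\le\Pr_R[\LowConsistency]=\delta$, so your inequality $\Pr_{R,k}[\text{block }k\text{ passes}\mid F_k]\le\alpha$ gives $\Pr[\newV^\newpi=1,\,\LowConsistency]\le q\cdot\alpha\cdot\delta$, which is exactly the claim after dividing by $\delta$. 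You should carry this $\delta$ through rather than asserting the joint form suffices. Second, the paper organizes the same ideas slightly differently: it defines a single global set $E$ of ``error configurations'' (those in some low-consistency $V_j$ that miss the sampler guarantee) and bounds $\Pr_{R,k}[(R,k)\in E\mid\LowConsistency]\le\alpha$ via the counting observation that $\bigcup_{j\text{ low}}V_j\subseteq\{R\in\LowConsistency\}\times[q]$. Your per-$k$ decomposition via the events $F_k$ is an equivalent packaging. The RNL aside is not needed here; soundness is argued for the verifier of \cref{alg:new_verifier_simplified} as a black box.
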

\begin{claim}\label{eq:dh_soundness_2}
    \begin{equation*}
        \Pr\brackets{\newV^\newpi = 0 \;\big\vert\; \HighConsistency} \geq 1 - \frac{s}{1 - \delta} - 2 q \alpha.
    \end{equation*}
\end{claim}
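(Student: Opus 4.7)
The plan is to couple the acceptance of $\newV^\newpi$ to that of $\oldV^\oldpi$ under event $\HighConsistency$, and then invoke the soundness of $\oldV$ on the derived proof $\oldpi$. The key enabling fact is that under $\HighConsistency$, in every queried sampler at most a $2\alpha$-fraction of configurations disagree with $\oldpi$, so the $\alpha$-sampler property forces sampler-neighborhoods that lie entirely inside this minority to be extremely rare. This will allow us to argue that with high probability, every consistency block of $\newV$ reveals the ``correct'' value $\oldpi(i^{(k)}(R))$ to the outer decision circuit.

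Concretely, for $j \in \brackets{m}$ write $V_j := \braces{(R,k) : i^{(k)}(R) = j}$ and $T_j := \braces{v \in V_j : \newpi(v) \neq \oldpi(j)}$. I would define the ``good'' event $X_k$ as ``$\Gamma((R,k)) \cup \braces{(R,k)} \subseteq V_{i^{(k)}(R)} \setminus T_{i^{(k)}(R)}$'' and let $X := \bigcap_{k=1}^q X_k$. Under $X$, the $q$ bits $\newV$ feeds into the outer decision circuit are exactly $(\oldpi(i^{(k)}(R)))_{k=1}^q$, which is precisely what $\oldV$ reads when given oracle $\oldpi$. Since $\newV$ also rejects whenever any consistency check fails, we get
\[ \Pr[\newV^\newpi = 1 \wedge \HighConsistency] \leq \Pr[\oldV^\oldpi = 1] + \Pr[\neg X \wedge \text{all consistency checks pass} \wedge \HighConsistency], \]
and $\Pr[\oldV^\oldpi = 1] < s$ by the soundness of $\oldV$ on $x \notin L$.

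The hard part will be bounding the residual ``bad'' term. If all $q$ consistency checks pass yet $X_k$ fails, the common value of the $k$-th block must differ from $\oldpi(i^{(k)}(R))$, which forces $\Gamma((R,k)) \subseteq T_{i^{(k)}(R)}$. Under $\HighConsistency$ we have $\abs{T_j}/\abs{V_j} \leq 2\alpha$ for every queried $j$, so the $\alpha$-sampler property applied with $S := T_j$ implies that the set of $v \in V_j$ satisfying $\Gamma(v) \subseteq T_j$ constitutes at most an $\alpha$-fraction of $V_j$ (using $1 > 2\alpha + \alpha$ for small $\alpha$). The technical subtlety here is that the sampler bound is stated for uniform $v \in V_j$, whereas in $\newV$'s run $R$ is uniform and $k$ is a fixed index; I would resolve this by averaging additionally over uniform $k \in \brackets{q}$, since a uniform configuration $(R,k) \in \bool^r \times \brackets{q}$ conditional on landing in $V_j$ is indeed uniform on $V_j$. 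This gives $\sum_k \Pr_R\brackets{\Gamma((R,k)) \subseteq T_{i^{(k)}(R)} \wedge \HighConsistency} \leq q \alpha$, and a union bound yields $\Pr[\neg X \wedge \text{all checks pass} \wedge \HighConsistency] \leq q\alpha$.

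Putting it together, $\Pr[\newV^\newpi = 1 \wedge \HighConsistency] \leq s + q\alpha$, hence
\[ \Pr[\newV^\newpi = 0 \mid \HighConsistency] \geq 1 - \frac{s + q\alpha}{1-\delta} \geq 1 - \frac{s}{1-\delta} - 2q\alpha, \]
where the last inequality is the non-trivial regime $\delta \leq 1/2$ (outside of it, the stated bound is $\leq 0$ and hence vacuously true).
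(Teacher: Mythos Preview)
Your overall strategy is sound and your bound $\Pr[\newV^{\newpi}=1 \wedge H]\le s+q\alpha$ is correctly derived, but the approach differs from the paper's and your final arithmetic step contains a genuine error.

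\textbf{Comparison with the paper.} The paper does \emph{not} invoke the sampler property here. Instead it works with the weaker event $M=\{\forall k:\ \newpi(R,k)=\oldpi(i^{(k)}(R))\}$, i.e.\ only the center $(R,k)$ of each block agrees with $\oldpi$ (not the whole closed neighborhood as in your $X$). Under $H$ each queried location has consistency $\ge 1-2\alpha$, so $\Pr[M\mid H]\ge 1-2q\alpha$ follows straight from the definition of consistency, with no appeal to the sampler. The paper then observes that under $M$ either $\newV$ rejects for inconsistency or every block is constant and hence equal to $\oldpi(i^{(k)}(R))$, so $\newV$ mirrors $\oldV^{\oldpi}$. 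Combining $\Pr[M\mid H]\ge 1-2q\alpha$ with $\Pr[\oldV^{\oldpi}=0\mid H]\ge 1-\tfrac{s}{1-\delta}$ via a union bound on complements gives exactly the stated bound, for \emph{all} $\delta\in[0,1)$. Your route through $X$ and the sampler is also valid (and the averaging-over-$k$ trick to convert the sampler guarantee into a bound for fixed $k$ is nicely handled), but it yields the slightly different intermediate $1-\tfrac{s+q\alpha}{1-\delta}$.

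\textbf{The gap.} Your last line asserts that for $\delta>1/2$ the target $1-\tfrac{s}{1-\delta}-2q\alpha$ is $\le 0$ and hence vacuous. This is false: take $s=0.1$, $\delta=0.6$, $q\alpha$ small; the target is roughly $0.75>0$, while your bound $1-\tfrac{s+q\alpha}{1-\delta}$ is strictly smaller. So for $\delta>1/2$ you have not established the claim as stated. Two easy fixes: either (i) adopt the paper's event $M$, which dispenses with the sampler and directly yields the two separate terms $\tfrac{s}{1-\delta}$ and $2q\alpha$; or (ii) keep your argument but note that your stronger-looking bound $1-\tfrac{s+q\alpha}{1-\delta}$ already suffices for the downstream use, since $\delta(1-q\alpha)+(1-\delta)\bigl(1-\tfrac{s+q\alpha}{1-\delta}\bigr)=1-s-(1+\delta)q\alpha\ge 1-s-2q\alpha$, which is all that the combination with Claim~\ref{eq:dh_soundness_1} requires.
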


These two claims will give the required result, because
\begin{align*}
    \Pr\brackets{\newV^\newpi = 0} &= \Pr\brackets{\LowConsistency} \cdot  \Pr\brackets{\newV^\newpi = 0 \;\big\vert\; \LowConsistency} + \Pr\brackets{\HighConsistency} \cdot \Pr\brackets{\newV^\newpi = 0 \;\big\vert\; \HighConsistency} \\
    &\geq \delta\pars{1 - q\alpha} + \pars{1 - \delta} \pars{1 - \frac{s}{1 - \delta} - 2 q \alpha} \\
    &=\delta - \delta q\alpha + (1-\delta) - s - (1-\delta)2q\alpha\\
    &\geq 1-s - 2q\alpha = 1 - s - \mu.
\end{align*}

Now let's prove the claims.

\begin{proof}[Proof of \cref{eq:dh_soundness_1}]
    Consider a sampler with consistency at most $1-2\alpha$. The average value\footnote{Recall that $\newpi$ is over $\{0,1\}$ and we interpret these as real numbers.} of $\newpi$ on this sampler is between $2\alpha$ and $1 - 2\alpha$. Since this is a $\alpha$-sampler, it holds that for all but $\alpha$-fraction of configurations (named \defemph{error configurations}), the average value of $\newpi$ on the sampler-neighborhood of each configuration is between $\alpha$ and $1 - \alpha$. In particular, $\newpi$ assigns inconsistent values to the sampler-neighbors of each non-error configuration.
    
    Denoting the set of all error configurations by $E$, we see that sampling an $R \in \overbar{H}$ such that $\pars{R,k} \notin E$ for all $k$ leads the $\newV$ to reject in the consistency check, \cref{step:consistency_check} of \cref{alg:new_verifier_simplified}. We upper bound the probability of the complement event,
    \begin{equation*}
        \Pr_{R}\brackets{\exists k \in \brackets{q} \quad  \pars{R,k} \in E \vert \LowConsistency} \leq q \cdot \Pr_{R,k}\brackets{\pars{R,k} \in E \vert \LowConsistency} \leq q \cdot \alpha
    \end{equation*}
    and the claim follows.
\end{proof}

\begin{proof}[Proof of \cref{eq:dh_soundness_2}]
    To prove this claim, we will show the following lower bounds:
    \begin{align}
        \Pr\brackets{\Consistent \vert \HighConsistency} &\geq 1 - 2 q \alpha \label{eq:soundness_1},\\
        \Pr\brackets{\oldV^\oldpi = 0  \vert \HighConsistency} &\geq 1 -  \frac{s}{1 - \delta}.\label{eq:soundness_2}
    \end{align}
      Once those are in place, the claim follows by noticing that if $\Consistent$ occurs, then either the new verifier $\newV$ rejects for inconsistency or its queries to $\newpi$ give the same answers as the answers by $\oldV$ to $\oldpi$. Thus, conditioned on $\Consistent$, if $\oldV$ rejects $\oldpi$ on randomness $R$, then so does $\newV$ on $\newpi$. This gives%
    \begin{align*}
        \Pr\brackets{\newV^\newpi = 0 \vert \HighConsistency}
        &\geq
        \Pr\brackets{\Consistent \vert \HighConsistency} \cdot \Pr\brackets{\newV^\newpi = 0 \vert \HighConsistency \cap \Consistent}
        \\
        &\geq
        \Pr\brackets{\Consistent \vert \HighConsistency} \cdot \Pr\brackets{\oldV^\oldpi = 0 \vert \HighConsistency \cap \Consistent}\\
        &=
        \Pr\brackets{\Consistent \cap \oldV^\oldpi = 0 \vert \HighConsistency}
        \\&\geq
        1 - \frac{s}{1 - \delta} - 2 q \alpha.
    \end{align*}
   
    To see \cref{eq:soundness_1}, let $H_{k'}$ denote the event that $i\pars{R,k'}$ has consistency at least $1-2\alpha$ (thus, $H = \bigcap_{k'} H_{k'}$). Then,
    \begin{align*}
        \Pr_{R}\brackets{\overbar{\Consistent} \mid H} &= \Pr_{R} \brackets{\exists k \quad \newpi\pars{R,k} \neq \oldpi\pars{i^{(k)}\pars{R}} \;\Big\vert\; \bigcap_{k'} H_{k'}} \\
        &\leq \sum_{k=1}^{q} \Pr_{R}\brackets{\newpi\pars{R,k} \neq \oldpi\pars{i^{(k)}\pars{R}} \;\Big\vert\; \bigcap_{k'} H_{k'}} \\
        &\leq \sum_{k=1}^{q} \Pr_{R}\brackets{\newpi\pars{R,k} \neq \oldpi\pars{i^{(k)}\pars{R}} \;\Big\vert\; H_k} \\
        &\leq q \cdot 2 \alpha.
    \end{align*}

    For \cref{eq:soundness_2}, recall that the soundness of the original verifier to gives $\Pr\brackets{\oldV^\oldpi = 0} \geq 1 - s$. Then, using elementary probability, we have
    \begin{equation*}
        \Pr\brackets{\oldV^\oldpi = 0 \vert H} \geq \frac{ \Pr\brackets{\oldV^\oldpi = 0} - \Pr\brackets{\LowConsistency} }{\Pr\brackets{\HighConsistency}} \geq \frac{1-s - \delta}{1-\delta} = 1- \frac{s}{1-\delta}.\qedhere
    \end{equation*}
\end{proof}

\subsection{Smoothness}\label{sec:smooth_smoothness}
\newcommand{\closedNhd}[1]{\overbar{\Gamma}\pars{#1}}
For each $R \in \bool^r$ and $k\in \brackets{q}$, let $\closedNhd{R,k}$ denote the closed sampler-neighborhood of $\pars{R,k}$, i.e. the union of the sampler-neighborhood of $\pars{R,k}$ and the singleton containing it $\braces{\pars{R,k}}$. Recall that the sampler is a regular graph with constant degree $\agla-1$, and thus $\agla = \abs{\closedNhd{R,k}}$ for each $R,k$.

Now recall how the new verifier determines its queries (see \cref{alg:new_verifier_simplified}): sample $R\in \bool^r$, and for each $k \in \brackets{q}$ query all the $\agla$ locations in  $\closedNhd{R,k}$. We must now show why this procedure is equally likely to query each location in $\newpi$.

Fix a location $\pars{R^\prime, k^\prime}$ in the new proof. Notice that
\begin{equation}
    \Pr_{\substack{R\in \bool^r\\ k \in \brackets{q}}}\brackets{ \pars{R^\prime,k^\prime} \in \closedNhd{R,k}} = \frac{\agla}{2^r\cdot q},
\end{equation}
where both $R$ and $k$ are distributed uniformly and random.
Recall that the new verifier makes $q\cdot \agla$ queries to the proof, where the $(k,j)$-th query, for $k\in [q]$ and $j\in[\agla]$, is $\closedNhd{R,k}[j]$. We thus get
\begin{equation*}
    \Pr_{\substack{R\in \bool^r\\ k \in \brackets{q}, j\in [\agla]}}\brackets{ \pars{R^\prime,k^\prime} = \closedNhd{R,k}[j]} = \frac{1}{2^r\cdot q}
\end{equation*}
showing that each location $\pars{R^\prime, k^\prime}$ is equally likely to be queried by the new verifier.


\section{Representing an affine function as a low-rank matrix}

\begin{claim} \label{producing_randomness_matrices}
    There is a procedure with the following properties:
    \begin{itemize}
        \item Input: An integer $m$, Boolean vectors $u,v \in \F_2^m$ and a bit $b\in \F_2$.
        \item Output: Two matrices, $A \in (\F_2)^{2^m \times 3}$ and $B\in (\F_2) ^{3 \times 2^m}$, such that $(A\cdot B)_{x,y} = \iprod{x,u} + \iprod{y,v} + b$.
        \item Runtime $\widetilde{O}(2^{m})$.
    \end{itemize}
\end{claim}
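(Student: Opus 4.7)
The plan is to give an explicit decomposition exploiting the fact that $\langle x, u\rangle + \langle y, v\rangle + b$ is a sum of three terms, each of which depends on $x$ alone, $y$ alone, or neither. Each such term gives a rank-$1$ contribution, so the whole affine function is represented by the sum of three outer products.

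Concretely, I would define the columns of $A$ and rows of $B$ as follows. Let $A$ have columns $a^{(1)}, a^{(2)}, a^{(3)} \in \F_2^{2^m}$ and $B$ have rows $b^{(1)}, b^{(2)}, b^{(3)} \in \F_2^{2^m}$, with
\begin{equation*}
a^{(1)}_x = \iprod{x,u},\quad b^{(1)}_y = 1,\quad a^{(2)}_x = 1,\quad b^{(2)}_y = \iprod{y,v},\quad a^{(3)}_x = 1,\quad b^{(3)}_y = b.
\end{equation*}
Then for every $x,y\in\F_2^m$,
\begin{equation*}
(A\cdot B)_{x,y} = a^{(1)}_x b^{(1)}_y + a^{(2)}_x b^{(2)}_y + a^{(3)}_x b^{(3)}_y = \iprod{x,u} + \iprod{y,v} + b,
\end{equation*}
as required.

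It remains to argue the runtime. The two ``constant'' columns/rows of $A$ and $B$ (those filled with $1$ or with $b$) are produced in $O(2^m)$ time. For the non-trivial column $a^{(1)}$ and row $b^{(2)}$, I would enumerate all $x \in \F_2^m$ in Gray-code order: successive values of $x$ differ in exactly one coordinate $i$, so the inner product $\iprod{x,u}$ updates by XOR-ing with $u_i$, giving $O(1)$ time per entry and $O(2^m)$ in total. (Alternatively, one may simply compute each $\iprod{x,u}$ directly in $O(m)$ time, for a total of $O(m\cdot 2^m) = \widetilde{O}(2^m)$, which already suffices.) The same applies to $b^{(2)}$. Summing these costs yields the desired $\widetilde{O}(2^m)$ bound.

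There is no real obstacle here: the only thing to check is that the three natural rank-$1$ pieces sum correctly over $\F_2$ (trivial) and that the enumeration of all $2^m$ inner products can be done without an extra $m$ factor if one wishes to avoid the $\widetilde{O}$ notation. Since the statement allows $\widetilde{O}(2^m)$, the naive construction already works.
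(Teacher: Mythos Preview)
Your proof is correct and essentially identical to the paper's: both use the obvious rank-$3$ decomposition into the three terms $\iprod{x,u}$, $\iprod{y,v}$, and $b$, differing only in the cosmetic choice of whether the constant $b$ sits in $A$'s third column or $B$'s third row. The paper simply uses the naive $O(m\cdot 2^m)$ enumeration you mention as sufficient.
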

\begin{proof}
    We compute $A$ column-by-column.
    \begin{itemize}
        \item The first column is an enumeration of $\iprod{x,u}$, for all $x \in \F_2^{m}$.
        \item The second column is the all-ones vector, denoted by $\Vec{1}$.
        \item The third column is $b\Vec{1}$.
    \end{itemize}
    Next, we compute $B$ row-by-row.
    \begin{itemize}
        \item The first row is $\Vec{1}$.
        \item The second row is an enumeration of $\iprod{y,v}$, for all $y \in \F_2^{m}$.
        \item The third row is $\Vec{1}$
    \end{itemize}
    It is easy to verify that $(A\cdot B)_{x,y} \equiv \iprod{x,u} + \iprod{y,v} + b$, and that the runtime is $O(2^{m} \cdot m) = \widetilde{O}({2^{m}})$.
\end{proof}

\section{The tests and queries of \mytqcite{Section~8.2.1}{BenSassonGHSV2006}} 
\label{appendix:bghsv_query_details}
We list the tests performed by the verifier of \cite[Section 8.2.1]{BenSassonGHSV2006}. Since we only reason about RNL of this verifier, describing the \emph{query pattern} of each test suffices. The reader is referred to \cite[Section 8]{BenSassonGHSV2006} for a complete description (i.e., query pattern and the decision predicate).

	\begin{enumerate}
	\item Let $\lambda \leq \min\{\log n / c, s^3/m^{cm}\}$.

	Construct set $\bset_\lambda\subseteq \braces{1} \times \F^{m-1}$, a $\lambda$-biased set of size at most $O\left(\frac{\log (|\F|^m)}{\lambda}\right)^2$ (see~\cite{AlonGHP1992}).
	\item  Sample a random string $R$ of length $\log(|\bset_\lambda| \cdot |\F|^{m-1}) = (m-1)\log(|\F|) +  O(\log(1/\lambda)) + O(\log\log(|\F|^m))$. Based on this random string, issue queries for the following tests:
	 \begin{enumerate}
	 	\item Robust Edge-Consistency Test$^\Pi(R)$: Use random string $R$ to determine a random $1^{st}$ axis-parallel line in $\F^m$ of the form $\calL = \{(t, a_2, \cdots ,a_{m})\}_{t\in \F}$ for some fixed $a_2, \ldots, a_m \in \F$. Query the oracle $\Pi$ on all points along the line $\calL$ and $S(\calL)$.	
	  \item Robust Zero Propagation Test$^\Pi(R)$: Use random string $R$ to determine a random $1^{st}$ axis-parallel line in $\F^m$ of the form $\calL = \{(t, a_2, \cdots ,a_{m})\}_{t\in \F}$ for some fixed $a_2, \ldots, a_m \in \F$. Query the oracle $\Pi$ on all points along the line $\calL$ and $S(\calL)$.	
	  \item Robust Identity Test$^\Pi(R)$: Use random string $R$ to determine a random $1^{st}$ axis-parallel line in $\F^m$ of the form $\calL = \{(t, a_2, \cdots ,a_{m})\}_{t\in \F}$ for some fixed $a_2, \ldots, a_m \in \F$. Query the oracle $\Pi$ on all points along the line $\calL$.
	  \item Robust Low-Degree Test$^\Pi(R)$: Use random string $R$ to determine a random canonical line $\calL$ in $\F^m$ using the $\lambda$-biased set $\bset_\lambda$. Query the oracle $\Pi$ on all points along the line $\calL$.
	 \end{enumerate}
	\end{enumerate}

\end{document}